\newtheorem{Lem}{Lemma}
\newtheorem{theorem}{Theorem}
\newtheorem{Cor}{Corollary}
\newcommand{\into}[1]{\mathit{int}(#1)}
\newcommand{\exto}[1]{\mathit{ext}(#1)}
\newcommand{\intc}[1]{\overline{\mathit{int}}(#1)}
\newcommand{\extc}[1]{\overline{\mathit{ext}}(#1)}
\title{Minimum Cycle Basis and All-Pairs Min Cut of a Planar Graph in Subquadratic Time}
\author{Christian Wulff-Nilsen
        \footnote{Department of Computer Science,
                  University of Copenhagen,
                  \texttt{koolooz@diku.dk},
                  \texttt{http://www.diku.dk/hjemmesider/ansatte/koolooz/}}}
\begin{document}

\maketitle
\begin{abstract}
A minimum cycle basis of a weighted undirected graph $G$ is a basis of the cycle space of $G$ such that the total
weight of the cycles in this basis is minimized. If $G$ is a planar graph with non-negative edge weights, such a basis
can be found in $O(n^2)$ time and space, where $n$ is the size of $G$. We show that this is optimal if an explicit
representation of the basis is required. We then present an $O(n^{3/2}\log n)$ time and $O(n^{3/2})$ space
algorithm that computes a minimum cycle basis \emph{implicitly}. From this result, we obtain an output-sensitive
algorithm that explicitly computes a minimum cycle basis in $O(n^{3/2}\log n + C)$ time and $O(n^{3/2} + C)$ space,
where $C$ is the total size (number of edges and vertices) of the cycles in the basis. These bounds reduce to
$O(n^{3/2}\log n)$ and $O(n^{3/2})$, respectively, when $G$ is unweighted. We get similar results for the all-pairs min cut
problem since it is dual equivalent to the minimum cycle basis problem for planar graphs. We also obtain
$O(n^{3/2}\log n)$ time and $O(n^{3/2})$ space algorithms for finding, respectively, the weight vector and a Gomory-Hu tree
of $G$. The previous best time and space bound for these two problems was quadratic. From our Gomory-Hu tree algorithm,
we obtain the following result: with $O(n^{3/2}\log n)$ time and $O(n^{3/2})$ space for preprocessing, the weight of a min
cut between any two given vertices of $G$ can be reported in constant time. Previously, such an oracle required quadratic
time and space for preprocessing. The oracle can also be extended to report the actual cut in time proportional to its size.
\end{abstract}

\section{Introduction}
A cycle basis of a graph is a set of cycles that gives a compact representation of the set of all the cycles in the graph.
Such a representation is not only
of theoretical interest but has also found practical use in a number of fields. One of the earliest applications is in
electrical circuit theory and dates back to the work of Kirchhoff~\cite{Kirchhoff} in $1847$. Knuth~\cite{Knuth} used
them in the analysis of algorithms. Cycle bases also play an important role in chemical and biological pathways, periodic
scheduling, and graph drawing~\cite{MCBApplications}. See
also~\cite{Practical1, Practical2, Practical3, Practical4, Practical5, Practical6, Practical7}.

In many of the above applications, it is desirable to have a cycle basis of minimum total length or, more generally, of
minimum total weight if edges of the graph are assigned weights. The minimum cycle basis problem, formally defined
below, is the problem of finding such a cycle basis. For a survey of applications and the
history of this problem, see~\cite{Horton}.

Let us define cycle bases and minimum cycle bases. Let $G(V,E)$ be an undirected graph.
To each simple cycle $C$ in $G$, we associate a vector $x$ indexed on $E$, where $x_e = 1$ if $e$ belongs to $C$
and $x_e = 0$ otherwise. A set of simple cycles of $G$ is said to be \emph{independent} if their associated vectors are
independent over $\mathit{GF}(2)$. The vector space over this field generated by these vectors is the
\emph{cycle space} of $G$ and a maximal independent set of simple cycles of $G$ is called a \emph{cycle basis} of $G$. Any
cycle basis of $G$ consists of $m - n + c$ cycles, where $m$ is the number of edges, $n$ the number of vertices, and $c$
is the number of connected components of $G$~\cite{Vazirani}.

Assume that the edges of $G$ have real weights. Then a \emph{minimum cycle basis} (MCB) of $G$ is a cycle
basis such that the sum of weights of edges of the cycles in this basis is minimized. The MCB problem (MCBP) is the problem
of finding an MCB of $G$.

The MCBP is NP-hard if negative weights are allowed~\cite{MCBMinCutPlanar}.
The first polynomial time algorithm for graphs with non-negative edge weights was due to Horton~\cite{Horton}. His idea
was to first compute a polynomial size set of cycles guaranteed to contain an MCB. In a subsequent step, such a basis
is then extracted from this set using a greedy algorithm. Running time is $O(m^3n)$.
This was improved in a sequence of papers~\cite{MCBAlgo1, MCBAlgo2, MCBAlgo3, MCBAlgo4, MCBAlgo5, MCBGeneralPlanar} to
$O(m^{\omega})$, where $\omega$ is the exponent of matrix multiplication.

For planar graphs with non-negative edge weights, an $O(n^2\log n)$ algorithm
was presented in~\cite{MCBMinCutPlanar}. This was recently improved to $O(n^2)$~\cite{MCBGeneralPlanar}.

The quadratic time bound also holds for the following problem for planar graphs since it was shown to be dual equivalent
to the MCBP for such graphs~\cite{MCBMinCutPlanar} (meaning that one problem can be transformed into the other in linear time):
find a minimal collection of cuts such that for any pair of vertices $s$ and $t$,
this collection contains a minimum $s$-$t$ cut. We refer to this problem as the \emph{all-pairs min cut problem} (APMCP).

We prove that quadratic running time for the two problems is optimal by presenting a family of graphs of arbitrarily large
size for which the total length (number of edges) of all cycles in any MCB is $\Theta(n^2)$.

We then present an algorithm with $O(n^{3/2}\log n)$ running time and $O(n^{3/2})$ space requirement that computes
an MCB of a planar graph \emph{implicitly}. From this result, we get an output-sensitive
algorithm with $O(n^{3/2}\log n + C)$ time and $O(n^{3/2} + C)$ space requirement, where $C$ is the total size of cycles
in the MCB that the algorithm returns. For unweighted planar graphs, these bounds simplify to $O(n^{3/2}\log n)$ and
$O(n^{3/2})$, respectively.
Since the MCBP and the APMCP are dual equivalent for planar graphs, we get similar bounds for the latter problem.

The \emph{weight vector} of a weighted graph $G$ is a vector containing the weights of cycles of an MCB in order
of non-decreasing weight. Finding such a vector has applications in chemistry and biology~\cite{WeightVector}. From
our implicit representation of an MCB, we obtain an $O(n^{3/2}\log n)$ time and $O(n^{3/2})$ space algorithm for finding the
weight vector of a planar graph. The best previous bound was $O(n^2)$, obtained by applying the algorithm in~\cite{MCBGeneralPlanar}.

A \emph{Gomory-Hu tree}, introduced by Gomory and Hu in $1961$~\cite{GomoryHu}, is a compact representation of minimum weight cuts
between all pairs of vertices of a graph. Formally, a Gomory-Hu tree of a weighted connected graph $G$ is a tree $T$ with
weighted edges spanning the vertices of $G$ such that:
\begin{enumerate}
\item for any pair of vertices $s$ and $t$, the weight of the minimum $s$-$t$ cut is the same in $G$ and in $T$, and
\item for each edge $e$ in $T$, the weight of $e$ equals the weight of the cut in $G$, defined by the sets of vertices
      corresponding to the two connected components in $T\setminus\{e\}$.
\end{enumerate}

Such a tree $T$ is very useful for finding a minimum $s$-$t$ cut in $G$ since we only need to consider the cuts of $G$ encoded
by the edges on the simple path between $s$ and $t$ in $T$. Gomory-Hu trees have also been applied to solve the minimum
$k$-cut problem~\cite{MinKCut}.

For planar graphs, quadratic time and space is the best known bound for
finding such a tree. The bound can easily be obtained with the algorithm in~\cite{MCBGeneralPlanar}. From our MCB
algorithm, we obtain an algorithm that constructs a Gomory-Hu tree in only $O(n^{3/2}\log n)$ time and $O(n^{3/2})$ space.

An important corollary of the latter result is that with $O(n^{3/2}\log n)$ time and $O(n^{3/2})$ space for preprocessing, a
query for the weight of a min cut (or max flow) between two given vertices of a planar undirected graph with non-negative
edge weights can be answered in constant time.
Previously, quadratic preprocessing time and space was required to obtain such an oracle. The actual cut can be reported
in time proportional to its size.

The organization of the paper is as follows. In Section~\ref{sec:DefsNotRes}, we give some definitions and
notation and state some basic results. We give the quadratic lower bound for an explicit representation of an MCB of a planar graph
in Section~\ref{sec:LowerBound}. In Section~\ref{sec:GreedyAlgo}, we mention the greedy algorithm which has been applied
in previous papers to find an MCB. Based on it, we present our algorithm in Section~\ref{sec:DivideConquer} and bound its
time and space requirements. The corollaries
of our result are presented in Section~\ref{sec:Corollaries}. In order for our ideas to work, we need shortest paths to
be unique. We show how to ensure this in Section~\ref{sec:LexShort}. Finally, we give some concluding remarks in
Section~\ref{sec:ConclRem}.

\section{Definitions, Notation, and Basic Results}\label{sec:DefsNotRes}
In the following, $G = (V,E)$ denotes an $n$-vertex plane, straight-line embedded, undirected graph. This embedding partitions
the plane into maximal open connected sets which we refer to as the \emph{elementary faces} (of $G$). Exactly one of the
elementary faces is unbounded and we call it the \emph{external elementary face} (of $G$). All other elementary faces are called
\emph{internal}.

A Jordan curve $\mathcal J$ partitions the plane into an open bounded set and an open unbounded set. We denote them by
$\into{\mathcal J}$ and $\exto{\mathcal J}$, respectively. We refer to the closure of these sets as $\intc{\mathcal J}$
and $\extc{\mathcal J}$, respectively.

We say that a pair of elementary faces of $G$ are \emph{separated} by a simple cycle $C$ in $G$ if one face
is contained in $\intc{C}$ and the other face is contained in $\extc{C}$.

A set of simple cycles of $G$ is called \emph{nested} if, for any two distinct cycles $C$ and $C'$ in that set, either
$\into{C}\subset\into{C'}$, $\into{C'}\subset\into{C}$, or $\into{C}\subseteq\exto{C'}$. A simple cycle $C$ is said to
\emph{cross} another simple cycle $C'$ if $\{C,C'\}$ is not nested.

For cycles $C$ and $C'$ in a nested set $\mathcal B$, we say that $C$ is a \emph{child} of $C'$ and $C'$ is the \emph{parent}
of $C$ (w.r.t.\ $\mathcal B$) if $\into{C}\subset\into{C'}$. We also define ancestors and descendants in the obvious way. We can
represent these relationships in a forest where each tree vertex corresponds to a cycle of $\mathcal B$.

For any cycle $C\in\mathcal B$, we define \emph{internal region} $R(C,\mathcal B)$ as the subset
$\intc{C}\setminus(\cup_{i = 1,\ldots,k}\into{C_i})$ of the plane, where $C_1,\ldots,C_k$ are the children (if any) of $C$,
see Figure~\ref{fig:Regions}(a).
\begin{figure}
\centerline{\scalebox{0.6}{\input{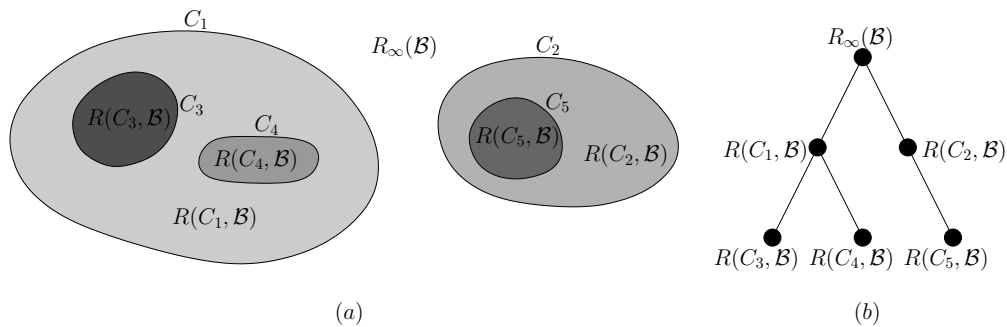}}}
\caption{(a): A nested set $\mathcal B$ of five cycles $C_1,C_2,C_3,C_4,C_5$ defining five internal regions and an external
         region $R_{\infty}(\mathcal B)$ (white). (b): The region tree $\mathcal T(\mathcal B)$ of $\mathcal B$.}
\label{fig:Regions}
\end{figure}

The \emph{external region} $R_\infty(\mathcal B)$ is defined as the set
$\mathbb R^2\setminus(\cup_{i = 1,\ldots,k}\into{C_i})$, where $C_1,\ldots,C_k$ are the cycles associated with roots of
trees in the forest defined above. Collectively, we refer to the internal regions and the external region as \emph{regions}.

With $C_1,\ldots,C_k$ defined as above for a region $R$ (internal or external), we refer to the internal
regions $R(C_i,\mathcal B)$ as the \emph{children} of $R$ and we call $R$ the \emph{parent} of these regions.
Again, we can define ancestors and descendants in the obvious way. Note that the external region is the ancestor of all other
regions. We can thus represent the relationships in a tree where each vertex corresponds to a region. We call it the
\emph{region tree} of $\mathcal B$ and denote it by $\mathcal T(\mathcal B)$, see Figure~\ref{fig:Regions}(b).

Note that for two cycles $C$ and $C'$ in $\mathcal B$, $C$ is a child of $C'$ if and only if $R(C,\mathcal B)$ is a child
of $R(C',\mathcal B)$. Hence, the region tree $\mathcal T(\mathcal B)$ also describes the parent/child relationships between
cycles of $\mathcal B$.

The elementary faces of $G$ belonging to a region $R$ are the \emph{elementary faces} of $R$.
For each child $C_i$ of $R$, $\intc{C_i}$ is called a \emph{non-elementary face} of $R$.
If $R$ is an internal region $R(C,\mathcal B)$, the \emph{external face} of $R$ is the subset $\extc{C}$ of the plane and
we classify it as a non-elementary face of $R$. Collectively, we refer to the elementary and non-elementary faces of $R$ as
its \emph{faces}.

A cycle $C$ in $G$ is said to be \emph{isometric} if for any two vertices $u,v\in C$, there is a shortest path between $u$
and $v$ contained in $C$. A set of cycles is said to be isometric if all cycles in the set are isometric.

The \emph{dual} $G^\ast$ of $G$ is the multigraph having a vertex for each elementary face of $G$ and having an edge $e^\ast$
between two dual vertices for every edge $e$ of $G$ shared by the elementary faces corresponding to the two dual
vertices. The weight of $e^\ast$ in $G^\ast$ is equal to the weight of $e$ in $G$. We identify elementary faces of $G$ with
vertices of $G^\ast$ and since there is a one-to-one correspondence
between edges of $G$ and edges of $G^\ast$, we identify an edge of $G$ with the corresponding edge in $G^\ast$.

Assume in the following that $G$ is connected.
Given a vertex $u\in V$, we let $T(u)$ denote a shortest path tree in $G$ with source $u$. The \emph{dual} of $T(u)$
is the subgraph of $G^\ast$ defined by the edges not in $T(u)$. It is well-known that this subgraph is a spanning tree
in $G^\ast$ and we denote it by $\tilde{T}(u)$. The following lemma will prove useful.
\begin{Lem}\label{Lem:SubtreeDual}
Assume that for any two vertices in $G$, there is a unique shortest path between them in $G$.
Let $C$ be an isometric cycle in $G$ and let $u\in V$. If $u\in\extc{C}$ resp.\ $u\in\intc{C}$ then the elementary
faces of $G$ in $\intc{C}$ resp.\ in $\extc{C}$ are spanned by a subtree of $\tilde{T}(u)$. If $u\in\extc{C}\cap\intc{C}$,
i.e., $u\in C$, then these two subtrees are obtained by removing the single edge of $\tilde{T}(u)$ having one end vertex in
$\intc{C}$ and one end vertex in $\extc{C}$.
\end{Lem}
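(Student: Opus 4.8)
The plan is to analyze what the dual tree $\tilde{T}(u)$ looks like with respect to the Jordan curve $C$. Recall that $\tilde{T}(u)$ consists exactly of the edges of $G^\ast$ dual to edges of $G$ not in the shortest path tree $T(u)$. An edge $e^\ast$ of $\tilde{T}(u)$ crosses $C$ (in the plane) precisely when its primal edge $e$ lies on $C$. So the key quantity is: how many edges of $C$ are \emph{not} in $T(u)$? I would first treat the case $u \in \exto{C}$ (strictly outside).

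First I would show that $T(u)$ contains all of $C$ except exactly one edge. Since $C$ is isometric, for any two vertices $x,y \in C$ there is a shortest path between them lying on $C$; combined with uniqueness of shortest paths, this forces the shortest path in $G$ between any $x,y\in C$ to be a subpath of $C$. Now consider the vertex $w\in C$ whose tree path from $u$ in $T(u)$ is... more carefully: look at the restriction of $T(u)$ to the vertices of $C$. For each $x \in C$, the unique shortest $u$–$x$ path enters $\intc{C}$'s boundary $C$ at some first vertex and then, by the isometric/uniqueness argument, runs along $C$ to $x$. Pick the vertex $p \in C$ closest to $u$ (i.e. minimizing $\mathrm{dist}(u,p)$ over $x\in C$); then for every $x\in C$ the shortest $u$–$x$ path is the shortest $u$–$p$ path followed by the $C$-arc from $p$ to $x$ that stays on $C$. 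Among the two arcs of $C$ from $p$, exactly one edge of $C$ is the "far" edge not used by any of these paths — this is the single edge $e_0$ of $C$ not in $T(u)$. (A short argument rules out $0$ such edges: that would make $C$ plus a path a subgraph of the tree $T(u)$, impossible; and more than one: if two edges of $C$ were missing from $T(u)$, $C$ would split into two arcs each lying in $T(u)$, but then the two endpoints of one missing edge would have their tree path not along $C$, contradicting the isometric argument.)

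Hence exactly one edge of $C$ is missing from $T(u)$, so exactly one edge $e_0^\ast$ of $\tilde{T}(u)$ crosses the curve $C$. Removing $e_0^\ast$ from the spanning tree $\tilde{T}(u)$ of $G^\ast$ splits it into two subtrees; since $e_0^\ast$ is the only dual edge crossing $C$, one subtree's vertex set is entirely among the elementary faces in $\into{C}$ and the other entirely among those in $\exto{C}$ — and since these two face-sets partition all elementary faces and each subtree spans a connected set, the subtree on the $\into{C}$ side spans \emph{all} elementary faces inside $C$. When $u\in\into{C}$ the same argument applies with the roles of inside and outside swapped. Finally, when $u\in C$: the vertex $p$ above equals $u$ itself, $T(u)$ still omits exactly one edge $e_0$ of $C$, and the conclusion is that deleting the single crossing edge $e_0^\ast$ of $\tilde T(u)$ yields the two desired subtrees spanning the inside faces and the outside faces respectively.

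The main obstacle is the combinatorial claim that \emph{exactly one} edge of an isometric cycle is absent from a shortest path tree rooted outside (or on) it. The "at most one" direction is where isometry and uniqueness are essential and must be argued carefully via the structure of shortest $u$–$x$ paths for $x\in C$; the "at least one" direction is the easy observation that a cycle cannot be contained in a tree. Everything after that is routine planarity: a single dual edge crossing a Jordan curve, and deletion of a tree edge producing exactly two subtrees whose vertex sets are forced by which side of $C$ they lie on.
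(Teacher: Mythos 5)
Your argument rests on the claim that exactly one edge of $C$ is absent from $T(u)$ when $u\in\exto{C}$, deduced from the assertion that every shortest $u$--$x$ path with $x\in C$ passes through the unique vertex $p\in C$ nearest to $u$. That assertion is false. Take $C$ to be a quadrilateral $abcda$ with $w(ab)=1$, $w(bc)=1.1$, $w(cd)=0.9$, $w(da)=0.8$ (so $C$ is isometric and has unique shortest paths among its vertices), and put $u$ strictly outside with exactly two incident edges, $w(ua)=10$, $w(uc)=10.5$. Then $p=a$, yet the shortest $u$--$c$ path is the direct edge $uc$ (weight $10.5 < 10 + 1.7$), not through $a$; the shortest-path tree is $T(u)=\{ua,uc,ab,ad\}$, and \emph{both} $bc$ and $cd$ are missing from it. So two dual edges of $\tilde T(u)$ cross $C$, and removing them leaves three components, not two. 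Your fallback argument for ``at most one missing edge'' has the same underlying error: the path in the tree $T(u)$ between two non-root vertices of $C$ is not in general a shortest path between them (here the $T(u)$-path from $b$ to $c$ is $b$--$a$--$u$--$c$), so the isometry and uniqueness hypotheses say nothing about it and no contradiction is produced.

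The example also reveals why the lemma's statement is one-sided when $u\notin C$: the interior faces do form a single subtree of $\tilde T(u)$, but the exterior faces may be split over several components once the crossing dual edges are deleted. Any proof that tries to split $\tilde T(u)$ into exactly two pieces by removing one crossing edge is therefore doomed for $u\in\exto{C}$. A working argument must look only at the side of $C$ not containing $u$: using isometry and uniqueness, one shows that the edges of $T(u)$ lying strictly inside $C$ are exactly the parent edges of the vertices strictly inside $C$ (in particular, the $T(u)$-parent of a vertex on $C$ is never strictly inside $C$), and an Euler-formula count then forces the dual edges that lie strictly inside $C$ and are not in $T(u)$ to span the interior faces as a tree; this is essentially the route the paper takes. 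Your analysis of the case $u\in C$ is correct --- there every shortest path from $u$ to a vertex of $C$ lies on $C$, exactly one edge of $C$ is missing, and deleting the single crossing dual edge gives the two subtrees --- but that does not repair the main case.
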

\begin{proof}
Suppose that $u\in\extc{C}$, see Figure~\ref{fig:Defs}.
\begin{figure}
\centerline{\scalebox{0.6}{\input{SubtreeDual.pstex_t}}}
\caption{If $u\in\extc{C}$ then the subgraph of $\tilde{T}(u)$ in $\intc{C}$ is a tree.}
\label{fig:Defs}
\end{figure}
The subgraph of shortest path tree
$T(u)$ contained in $\intc{C}$ is a forest. Since $C$ is isometric and since shortest paths are unique, each tree in this
forest contains exactly one vertex of $C$. This implies that the edges of $G$ belonging to $\intc{C}$ and not
to this forest define a connected component in the dual of $G$. Since all these edges belong to $\tilde{T}(u)$, it follows
that the elementary faces of $G$ in $\intc{C}$ are spanned by a subtree of $\tilde{T}(u)$, as desired.

A similar argument shows that if $u\in\intc{C}$ then the elementary faces of $G$ in $\extc{C}$ are spanned by
a subtree of $\tilde{T}(u)$.

Finally, assume that $u\in C$. There is at least one edge in $\tilde{T}(u)$ with one end vertex in $\intc{C}$ and one end
vertex in $\extc{C}$ since otherwise, $\tilde{T}(u)$ would be disconnected. There cannot be more than one such edge since
that would contradict the first part of the lemma. This shows the second part.
\end{proof}

A \emph{Horton cycle} of $G$ is a cycle obtained by adding a single edge $e$ to a shortest path tree in $G$ rooted at some
vertex $r$. We denote this cycle by $C(r,e)$. For a subset $V'$ of $V$, we let $\mathcal H(V')$ denote the set of Horton
cycles of $G$ obtained from shortest path trees rooted at vertices of $V'$.

For any graph $H$, we let $V_H$ and $E_H$ denote its vertex and edge set, respectively. If $w:E\rightarrow\mathbb R$ is a
weight function on the edges of $G$, we say that a subgraph $H$ of $G$ has \emph{weight} $W\in\mathbb R$ if
$\sum_{e\in E_H} w(e) = W$.

\section{A Tight Lower Bound}\label{sec:LowerBound}
In this section, we show that there are planar graphs of arbitrarily large size for which the total length of cycles in
any MCB is quadratic. This implies that the algorithm in~\cite{MCBGeneralPlanar} is optimal since it runs in
$O(n^2)$ time.

The instance $G_n$ containing $n$ vertices is defined as follows. Let $v_1,\ldots,v_n$ be the vertices of $G_n$.
For $i = 1,\ldots,n-1$, there is an edge $e_i = (v_i,v_{i+1})$ of weight $0$. For $i = 1,\ldots,n-2$, there is
an edge $e_i' = (v_1,v_{i+2})$ of weight $1$.

Since $G_n$ has $m = 2n - 3$ edges, any MCB of $G_n$ consists of $m - n + 1 = n - 2$ cycles. In such a basis, every
cycle must contain at least one of the edges $e_i'$, $i = 1,\ldots,n-2$. Hence, the cycles in any MCB of $G_n$ have
total weight at least $n-2$.

For $i = 1,\ldots,n-2$, let $C_i$ be the cycle containing edges $e_1,\ldots,e_{i+1},e_i'$ in that order. It is easy to
see that the set of these cycles is a cycle basis of $G$. Furthermore, their total weight is $n-2$ so by
the above, they must constitute an MCB of $G_n$. In fact, it is the unique MCB of $G_n$ since in any other cycle basis,
some cycle must contain at least two weight $1$ edges, implying that the total weight is at least $n-1$.

The cycles in the unique MCB of $G_n$ clearly have quadratic total length. This gives the following result.
\begin{theorem}
There are instances of planar graphs of arbitrarily large size $n$ for which the cycles in any MCB for such an instance have
total length $\Omega(n^2)$.
\end{theorem}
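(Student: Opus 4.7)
The plan is to show that the explicit ``fan'' family $G_n$ constructed in the excerpt has a \emph{unique} MCB, whose total length is quadratic. Since $G_n$ has $m = 2n-3$ edges and is connected, any cycle basis contains $m-n+1 = n-2$ cycles, so it suffices to pin down all $n-2$ cycles of an arbitrary MCB.

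First I would prove the weight lower bound: the weight-zero edges $e_1,\ldots,e_{n-1}$ form a Hamiltonian path of $G_n$ and therefore contain no cycle, so every simple cycle of $G_n$ must use at least one weight-one chord $e_i'$. Consequently any cycle basis has total weight at least $n-2$. To match this bound I would exhibit the family $C_i$ from the excerpt, namely $e_i'$ together with the subpath $e_1,\ldots,e_{i+1}$ joining its endpoints. These $n-2$ cycles have total weight exactly $n-2$, and they are independent over $\mathit{GF}(2)$ because each $C_i$ carries the private weight-one edge $e_i'$ that no other $C_j$ contains. Hence $\{C_1,\ldots,C_{n-2}\}$ is a cycle basis of weight $n-2$, and in particular an MCB.

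The main obstacle, and the reason the quadratic lower bound applies to \emph{every} MCB rather than to one carefully chosen MCB, is uniqueness. Let $\mathcal B^\ast$ be an arbitrary MCB. Its total weight equals $n-2$, and since each of its $n-2$ cycles contributes at least one weight-one edge, each cycle of $\mathcal B^\ast$ contains exactly one such edge. If some $e_j'$ appeared in no cycle of $\mathcal B^\ast$, then $e_j'$ could not lie in any $\mathit{GF}(2)$-sum of basis cycles, contradicting the fact that $e_j'$ belongs to at least one simple cycle of $G_n$. Hence each $e_j'$ is contained in exactly one cycle of $\mathcal B^\ast$, and that cycle is $e_j'$ together with a simple path of weight-zero edges from $v_1$ to $v_{j+2}$. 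Because the weight-zero subgraph is just the path $v_1v_2\cdots v_n$, this path is forced to be $e_1,\ldots,e_{j+1}$, so the cycle is exactly $C_j$. Therefore $\mathcal B^\ast = \{C_1,\ldots,C_{n-2}\}$, and $\sum_{i=1}^{n-2} |C_i| = \sum_{i=1}^{n-2}(i+2) = \Theta(n^2)$, proving the theorem.
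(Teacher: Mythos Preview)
Your proof is correct and follows essentially the same approach as the paper: construct the fan graph $G_n$, observe that every cycle uses at least one weight-one chord so any cycle basis has weight at least $n-2$, exhibit the basis $\{C_1,\ldots,C_{n-2}\}$ achieving this, and then argue uniqueness to conclude the quadratic length bound holds for \emph{every} MCB. Your uniqueness argument (pigeonhole on the chords $e_j'$, then uniqueness of the zero-weight path) is a bit more explicit than the paper's one-line claim that ``in any other cycle basis some cycle must contain at least two weight~$1$ edges,'' but the underlying reasoning is the same.
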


In Section~\ref{sec:DivideConquer}, we show how to break the quadratic time bound by
computing an implicit rather than an explicit representation of an MCB.

\section{The Greedy Algorithm}\label{sec:GreedyAlgo}
In the following, $G = (V,E)$ denotes an $n$-vertex plane, straight-line embedded, undirected graph with non-negative edge
weights. We may assume that $G$ is connected since otherwise, we can consider each connected component separately. We require
that there is a unique shortest path in $G$ between any two vertices. In Section~\ref{sec:LexShort}, we show how to avoid
this restriction.

The algorithm in Figure~\ref{fig:PseudocodeGreedy} will find an MCB of $G$ (see~\cite{MCBMinCutPlanar,GreedyAlgo}).
\begin{figure}
\begin{tabbing}
d\=dd\=\quad\=\quad\=\quad\=\quad\=\quad\=\quad\=\quad\=\quad\=\quad\=\quad\=\quad\=\kill
\>1. \>initialize $\mathcal B = \emptyset$\\
\>2. \>for each simple cycle $C$ of $G$ in order of non-decreasing weight,\\
\>3. \>\>if there is a pair of elementary faces of $G$ separated by $C$ and not by\\
\>   \>\>any cycle in $\mathcal B$,\\
\>4. \>\>\>add $C$ to $\mathcal B$\\
\>5. \>output $\mathcal B$
\end{tabbing}
\caption{The generic greedy algorithm to compute the GMCB of $G$.}\label{fig:PseudocodeGreedy}
\end{figure}
We call this algorithm the \emph{generic greedy algorithm} and we call the MCB obtained this way a \emph{greedy MCB} (GMCB)
(of $G$). We assume that ties in the ordering in line $2$ are resolved in some deterministic way so that we may refer to the
cycle basis output in line $5$ as \emph{the} GMCB of $G$. The following two results are from~\cite{MCBMinCutPlanar}.
\begin{Lem}\label{Lem:GMCBIsoNested}
The GMCB is isometric and nested and consists of Horton cycles.
\end{Lem}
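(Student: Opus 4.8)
The plan is to establish the three properties — isometric, nested, consisting of Horton cycles — essentially in that order, with the first and third being closely linked and the middle one (nestedness) being the main technical point. I would argue by induction on the cycles added by the generic greedy algorithm, maintaining as an invariant that the current partial basis $\mathcal B$ is isometric, nested, and consists of Horton cycles.

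First I would show that every cycle added to $\mathcal B$ may be taken to be isometric and, in fact, a Horton cycle. Suppose $C$ is the next cycle in the non-decreasing weight order that separates a pair of elementary faces not yet separated by $\mathcal B$. If $C$ were not isometric, there would be $u,v\in C$ with a strictly shorter $u$-$v$ path $P$ outside $C$ (or one of equal weight breaking the cycle into two cheaper-or-equal cycles); splitting $C$ at $u$ and $v$ into two arcs $A_1,A_2$ and forming $C_1=A_1\cup P$, $C_2=A_2\cup P$, one of these cycles still separates some pair of elementary faces left unseparated by $\mathcal B$ (separation is preserved under this kind of splitting up to symmetric difference), and it has weight no larger than $w(C)$ — and strictly smaller if $C$ is not isometric, contradicting that $C$ was chosen at the right moment in the order (or contradicting the deterministic tie-breaking). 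Hence the added cycle is isometric. For the Horton-cycle claim, take any vertex $r$ on $C$; since $C$ is isometric, both arcs of $C$ from $r$ are shortest paths, so $C=C(r,e)$ for the unique edge $e$ of $C$ not on the shortest path tree $T(r)$ — i.e.\ $C$ is a Horton cycle. This handles properties one and three simultaneously.

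The remaining and hardest part is nestedness: I must show the added cycle $C$ does not cross any cycle $C'$ already in $\mathcal B$. Suppose for contradiction that $C$ crosses $C'$, so $\{C,C'\}$ is not nested, meaning $C$ and $C'$ each have points strictly inside and strictly outside the other. Intersecting the two Jordan curves, one decomposes $C$ and $C'$ into arcs and reassembles them into cycles $D_1,D_2$ (and symmetrically $D_1',D_2'$) whose symmetric differences recover $C$ and $C'$; a counting/weight argument using isometry of both $C$ and $C'$ shows that one of these reassembled cycles has weight at most $\min(w(C),w(C'))=w(C')$ (since $C'$ was added earlier, $w(C')\le w(C)$), while still separating a pair of elementary faces not separated by $\mathcal B$ — but any cycle of weight $\le w(C')$ that needed to be added was already considered before $C$ in the greedy order, and the pairs it separates were already covered, a contradiction. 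I expect this uncrossing argument — carefully tracking which pair of elementary faces remains unseparated and verifying the weight does not increase — to be the main obstacle, and it is where uniqueness of shortest paths and Lemma~\ref{Lem:SubtreeDual} (to reason cleanly about which faces lie inside which cycle) are genuinely used. Once nestedness is in hand, the induction closes and the lemma follows.
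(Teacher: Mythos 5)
The paper does not prove this lemma; it is quoted verbatim from Hartvigsen and Mardon~\cite{MCBMinCutPlanar}, so there is no in-text proof to compare against. Judged on its own, your argument for isometry and for the Horton property is on the right track but has two small repairable defects. For isometry, the unique shortest $u$--$v$ path $P$ need not stay on one side of $C$; it may re-enter $C$. You must pass to a maximal $C$-avoiding subpath $P'$ of $P$ with endpoints $u',v'\in C$; uniqueness of shortest paths then makes $P'$ strictly shorter than both $u'$--$v'$ arcs of $C$, the two cycles $A_1\cup P'$ and $A_2\cup P'$ are simple and strictly lighter than $C$, they jointly separate every face pair that $C$ separates, and the exchange argument closes. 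For the Horton property, the phrase ``both arcs of $C$ from $r$ are shortest paths'' does not parse as written; the correct deduction is that for any $r\in C$, the unique shortest paths from $r$ to the vertices of $C$ all lie in $C$ (by isometry and uniqueness), and their union is $C$ minus exactly one edge $e$, whence $C = C(r,e)$.

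The genuine gap is nestedness. You defer the entire step to ``a counting/weight argument using isometry of both $C$ and $C'$'' without giving it, and the route you sketch is not the right one. Once isometry is in hand, one should observe that two distinct isometric cycles under unique shortest paths \emph{cannot cross at all}, with no weight comparison needed: for any $u,v\in V_C\cap V_{C'}$ the unique shortest $u$--$v$ path lies in both $C$ and $C'$, which forces $V_C\cap V_{C'}$ to induce a single common arc $A$ (with the same edges) in both cycles; then $C\setminus A$ and $C'\setminus A$ are internally disjoint arcs with the same endpoints, and a short Jordan-curve argument shows every possible configuration (including $A$ a single vertex or $A=\emptyset$) is nested. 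In particular, the decomposition of $C\cup C'$ into four cycles $D_1,D_2,D_1',D_2'$ that your uncrossing step envisions does not arise, and the unproved claim that some reassembled cycle of weight at most $w(C')$ still separates a pair unseparated by $\mathcal B$ is both doubtful and unnecessary.
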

\begin{Lem}\label{Lem:FaceSep}
For every pair of elementary faces of a plane undirected graph $H$ with non-negative edge weights, the GMCB of $H$
contains a minimum-weight cycle $C$ in $H$ that separates those two faces. Cycle $C$ is the first such cycle considered
when applying the generic greedy algorithm to $H$.
\end{Lem}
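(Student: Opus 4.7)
The plan is to argue directly from the definition of the generic greedy algorithm, without invoking any exchange argument. Let $f_1$ and $f_2$ be the two elementary faces, and define $C^{\ast}$ to be the very first simple cycle, in the order considered in line~$2$, that separates $f_1$ from $f_2$. Such a cycle exists because in a connected plane graph $H$ there is always at least one simple cycle separating any two distinct elementary faces (for instance, the boundary of one of them). The goal is then to prove two things: first, that $C^{\ast}$ has minimum weight among all simple cycles that separate $f_1$ and $f_2$, and second, that the generic greedy algorithm actually inserts $C^{\ast}$ into $\mathcal B$ when it gets to $C^{\ast}$ in line~$2$.

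For the first point I would simply use the ordering in line~$2$. Every simple cycle of $H$ is enumerated in order of non-decreasing weight (with ties broken by the fixed deterministic rule). Any simple cycle $C'$ separating $f_1$ from $f_2$ is considered either strictly after $C^{\ast}$, or at the same time as $C^{\ast}$ but later in the tie-breaking order; in either case $w(C') \ge w(C^{\ast})$. Hence $C^{\ast}$ attains the minimum weight over all cycles separating $f_1$ from $f_2$, and by the tie-breaking convention it is \emph{the} first such minimum-weight cycle in the enumeration.

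For the second point I would examine the state of $\mathcal B$ at the moment $C^{\ast}$ is examined in line~$2$. At that moment, $\mathcal B$ is a subset of the cycles considered strictly before $C^{\ast}$. By the definition of $C^{\ast}$ as the first cycle in the ordering that separates $f_1$ from $f_2$, none of those earlier cycles separate $f_1$ from $f_2$, so in particular no cycle currently in $\mathcal B$ separates them. Therefore the pair $\{f_1,f_2\}$ is a pair of elementary faces of $H$ separated by $C^{\ast}$ but not by any cycle of the current $\mathcal B$, so the condition in line~$3$ is satisfied and $C^{\ast}$ is appended to $\mathcal B$ in line~$4$. Combining the two points, the GMCB of $H$ contains $C^{\ast}$, which is a minimum-weight cycle separating $f_1$ from $f_2$ and is the first such cycle encountered by the algorithm.

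I do not anticipate a real obstacle here: the lemma is essentially a bookkeeping consequence of the greedy ordering together with the line~$3$ acceptance criterion, and the only delicate point is the tie-breaking, which is dealt with by the paper's assumption that the ordering in line~$2$ is made deterministic so that ``the'' GMCB is well defined.
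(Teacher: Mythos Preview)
Your argument is correct. The paper does not prove this lemma itself but cites it from Hartvigsen and Mardon~\cite{MCBMinCutPlanar}, so there is no in-paper proof to compare against; your direct argument from the greedy ordering and the acceptance test in line~3 is exactly the natural one.

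One small caveat: your existence justification (``for instance, the boundary of one of them'') is not literally valid, since in a plane graph with bridges or cut vertices the boundary walk of an elementary face need not be a simple cycle. A clean replacement is to take any spanning tree $T$ of $H$; the non-tree edges then form a spanning tree $\tilde T$ of $H^\ast$, and for any dual edge $e^\ast$ on the $\tilde T$-path from $f_1$ to $f_2$, the fundamental cycle of $e$ with respect to $T$ is a simple cycle separating $f_1$ from $f_2$. With that fix, your two points (minimality from the non-decreasing enumeration, and acceptance because no earlier cycle separates $f_1,f_2$) go through exactly as written.
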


Our algorithm is essentially the generic greedy algorithm except that we consider a smaller
family of cycles in line $2$. The main difficulty in giving an efficient implementation of the greedy algorithm is testing the
condition in line $3$. Describing how to do this constitutes the main part of the paper.

\section{Divide-and-Conquer Algorithm}\label{sec:DivideConquer}
The family of cycles that we pick in line $2$ of the generic greedy algorithm is obtained with the
divide-and-conquer paradigm.

To separate our problem, we apply the cycle separator theorem of Miller~\cite{CycleSep} to $G$. This gives in linear time a Jordan
curve $\mathcal J$ intersecting $O(\sqrt n)$ vertices and no edges of $G$ such that the subgraph $G_1$ of $G$ in
$\intc{\mathcal J}$ and the subgraph $G_2$ of $G$ in $\extc{\mathcal J}$ each contain at most $2n/3$ vertices. We let
$V_{\mathcal J}$ denote the set of vertices on $\mathcal J$ and refer to them as \emph{boundary vertices} of $G$.

As in $G$, we assume that shortest paths in $G_1$ and $G_2$ are unique. In Section~\ref{sec:LexShort}, we show how to avoid
this assumption.

For $i = 1,2$, let $\mathcal B_i$ be the GMCB of $G_i$. Let $\mathcal B_i'$ be the
subset of cycles of $\mathcal B_i$ containing no vertices of $V_{\mathcal J}$.
\begin{Lem}\label{Lem:DivideConquer}
With the above definitions, $\mathcal B_1'\cup\mathcal B_2'\cup\mathcal H(V_{\mathcal J})$ contains the GMCB of $G$.
\end{Lem}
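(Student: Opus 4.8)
My plan is to show that every cycle $C$ in the GMCB of $G$ lies in $\mathcal B_1'\cup\mathcal B_2'\cup\mathcal H(V_{\mathcal J})$, distinguishing whether or not $C$ passes through a boundary vertex. If $C$ contains some $v\in V_{\mathcal J}$, I would argue that $C$ is a Horton cycle of $G$ rooted at $v$, which places it in $\mathcal H(\{v\})\subseteq\mathcal H(V_{\mathcal J})$. By Lemma~\ref{Lem:GMCBIsoNested}, $C$ is isometric, so for each vertex $u$ of $C$ the unique shortest path in $G$ from $v$ to $u$ is one of the two arcs of $C$ between $v$ and $u$; and the lengths of these two arcs are never equal, as equality would give two shortest $v$-$u$ paths. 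Walking around $C$ from $v$, it follows that there is exactly one edge $e$ of $C$ --- the one ``antipodal'' to $v$ --- for which the shortest paths from $v$ to its endpoints run around $C$ in opposite directions, while every other edge of $C$ lies on a shortest path from $v$, hence in the shortest path tree $T(v)$. Thus $C=C(v,e)$.

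Next I would treat the case where $C$ contains no boundary vertex. Since $\mathcal J$ meets $G$ only in $V_{\mathcal J}$, the connected set $C$ is disjoint from $\mathcal J$, hence lies entirely in $\into{\mathcal J}$ or entirely in $\exto{\mathcal J}$; I would assume the former, so that $C$ is a cycle of $G_1$ (the case $C\subseteq\exto{\mathcal J}$ being symmetric, with $G_2,\mathcal B_2'$ in place of $G_1,\mathcal B_1'$ and the two sides of $C$ suitably interchanged). Because $\mathcal J$ is a connected curve disjoint from $C$, a short planarity argument yields $\intc C\subseteq\into{\mathcal J}$ and, dually, $\exto{\mathcal J}\subseteq\extc C$. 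I would also record two elementary observations: (i) every elementary face of $G$ contained in $\into{\mathcal J}$ is an elementary face of $G_1$ (its bounding edges and vertices lie in $\intc{\mathcal J}$, hence in $G_1$, and its interior contains nothing of $G_1$); and (ii) $\exto{\mathcal J}$ is contained in the external elementary face $\phi$ of $G_1$. Finally, since the generic greedy algorithm added $C$ to the GMCB of $G$, there is a pair of elementary faces $f,f'$ of $G$ that $C$ separates but no earlier cycle separates; and a standard argument (an earlier cycle separating $f,f'$ would itself have been added, a contradiction) shows that $C$ is the \emph{first} cycle, in the greedy order on $G$, separating $f$ and $f'$. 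Orienting so that $f\subseteq\intc C$ and $f'\subseteq\extc C$, we have $f\subseteq\into{\mathcal J}$, so $f$ is an elementary face of $G_1$ by (i).

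It then remains to exhibit a pair of elementary faces of $G_1$ for which $C$ is the first separating cycle in the greedy order on $G_1$; by Lemma~\ref{Lem:FaceSep} (a first separating cycle is automatically a first minimum-weight one) this gives $C\in\mathcal B_1$, and as $C$ has no boundary vertex, $C\in\mathcal B_1'$, as wanted. If $f'\subseteq\into{\mathcal J}$, then $f'$ too is an elementary face of $G_1$ by (i), $C$ separates $f$ and $f'$ in $G_1$, and every $G_1$-cycle separating $f$ and $f'$ is a $G$-cycle separating them and hence not earlier than $C$; so $C$ is the first $G_1$-cycle separating $f$ and $f'$. If instead $f'$ meets $\exto{\mathcal J}$, I would use the pair $(f,\phi)$: $\phi$ is unbounded and disjoint from $C$, so $\phi\subseteq\extc C$ and $C$ separates $f$ from $\phi$; and if $D$ is any $G_1$-cycle with $f\subseteq\intc D$ and $\phi\subseteq\extc D$, then $\exto{\mathcal J}\subseteq\phi\subseteq\extc D$, so the connected face $f'$, which meets $\exto{\mathcal J}$ and is disjoint from $D$, lies in $\extc D$; thus $D$ separates $f$ and $f'$ in $G$ and is not earlier than $C$. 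Either way $C$ is the first $G_1$-cycle separating the chosen pair, which completes the argument.

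I expect the Horton-cycle argument of the first case and observation (i) to be routine bookkeeping. The main obstacle is the second case: certifying that $C$ \emph{survives} the passage from $G$ to $G_1$, and in particular the move of replacing the possibly $\mathcal J$-straddling outer face $f'$ by the external face $\phi$ of $G_1$ while still ruling out a cheaper or earlier $G_1$-cycle that separates the relevant pair. It is here that the planarity facts ($\exto{\mathcal J}\subseteq\extc C$ and $\exto{\mathcal J}\subseteq\phi$) do the real work, and where care is needed so that a competing $G_1$-cycle --- which may pass through boundary vertices --- cannot enclose $f'$. One should also fix a single tie-breaking rule for the greedy ordering, so that ``the first separating cycle'' is well defined and consistent across $G$, $G_1$, and $G_2$.
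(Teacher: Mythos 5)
Your proof is correct and takes a genuinely different route from the paper's. The paper argues by contraposition: for a cycle $C$ outside $\mathcal B_1'\cup\mathcal B_2'\cup\mathcal H(V_{\mathcal J})$ it reduces (via Lemma~\ref{Lem:GMCBIsoNested}) to $C$ isometric, lying in some $G_i$ and not in $\mathcal B_i$; then for every pair of $G$-faces separated by $C$ it transfers a cheaper or earlier separating cycle from $\mathcal B_i$ to the greedy run on $G$, routing pairs that straddle $\mathcal J$ through the external elementary face $f_K$ of the relevant connected component $K$ of $G_i$. You argue directly: a GMCB cycle through a boundary vertex is forced to be a Horton cycle (you supply the antipodal-edge argument that the paper leaves implicit in ``since it is isometric and since shortest paths are unique''), and a GMCB cycle $C$ avoiding $V_{\mathcal J}$ --- say in $\into{\mathcal J}$, hence a $G_1$-cycle --- is exhibited as the first $G_1$-cycle separating either the original witnessing pair $(f,f')$ or $(f,\phi)$ with $\phi$ the external face of $G_1$; Lemma~\ref{Lem:FaceSep} then gives $C\in\mathcal B_1$, hence $C\in\mathcal B_1'$. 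Your substitution $f'\mapsto\phi$ does the same job as the paper's $f_2\mapsto f_K$ but globally rather than per component, which is a touch cleaner. Two small remarks: (i) in the ``symmetric'' case $C\subset\exto{\mathcal J}$ with $\mathcal J\subset\into{C}$, the substitute face is not the external face of $G_2$ but the $G_2$-face containing $\into{\mathcal J}$, which lies in $\intc{C}$ --- your phrase ``the two sides of $C$ suitably interchanged'' does encode this, but it deserves to be spelled out; (ii) you are right that both proofs tacitly need the greedy tie-breaking rule to be consistent across $G$, $G_1$, $G_2$ so that ``earlier'' is comparable between runs; the paper leans on this silently in the step ``Since $C'$ is also considered by the generic greedy algorithm when constructing $\mathcal B$.''
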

\begin{proof}
Let $\mathcal B$ be the GMCB of $G$ and let $C$ be a cycle of $G$ not belonging to
$\mathcal B_1'\cup\mathcal B_2'\cup\mathcal H(V_{\mathcal J})$. We need to show that $C\notin\mathcal B$.

By Lemma~\ref{Lem:GMCBIsoNested}, we may assume that $C$ is isometric. Furthermore, we may assume that it does not belong to
$\mathcal B_1\cup\mathcal B_2$ (since otherwise, it would belong to
$\mathcal B_1\cup\mathcal B_2\setminus(\mathcal B_1'\cup\mathcal B_2')$ and hence to $\mathcal H(V_{\mathcal J})$ since it is
isometric and since shortest paths are unique). Since $C\notin\mathcal H(V_{\mathcal J})$, $C$ does not contain any vertices of
$V_{\mathcal J}$ so it belongs to $G_i$, where $i\in\{1,2\}$. In particular, it is considered by the generic greedy algorithm
in the construction of $\mathcal B_i$.

Since $C\notin\mathcal B_i$, Lemma~\ref{Lem:FaceSep} implies that every pair of elementary faces $(f_1,f_2)$ of $G_i$, where
$f_1\subseteq\intc{C}$ and $f_2\subseteq\extc{C}$, must be separated by some cycle of $\mathcal B_i$ having
smaller weight than $C$ (or a cycle having the same weight as $C$ but considered earlier in the generic greedy algorithm). We
claim that this statement also holds when replacing $\mathcal B_i$ by $\mathcal B$ and $G_i$
by $G$. If we can show this, it will imply that $C$ is not added to $\mathcal B$ by the generic greedy algorithm.

So let $(f_1,f_2)$ be a pair of elementary faces of $G$ with $f_1\subseteq\intc{C}$ and $f_2\subseteq\extc{C}$.
Either $f_1$ or $f_2$ is an elementary face of $G_i$ since either $\mathcal J\subset\extc{C}$ or $\mathcal J\subset\intc{C}$.
Assume w.l.o.g.\ that $f_1$ is an elementary face of $G_i$.

If $f_2$ is also an elementary face of $G_i$ belonging to the same connected component $K$ of $G_i$ as $f_1$, the above implies that
$f_1$ and $f_2$ are separated by some cycle $C'\in\mathcal B_i$ having smaller weight than $C$. Since $C'$ is also considered
by the generic greedy algorithm when constructing $\mathcal B$, it follows that $f_1$ and $f_2$
are separated by a cycle in $\mathcal B$ having weight smaller than that of $C$, as desired.

Conversely, if $f_2$ is not an elementary face of $G_i$ belonging to $K$, $f_2$ must be contained in the external elementary
face $f_K$ of $K$. By Lemma~\ref{Lem:FaceSep}, there is a cycle of $\mathcal B_i$ which is shorter than $C$ and which separates
$f_1$ and $f_K$. This cycle also separates $f_1$ and $f_2$ and it follows that $f_1$ and $f_2$ are separated by a cycle in
$\mathcal B$ having weight smaller that that of $C$.

The above shows that $C\notin\mathcal B$, completing the proof of the lemma.
\end{proof}
Lemma~\ref{Lem:DivideConquer} suggests the following divide-and-conquer algorithm for our problem: recursively compute GMCB's
$\mathcal B_1$ and $\mathcal B_2$ of $G_1$ and $G_2$, compute $\mathcal H(V_{\mathcal J})$, and extract from
$\mathcal B_1'\cup\mathcal B_2'\cup\mathcal H(V_{\mathcal J})$ the GMCB of $G$ by applying the generic greedy algorithm to this
smaller set of cycles. Pseudocode of this algorithm is shown in Figure~\ref{fig:PseudocodeRec}
(it is assumed that a brute-force algorithm is applied to find the GMCB of $G$ when $G$ has constant size). We call it
the \emph{recursive greedy algorithm}.
\begin{figure}
\begin{tabbing}
d\=dd\=\quad\=\quad\=\quad\=\quad\=\quad\=\quad\=\quad\=\quad\=\quad\=\quad\=\quad\=\kill
\>1. \>recursively compute GMCB's $\mathcal B_1$ and $\mathcal B_2$ of $G_1$ and $G_2$, respectively\\
\>2. \>initialize $\mathcal B = \emptyset$\\
\>3. \>for each cycle $C\in\mathcal B_1'\cup\mathcal B_2'\cup\mathcal H(V_{\mathcal J})$ in order of non-decreasing weight,\\
\>4. \>\>if there is a pair of elementary faces of $G$ separated by $C$ and not by\\
\>   \>\>any cycle in $\mathcal B$,\\
\>5. \>\>\>add $C$ to $\mathcal B$\\
\>6. \>output $\mathcal B$
\end{tabbing}
\caption{The recursive greedy algorithm to compute the GMCB of $G$. For $i = 1,2$, $\mathcal B_i'$ is the set
         of cycles of $\mathcal B_i$ not containing any vertices of $\mathcal H(V_{\mathcal J})$.}\label{fig:PseudocodeRec}
\end{figure}

We will show how to implement the top-level of the recursion in $O(n^{3/2}\log n)$ time and $O(n^{3/2})$ space. Since
each step of the recursion partitions the graph into two subgraphs of (almost) the same size~\cite{CycleSep}, it will
follow that these bounds hold for the entire algorithm.

Since the algorithm constructs the GMCB, $\mathcal B$ is isometric and nested at all times. Thus, $\mathcal B$
represents a set of regions that change during the course of the algorithm. More specifically, when the
algorithm starts, $\mathcal B = \emptyset$ and there is only one region, namely the external region $R_\infty(\mathcal B)$.
Whenever a cycle $C$ is added to $\mathcal B$ in line $5$, the region $R$ containing $C$ is replaced by two new regions,
one, $R_1$, contained in $\intc{C}$ and one, $R_2$, contained in $\extc{C}$. We say that $C$ \emph{splits} $R$ into $R_1$
and $R_2$. We call $R_1$ the \emph{internal region} and $R_2$ the \emph{external region} (w.r.t.\ $R$ and $C$).
Figure~\ref{fig:RegionSplit} gives an illustration.
\begin{figure}
\centerline{\scalebox{0.6}{\input{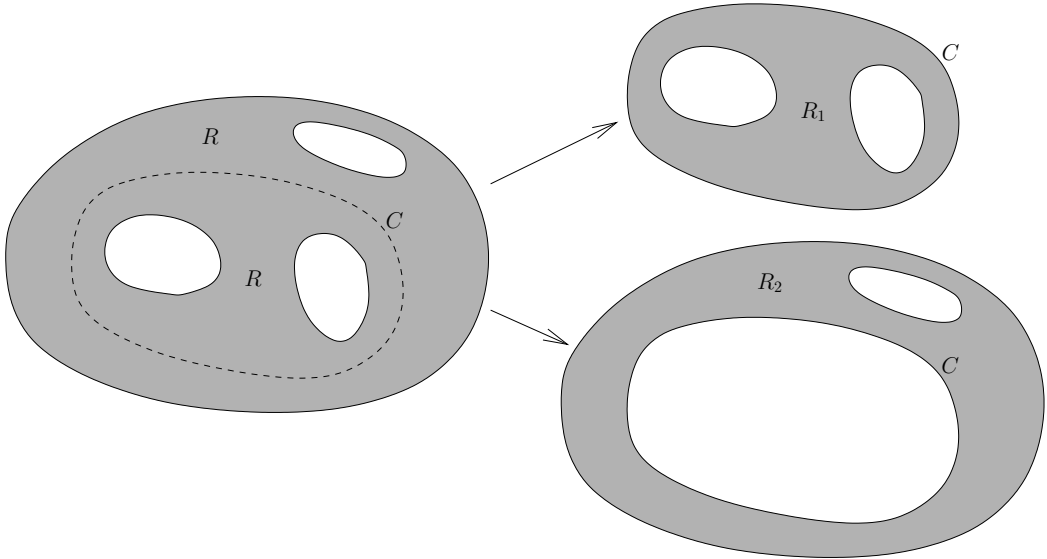}}}
\caption{Adding a cycle $C$ to $\mathcal B$ splits a region $R$ into internal region $R_1$ and external region $R_2$.}
\label{fig:RegionSplit}
\end{figure}

The following lemma relates the test in line $4$ to the two regions generated by the split.
\begin{Lem}\label{Lem:SplitTest}
The condition in line $4$ in the recursive greedy algorithm is satisfied if and only if
$C$ splits a region into two each of which contains at least one elementary face.
\end{Lem}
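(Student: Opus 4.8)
The plan is to prove both directions of the equivalence by carefully tracking what it means for the cycle $C$ to ``separate a pair of elementary faces of $G$'' in terms of the region structure maintained by $\mathcal B$. Recall that when the generic greedy algorithm reaches $C$, the set $\mathcal B$ is nested (Lemma~\ref{Lem:GMCBIsoNested} applies to the basis being built), so $C$ lies inside a unique region $R$ of $\mathcal T(\mathcal B)$: this is the region $R$ whose closure contains $C$, and all of $\mathcal B$'s existing cycles are either nested strictly inside $C$, strictly outside $C$ (within $R$ or outside $R$), and in particular no cycle of $\mathcal B$ crosses $C$. Hence adding $C$ splits exactly $R$ into $R_1\subseteq\intc C$ and $R_2\subseteq\extc C$ as described.

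For the ``only if'' direction, suppose the line~4 condition holds: there exist elementary faces $f_1\subseteq\intc C$ and $f_2\subseteq\extc C$ of $G$ not separated by any cycle already in $\mathcal B$. Since no cycle of $\mathcal B$ separates $f_1$ from $f_2$, they must lie in the same region of $\mathcal T(\mathcal B)$; since $C$ separates them and $C$ itself lies in region $R$, that common region is $R$, so both $f_1$ and $f_2$ are elementary faces of $R$. Because $f_1\subseteq\intc C$, $f_1$ is an elementary face of $R_1$, and because $f_2\subseteq\extc C$, $f_2$ is an elementary face of $R_2$; thus each of the two new regions contains at least one elementary face.

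For the ``if'' direction, suppose $C$ splits region $R$ into $R_1$ and $R_2$ with each containing an elementary face; pick an elementary face $f_1$ of $R_1$ and $f_2$ of $R_2$. Then $f_1\subseteq\intc{R_1}\subseteq\intc C$ and $f_2\subseteq R_2\subseteq\extc C$, so $C$ separates $f_1$ and $f_2$. It remains to check that no cycle $C'\in\mathcal B$ separates $f_1$ from $f_2$. Since $\mathcal B$ is nested and $C$ lies in $R$ without being crossed by any $C'\in\mathcal B$, every $C'\in\mathcal B$ satisfies one of: $\intc{C'}\subseteq\intc C$, or $\intc C\subseteq\intc{C'}$, or $\intc C\subseteq\exto{C'}$ — but these same trichotomy options describe how $C'$ sits relative to region $R$. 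A short case analysis then shows $f_1$ and $f_2$ lie on the same side of each $C'$: if $C'$ is an ancestor of $R$ in the region tree then both $f_1,f_2\in R\subseteq\intc{C'}$; if $\intc{C'}\subseteq\intc C$ then $C'$ is inside $R_1$ (or deeper), so $C'$ cannot contain $f_2$ and (being a descendant structure) contains neither $f_1$ nor $f_2$, or contains both — in any case does not separate them; and symmetrically for $C'$ inside $R_2$ or disjoint from $R$. Hence the line~4 condition is satisfied.

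The main obstacle I anticipate is making the case analysis in the ``if'' direction airtight: one must argue precisely that the nestedness of $\mathcal B$ together with the fact that $C$ is not crossed by any $C'\in\mathcal B$ forces every $C'$ to be ``localized'' with respect to the split (an ancestor of $R$, or contained in $R_1$, or contained in $R_2$, or in a different subtree entirely), and that in each of these situations $C'$ fails to separate two faces that are respectively in $R_1$ and $R_2$. The cleanest way to organize this is probably to work entirely in the region tree $\mathcal T(\mathcal B)$: $f_1$ and $f_2$ are separated by $C'$ iff exactly one of them lies in $\intc{C'}$, i.e.\ iff their lowest common ancestor region in $\mathcal T(\mathcal B\cup\{C\})$ is strictly below the region $\intc{C'}$ on the side of one but not the other — and since $f_1\in R_1$, $f_2\in R_2$ with $R_1,R_2$ siblings whose parent is $R$, that common ancestor is exactly $R$, which is never split by an existing $C'$. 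Phrasing it this way avoids the geometric casework almost entirely.
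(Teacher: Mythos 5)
Your proof takes essentially the same route as the paper's: both reduce the line-4 condition to the statement that two elementary faces are unseparated by $\mathcal B$ if and only if they lie in the same region of $\mathcal T(\mathcal B)$, and then observe that $C$ separates a pair of faces of $R$ precisely when both $R_1$ and $R_2$ are nonempty of elementary faces. The paper states this in two sentences; your ``if''-direction casework is longer than it needs to be (the region-tree reformulation you sketch at the end is exactly the paper's argument), and you should be a bit more careful at the outset: nestedness of $\mathcal B$ alone does not imply that $C$ lies in a region --- the claim that $C$ is uncrossed by $\mathcal B$ (equivalently, that $\mathcal B\cup\{C\}$ is nested) is a separate hypothesis, which the paper handles by discarding crossing cycles before ever invoking this lemma.
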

\begin{proof}
Let $R$ be the region containing $C$ and suppose that $C$ splits $R$ into $R_1$ and $R_2$.

Consider two elementary faces of $G$ separated by $C$. No cycle of $\mathcal B$ separates them if and only if the two
faces belong to the same region. Hence, the condition in line $4$ is satisfied
if and only if $C$ separates a pair of elementary faces both belonging to $R$. The latter is equivalent to the
condition that there is an elementary face in $R_1$ and an elementary face in $R_2$.
\end{proof}

\subsection{Contracted and Pruned Dual Trees}\label{subsec:ContractedDualTrees}
Lemma~\ref{Lem:SplitTest} shows that if we can keep track of the number of elementary faces of $G$ in regions during the course
of the algorithm, then testing the condition in line $4$ is easy: it holds if and only if the number of elementary faces of $G$
in each of the two regions obtained by inserting $C$ is at least one. In the following, we introduce so called contracted
dual trees and pruned dual trees that will help us keep track of the necessary information. First, we need the following lemma.
\begin{Lem}\label{Lem:NoCross}
Let $H$ be a plane graph with non-negative edge weights and assume that shortest paths in $H$ are unique. Let $C$ be an isometric
cycle in $H$ and let $P$ be a shortest path in $H$ between vertices $u$ and $v$. If both $u$ and $v$ belong to $\intc{C}$ then
$P$ is contained in $\intc{C}$. If both $u$ and $v$ belong to $\extc{C}$ then $P$ is contained in $\extc{C}$.
\end{Lem}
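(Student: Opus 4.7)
The plan is to argue the first statement by contradiction; the second follows by the identical argument with the roles of $\intc{C}$ and $\extc{C}$ swapped. Suppose $u,v\in\intc{C}$ but the shortest path $P$ is not contained in $\intc{C}$, so some vertex of $P$ lies in $\exto{C}$. My goal is to locate two vertices $a,b$ of $C$ on $P$ such that the subpath $P[a,b]$ escapes into $\exto{C}$, and then to use isometry of $C$ together with uniqueness of shortest paths to obtain a contradiction.

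First I would make the topological observation that, because $C$ is a cycle of the plane graph $H$, no edge of $H$ can have one endpoint in $\into{C}$ and the other in $\exto{C}$: such an edge would have to cross the Jordan curve $C$, violating planarity. Walking along $P$ from $u$, let $w$ be the first vertex of $P$ lying in $\exto{C}$, and let $a$ be its immediate predecessor on $P$. Then $a\notin\into{C}$ by the observation, and since $a\in\intc{C}$ we conclude $a\in C$. Continuing along $P$ past $w$, let $b$ be the first subsequent vertex that lies in $\intc{C}$; such a $b$ exists because $v\in\intc{C}$, and the same argument forces $b\in C$.

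Now $P[a,b]$ is a subpath whose endpoints lie on $C$ and which contains the interior vertex $w\in\exto{C}$, so $P[a,b]$ is not contained in $C$. Since any subpath of a shortest path is itself a shortest path, $P[a,b]$ is a shortest $a$-$b$ path in $H$. But $C$ is isometric, so there is also a shortest $a$-$b$ path $Q$ contained entirely in $C$; in particular $Q\neq P[a,b]$. This contradicts the assumption that shortest paths in $H$ are unique, so the original supposition fails and $P\subseteq\intc{C}$.

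The only step I expect to require care is the planarity observation that pins the transitions of $P$ between the two sides of $C$ to vertices of $C$ itself; once that is in place, the combination of isometry of $C$ and uniqueness of shortest paths makes the contradiction immediate. Everything else is bookkeeping, and the symmetric case $u,v\in\extc{C}$ is obtained by relabeling.
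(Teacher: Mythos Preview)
Your proof is correct and follows essentially the same approach as the paper's: both argue by contradiction, isolate a subpath of $P$ with endpoints on $C$ whose interior lies in $\exto{C}$, and then derive a contradiction from the isometry of $C$ together with uniqueness of shortest paths. Your version is merely more explicit about the planarity observation that forces the transition vertices $a,b$ to lie on $C$, which the paper leaves implicit.
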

\begin{proof}
Suppose that $u,v\in\intc{C}$ and assume for the sake of contradiction that $P$ is not contained in $\intc{C}$. Then there is
a subpath $P'$ of $P$ between a vertex $u'\in C$ and a vertex $v'\in C$ with all interior vertices belonging to $\exto{C}$.
Since $C$ is isometric, there is a shortest path $P''$ contained in $C$ between $u'$ and $v'$. But $P'$ is also a shortest path
between $u'$ and $v'$. Since $P'\neq P''$, this contradicts the uniqueness of shortest paths in $H$.

A similar proof holds when $u,v\in\extc{C}$.
\end{proof}

For a region $R$ and a boundary vertex $v$ belonging to $R$,
the \emph{contracted dual tree} $\tilde{T}_R(v)$ is the tree obtained from dual tree $\tilde{T}(v)$ by contracting each edge
$(u,u')$, where $u$ and $u'$ are elementary faces in $G$ both contained in the same non-elementary face of $R$, see
Figure~\ref{fig:ContractedDualTree}.
\begin{figure}
\centerline{\scalebox{0.6}{\input{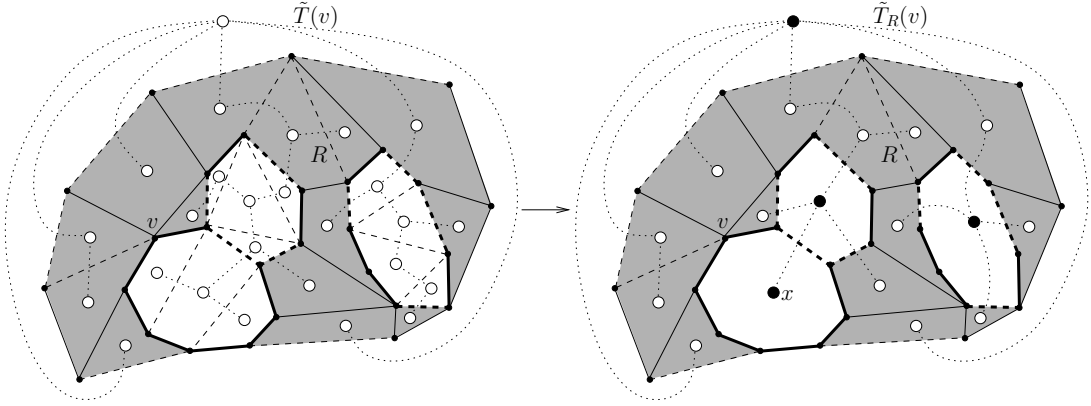}}}
\caption{Contracted dual tree $\tilde{T}_R(v)$ is obtained from $\tilde{T}(v)$ by contracting edges between elementary faces
         belonging to the same non-elementary face (bold edges and white interior) of $R$. For this instance, applying the pruning
         procedure to obtain $\tilde{T}_R'(v)$ removes $x$ and its adjacent edge in $\tilde{T}_R(v)$.}
\label{fig:ContractedDualTree}
\end{figure}

An important observation is that there is a one-to-one correspondence between the vertices of $\tilde{T}_R(v)$ and the faces of $R$.
We assign the colour white resp.\ black to those vertices of $\tilde{T}_R(v)$ corresponding to elementary resp.\ non-elementary
faces of $R$, see Figure~\ref{fig:ContractedDualTree}. We identify each edge in $\tilde{T}_R(v)$ with the corresponding edge in
$\tilde{T}(v)$.

To ease the presentation of our ideas, we assume for now that only cycles from $\mathcal H(V_{\mathcal J})$ are encountered
in line $3$ of the recursive greedy algorithm. In Section~\ref{subsec:RecCycles}, we show how to handle cycles from
$\mathcal B_1'\cup\mathcal B_2'$ as well.

So consider some iteration of the algorithm where a cycle $C = C(v,e)\in\mathcal H(V_{\mathcal J})$ has just been picked in
line $3$ and assume that all cycles added to $\mathcal B$ so far all belong to $\mathcal H(V_{\mathcal J})$. Cycle $C$ should
be added to $\mathcal B$ only if $\mathcal B\cup\{C\}$ is nested. We will now show how to detect whether this is the case
using the contracted dual trees.

If there is a region $R$ containing $v$ such that $\tilde{T}_R(v)$ contains $e$ then $e$ (in $G$) belongs to $R$ (since otherwise,
$e$ would have been contracted in $\tilde{T}_R(v)$). Since each cycle in $\mathcal B$ is isometric and since shortest
paths are unique, Lemma~\ref{Lem:NoCross} implies that $\mathcal B\cup\{C\}$ is nested. And the converse
is also true: if $\mathcal B\cup\{C\}$ is nested then there is a region $R$ containing $C$. In particular, $R$ contains $e$
so this edge must belong to $\tilde{T}_R(v)$.

It follows that detecting whether $\mathcal B\cup\{C\}$ is nested amounts to checking whether $e$ is present in
$\tilde{T}_R(v)$ for some region $R$.

Now, assume that $\mathcal B\cup\{C\}$ is nested (otherwise, we can discard $C$) and let us see how the contracted dual
trees can help us check the condition in line $4$ of the recursive greedy algorithm.

Define $R$ to be the region containing $C$.
Since $e$ belongs to $R$, this edge belongs to the contracted dual tree $\tilde{T}_R(v)$. Let $v_1$ and $v_2$ be the end
vertices of $e$ in $\tilde{T}_R(v)$. Removing $e$ from $\tilde{T}_R(v)$ splits this tree into two subtrees, one,
$\tilde{T}_1$, attached to $v_1$ and one, $\tilde{T}_2$, attached to $v_2$. By Lemma~\ref{Lem:SplitTest}, the
condition in line $4$ is satisfied if and only if $\tilde{T}_1$ and $\tilde{T}_2$ each contain at least one white vertex.

Unfortunately, both of these two subtrees may contain many black vertices so for performance reasons, a simple search in
these trees to determine whether they contain white vertices is infeasible.

We therefore introduce \emph{pruned (contracted) dual tree} $\tilde{T}_R'(v)$, defined as the subtree of $\tilde{T}_R(v)$
obtained by removing a black degree one vertex and repeating this procedure on the resulting tree until all degree one vertices
are white, see Figure~\ref{fig:ContractedDualTree}. We refer to this as the \emph{pruning procedure}.
\begin{Lem}\label{Lem:WhiteVertCond}
With the above definitions, $e\in\tilde{T}_R'(v)$ if and only if $\tilde{T}_1$ and $\tilde{T}_2$ both contain white vertices.
\end{Lem}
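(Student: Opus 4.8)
The plan is to show the equivalence by relating membership of $e$ in the pruned tree $\tilde{T}_R'(v)$ to the presence of white vertices on both sides of $e$ in the unpruned contracted tree $\tilde{T}_R(v)$. The key structural observation is that the pruning procedure only removes black vertices, and it removes exactly those black vertices that lie in a "pendant" black subtree — i.e., that are separated from every white vertex by a cut edge whose removal isolates an all-black subtree. Phrasing this more usefully: for any edge $f$ of $\tilde{T}_R(v)$, after removing $f$ we obtain subtrees $S_1$ and $S_2$, and I claim the pruning procedure deletes all of $S_i$ (together with $f$) if and only if $S_i$ contains no white vertex. This is the engine of the proof.

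First I would establish that claim. For the "if" direction, suppose $S_i$ is all black; then $S_i$ is a finite tree all of whose vertices are black, so it has a black leaf, which the pruning procedure removes; after removal the remaining part of $S_i$ is still an all-black tree attached to the rest only through $f$, so by induction on $|S_i|$ the whole of $S_i$ and finally the edge $f$ itself get pruned (when $S_i$ has shrunk to a single black vertex incident to $f$). For the "only if" direction, I would observe that the pruning procedure never removes a white vertex and never removes an edge unless one of its endpoints has already been reduced to a black degree-one vertex; hence if $S_i$ contains a white vertex $x$, then $x$ survives, and since the tree stays connected throughout the procedure, every edge on the path from $x$ into $S_i$ — in particular no edge "behind" $x$ relative to $f$, and not $f$ if the white vertex is on the $S_i$ side — well, more carefully: as long as any vertex of $S_i$ survives together with $f$, $f$ is not removed; and $x\in S_i$ survives; so $f\in\tilde{T}_R'(v)$, and in fact all of $S_i$ on the path to $x$ survives. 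A clean way to package this: the set of surviving vertices is exactly the minimal subtree of $\tilde{T}_R(v)$ spanning all white vertices (the Steiner tree of the white vertices in the tree), since that subtree has all leaves white and is clearly invariant under pruning, and any larger subtree has a black leaf. This immediately gives: $f\in\tilde{T}_R'(v)$ iff both components of $\tilde{T}_R(v)\setminus f$ contain a white vertex.

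Then the lemma follows by applying this characterization to $f = e$: recalling that $\tilde{T}_1$ and $\tilde{T}_2$ are precisely the two components of $\tilde{T}_R(v)\setminus e$, we get $e\in\tilde{T}_R'(v)$ iff $\tilde{T}_1$ and $\tilde{T}_2$ each contain a white vertex, which is exactly the statement. I would also note in passing that $\tilde{T}_R(v)$ has at least one white vertex whenever $R$ has at least one elementary face (which holds for any region arising in the algorithm), so the Steiner-tree-of-white-vertices description is non-vacuous.

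The main obstacle is making the inductive argument about the pruning procedure rigorous despite the fact that the procedure is defined by a nondeterministic "repeat" loop (which degree-one black vertex to remove first is unspecified). The cleanest fix is the confluence-style argument sketched above: rather than tracking a particular execution, I would prove directly that the final tree $\tilde{T}_R'(v)$ equals the minimal subtree spanning the white vertices — show that this subtree is a fixed point of pruning (no black leaves, so the procedure halts on it) and that it is reached from any starting configuration (any vertex outside it lies in an all-black pendant subtree and is eventually removed, in any order), hence the output is well-defined and independent of the order of removals. Once that canonical description of $\tilde{T}_R'(v)$ is in hand, the lemma is immediate.
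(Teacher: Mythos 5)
Your proposal is correct and, at bottom, rests on the same observation as the paper's proof -- that the pruning procedure eliminates exactly those vertices lying in pendant all-black subtrees -- but you package it differently. The paper argues directly about the two specific subtrees $\tilde{T}_1$ and $\tilde{T}_2$ obtained by deleting $e$ from $\tilde{T}_R(v)$: if $\tilde{T}_i$ is all black, pruning strips it away entirely (including $v_i$, and with it $e$); conversely, if each $\tilde{T}_i$ contains a white vertex, neither $v_1$ nor $v_2$ can ever become a pruned degree-one vertex (implicitly because pruning preserves connectivity and $v_i$ separates the surviving white vertex from the other side), so $e$ survives. You instead first establish the cleaner canonical characterization that $\tilde{T}_R'(v)$ is precisely the Steiner tree of the white vertices inside $\tilde{T}_R(v)$, prove confluence (the output is independent of the removal order), and then read the lemma off as an immediate corollary by applying the characterization to the edge $e$. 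What your route buys is rigor and reusability: the nondeterminism of the pruning loop, which the paper glosses over, is handled head-on, and the Steiner-tree description is a statement about every edge, not just $e$. What it costs is length; the paper's local two-case argument is shorter for the single edge that actually matters.
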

\begin{proof}
If $\tilde{T}_1$ contains only black vertices then the pruning procedure will remove all vertices in $\tilde{T}_1$.
In particular, the procedure removes $v_1$. Similarly, if $\tilde{T}_2$ contains
only black vertices then $v_2$ is removed. In both cases, $e$ is removed so $e\notin\tilde{T}_R'(v)$.

Conversely, if both $\tilde{T}_1$ and $\tilde{T}_2$ contain white vertices then the pruning procedure does not
remove all vertices from $\tilde{T}_1$ and does not remove all vertices from $\tilde{T}_2$. Hence, neither
$v_1$ nor $v_2$ is removed so $e\in\tilde{T}_R'(v)$.
\end{proof}
Lemma~\ref{Lem:WhiteVertCond} shows that once $\tilde{T}_R'(v)$ is given, it is easy to determine whether both $\tilde{T}_1$
and $\tilde{T}_2$ contain white vertices and hence whether the condition in line $4$ is satisfied: simply check whether
$e\in\tilde{T}_R'(v)$.

Note that if line $4$ is satisfied, $e\in\tilde{T}_R'(v)$ and hence $e\in\tilde{T}_R(v)$. By the above, this implies
that $\mathcal B\cup\{C\}$ is nested. This shows that we only need $\tilde{T}_R'(v)$ to test the condition in line $4$.

\subsection{Inserting a Cycle}\label{sec:InsertCycle}
In the previous section, we introduced contracted and pruned dual trees and showed how the latter can be used to test the
condition in line $4$ of the recursive greedy algorithm for cycles in $\mathcal H(V_\mathcal J)$. In the following, we show
how to maintain regions and contracted and pruned dual trees when such cycles are added to $\mathcal B$ in line $5$.

Initially, $\mathcal B = \emptyset$ so the contracted and pruned dual trees are simply the dual trees $\tilde{T}(v)$ for each
boundary vertex $v\in V_{\mathcal J}$. And there is only one region, namely the external region $R_\infty(\mathcal B)$.

Now, suppose $C = C(v,e)\in\mathcal H(V_\mathcal J)$ has just been inserted into $\mathcal B$ in line $5$, see
Figure~\ref{fig:TwoDualSubtrees}.
\begin{figure}
\centerline{\scalebox{0.6}{\input{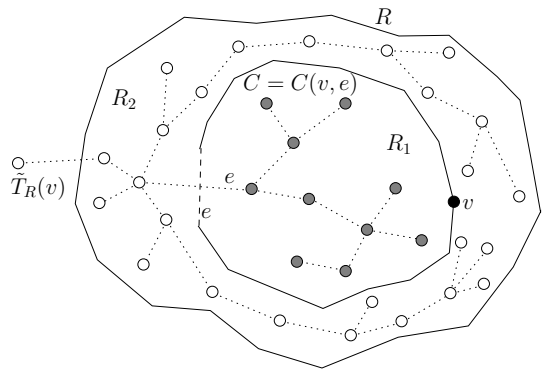}}}
\caption{Faces of $R$ belonging to $R_1$ resp.\ $R_2$ are identified by visiting the subtree of contracted dual tree
         $\tilde{T}_R(v)$ consisting of gray resp.\ white vertices.}
\label{fig:TwoDualSubtrees}
\end{figure}
Let $R$ be the region such that $C$ splits $R$ into internal region $R_1$ and external region $R_2$. We need to identify the faces of
$R$ belonging to $R_1$ and to $R_2$. This can be done with two searches in contracted dual tree $\tilde{T}_R(v)$. One search starts
in the end vertex of $e$ belonging to $\intc{C}$ and avoids $e$ (visiting the gray vertices in Figure~\ref{fig:TwoDualSubtrees}).
The other search starts in the end vertex of $e$ belonging to
$\extc{C}$ and also avoids $e$ (visiting the white vertices in Figure~\ref{fig:TwoDualSubtrees}). It follows from
Lemma~\ref{Lem:SubtreeDual} and from the definition of contracted dual
trees that the first search identifies the faces of $R$ that should belong to $R_1$ and the second search identifies those that
should belong to $R_2$.

We also need to form one new face for $R_1$, namely the face defined by $\extc{C}$. We denote this face by $f_{R_1}$. Similarly,
we need to form a new face for $R_2$, defined by $\intc{C}$, and we denote this face by $f_{R_2}$.

Next, we update contracted dual trees. The only ones affected are those of the form $\tilde{T}_R(u)$,
where $u\in R$. There are three cases to consider: $u\in\into{C}$, $u\in\exto{C}$, and $u\in C$.

\paragraph{Case $1$:}
Consider first a contracted dual tree $\tilde{T}_R(u)$ with $u\in\into{C}$. Then $u\in R_1$ so we need to
discard $\tilde{T}_R(u)$ and construct $\tilde{T}_{R_1}(u)$. We obtain the latter from the
former by contracting all edges of $\tilde{T}_R(u)$ having both end vertices in $\extc{C}$ to a single vertex (this is
possible by Lemma~\ref{Lem:SubtreeDual}). We identify this new vertex with the new face $f_{R_1}$ of $R_1$.

\paragraph{Case $2$:}
Now, assume that $u\in\exto{C}$. Then $u\in R_2$ so $\tilde{T}_R(u)$ should be replaced by $\tilde{T}_{R_2}(u)$.
We do this by contracting all edges of $\tilde{T}_R(u)$ having both end vertices in $\intc{C}$ to a single vertex (again,
we make use of Lemma~\ref{Lem:SubtreeDual}) and we identify this vertex with the new face $f_{R_2}$ of $R_2$.

\paragraph{Case $3$:}
Finally, assume that $u\in C$. Now, $u$ belongs to both $R_1$ and $R_2$ so we need to discard $\tilde{T}_R(u)$
and construct $\tilde{T}_{R_1}(u)$ and $\tilde{T}_{R_2}(u)$. To do this, we first identify the edge $e'$ in
$\tilde{T}_R(u)$ having one end vertex $u_1$ in $\intc{C}$ and one end vertex $u_2$ in $\extc{C}$. Then we construct
the two trees $T_1$ and $T_2$ formed by removing $e'$ from $\tilde{T}_R(u)$ with $u_1\in T_1$ and $u_2\in T_2$.
We let $T_1'$ be $T_1$ augmented with the edge from $u_1$ to $f_{R_1}$ and let $T_2'$ be $T_2$ augmented with the edge from
$u_2$ to $f_{R_2}$.

It follows from Lemma~\ref{Lem:SubtreeDual} that $T_1'$ is the contracted dual tree $\tilde{T}_{R_1}(u)$ for $R_1$ and
that $T_2'$ is the contracted dual tree $\tilde{T}_{R_2}(u)$ for $R_2$.

We have described how to update contracted dual trees when $C$ is added to $\mathcal B$. We apply the same method to update
pruned dual trees. The only difference is that the pruning procedure needs to be applied whenever a change is made to a pruned
dual tree.

\subsection{Implementation}
Above, we gave an overall description of the algorithm when only cycles of $\mathcal H(V_{\mathcal J})$ are considered.
We now go into more details and show how to give an efficient implementation of this algorithm.
We start by describing the data structures that our algorithm makes use of. The main objects involved are regions, contracted
dual trees, and pruned dual trees and we consider them in the following.

\subsubsection{Regions}\label{subsubsec:Regions}
Associated with a region $R$ is a \emph{face list} $\mathcal F(R)$ which is a linked list containing the faces of $R$.
An entry of $\mathcal F(R)$ corresponding to a face $f$ is assigned the colour white resp.\ black if $f$ is elementary
resp.\ non-elementary. If it is black, it has a bidirected pointer to the child of $R$ contained in $f$. This gives
a representation of the region tree $\mathcal T(\mathcal B)$. If the entry is white, it
points to the corresponding elementary face of $G$. The entry also points to the entire data structure for $R$.

Associated with the $f$-entry of $\mathcal F(R)$ is also an array $\mathcal A_R(f)$ with an entry for each boundary vertex
in $V_\mathcal J$. The entry of $\mathcal A_R(f)$ for a boundary vertex $v$ belonging to $R$ has a bidirected pointer to
vertex $f$ in contracted dual tree $\tilde{T}_R(v)$, see Figure~\ref{fig:DataStruct}.
\begin{figure}
\centerline{\scalebox{0.6}{\input{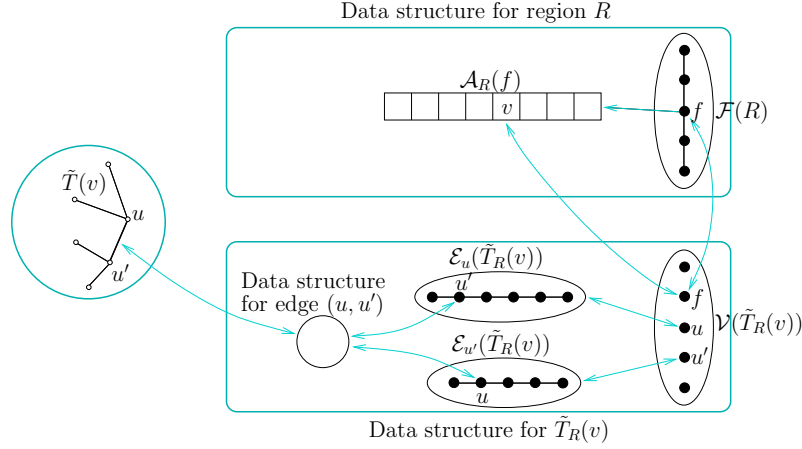}}}
\caption{Illustration of data structures and some of their associated pointers.}
\label{fig:DataStruct}
\end{figure}
It also has a bidirected pointer to vertex $f$
in pruned dual tree $\tilde{T}_R'(v)$ if that vertex has not been deleted by the pruning procedure. All other entries
of $\mathcal A_R(f)$ point to null.


\subsubsection{Contracted and pruned dual trees}\label{subsubsec:PrunedDualTrees}
Associated with a contracted dual tree $\tilde{T}_R(v)$ is a \emph{vertex list} $\mathcal V(\tilde{T}_R(v))$ which is a linked
list with an entry for each vertex of $\tilde{T}_R(v)$. The entry for a vertex $u$ points to
the entry of $\mathcal F(R)$ for the face of $R$ corresponding to $u$. Associated with the $u$-entry of $\mathcal V(\tilde{T}_R(v))$
is also an \emph{edge adjacency list} $\mathcal E_u(\tilde{T}_R(v))$, a linked list representing the edges adjacent to $u$ in
$\tilde{T}_R(v)$. Each list entry contains a pointer to the $u$-entry of vertex list $\mathcal V(\tilde{T}_R(v))$ (allowing us
to find the head of $\mathcal E_u(\tilde{T}_R(v))$ in constant time) as well as a
bidirected pointer to an \emph{edge data structure}. The edge data structure
thus contains two pointers, one for each of its end vertices. Furthermore, it contains a bidirected pointer to the
corresponding edge in dual tree $\tilde{T}(v)$, see Figure~\ref{fig:DataStruct}.

We keep a similar data structure for pruned dual tree $\tilde{T}_R'(v)$.
Both data structures need to support edge contractions, edge insertions, and edge deletions and the data structure for
$\tilde{T}_R'(v)$ also needs to support the pruning procedure. We describe how to do this in
the following.

\paragraph{Edge contraction:}
We only describe edge contractions for contracted dual trees since pruned dual trees can be dealt with in a similar way.
Assume we have a set $E_c$ of edges (or edge data structures) in $\tilde{T}_R(v)$ to be contracted to a single new vertex $v_c$
and that these edges span a subtree of $\tilde{T}_R(v)$.
We assume that we have a pointer to the entry of $\mathcal F(R)$ corresponding to $v_c$. 

To contract an edge $e\in E_c$, we first remove the pointer to the edge of dual tree $\tilde{T}(v)$
corresponding to $e$. Traversing the two pointers associated with $e$, we
find an entry in $\mathcal E_{u_1}(\tilde{T}_R(v))$ and an entry in $\mathcal E_{u_2}(\tilde{T}_R(v))$, where
$u_1$ and $u_2$ are the end vertices of $e$ in $\tilde{T}_R(v)$.

We remove those two entries in lists $L_1 = \mathcal E_{u_1}(\tilde{T}_R(v))$ and $L_2 = \mathcal E_{u_2}(\tilde{T}_R(v))$
and then merge $L_1$ and $L_2$ to one list $L$ since the new vertex is adjacent to edges
adjacent to $u_1$ and $u_2$ except $e$. If $L_1$ is appended to the tail of $L_2$, we make every entry in $L_1$ point to the
$u_2$-entry in vertex list $\mathcal V(\tilde{T}_R(v))$. Otherwise, we make every entry in $L_2$ point to the $u_1$-entry in
that list. For performance reasons, we append the shorter of the two lists to the tail of the other.

We repeat the above for each edge of $E_c$ and we end up with a single entry in $\mathcal V(\tilde{T}_R(v))$ representing the
new vertex $v_c$. We make this entry point to the entry of $\mathcal F(R)$ corresponding to $v_c$ and we
update the pointer to the $v$-entry in the associated array.

How long does it take to contract edges? We will need the following lemma in our analysis (the proof can be found in the
appendix).
\begin{Lem}\label{Lem:Merge}
Consider a set of objects, each assigned a positive integer weight. Let $\mathit{merge}(o,o')$ be an operation that replaces
two objects $o$ and $o'$ by a new object whose weight is the sum of the weights of $o$ and $o'$. Assume that the time to
execute $\mathit{merge}(o,o')$ is bounded by the smaller weight of objects $o$ and $o'$. Then repeating the
$\mathit{merge}$-operation on pairs of objects in any order until at most one object remains takes $O(W\log W)$ time where
$W$ is the total weight of the original objects.
\end{Lem}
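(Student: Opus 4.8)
The plan is to bound the total running time by tracking, for each pair of original objects, how many times they "pay" for a merge. The key idea is a charging argument: whenever $\mathit{merge}(o,o')$ is executed, the cost is at most $\min(w(o),w(o'))$, and I will charge this cost uniformly to the original objects sitting inside whichever of $o,o'$ is the lighter one — each such original object is charged an amount proportional to its own weight divided by $w$ of its current (lighter) super-object. First I would set up notation: let the original objects have weights $w_1,\dots,w_k$ with $W=\sum_i w_i$, and note that at every point in the process the current objects form a partition of the originals, with each current object's weight equal to the sum of the weights of the originals it contains.

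The main step is the amortized bound. Consider a fixed original object $o_i$ with weight $w_i$. Each time $o_i$'s current super-object participates in a merge \emph{as the lighter side}, the super-object's weight at least doubles (since the heavier side has weight $\ge$ the lighter side's weight, so the merged weight is $\ge 2\times$ the lighter weight, which is $\ge$ the weight of $o_i$'s current super-object). The weight of $o_i$'s super-object starts at $w_i\ge 1$ and never exceeds $W$, so $o_i$'s super-object can be the lighter side in at most $\log_2 W$ merges. In each such merge I charge $o_i$ an amount $w_i$ (so that the total charge to all originals inside the lighter object sums to exactly the lighter weight, which dominates the merge cost). Hence $o_i$ is charged a total of at most $w_i\log_2 W$, and summing over all $i$ gives total cost $O\!\left(\sum_i w_i\log W\right)=O(W\log W)$, as claimed.

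The one subtlety I would be careful about is the tie-breaking / "at most" clause in "bounded by the smaller weight": when $w(o)=w(o')$ I must still be able to say the merged object is at least twice the weight of the side I charge, which holds since the merged weight is $w(o)+w(o')=2w(o)$; picking either side as "the lighter one" works. I would also note that the conclusion is insensitive to the order in which merges are performed, since the doubling argument is applied per original object regardless of the global schedule. This is really the only obstacle worth mentioning — the rest is the routine geometric-series bookkeeping — so the proof is short. (This is exactly the standard union-by-size / "smaller-half trick" analysis, e.g.\ as used for disjoint-set forests and for small-to-large merging.)
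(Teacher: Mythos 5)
Your proof is correct, but it takes a genuinely different route from the paper's. The paper first reduces to the extremal case where all original objects have unit weight and the process ends with a single object of weight $W$, then runs the process \emph{backwards} as a sequence of splits and sets up the recurrence
\[
  T(w)\;\leq\;\max_{1\leq w'\leq\lfloor w/2\rfloor}\bigl\{T(w')+T(w-w')+cw'\bigr\},
\]
which it solves to get $T(W)=O(W\log W)$. (This reversal is also why the paper can reuse the identical argument verbatim for Lemma~\ref{Lem:Split}.) You instead give a direct forward charging argument: each original object is charged its own weight whenever its enclosing super-object is the lighter side of a merge, and the observation that the super-object's weight at least doubles on each such occasion caps the number of charges at $\log_2 W$, so the total is $\sum_i w_i\log_2 W = O(W\log W)$. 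Your version is the standard small-to-large analysis, needs no reduction to the unit-weight case, avoids the recurrence, and is arguably more self-contained; the paper's version has the advantage of symmetrizing nicely with the split lemma so that one recurrence serves both. Both arguments are sound and yield the same bound.
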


Fix a $v\in V_\mathcal J$ and consider the set of contracted dual trees of the form $\tilde{T}_R(v)$ generated
during the course of the algorithm. Each time a cycle from $\mathcal H(V_\mathcal J)$ is added to $\mathcal B$, at most two
new edges are inserted into trees of this form (case $3$ in Section~\ref{sec:InsertCycle}). Hence, there are $O(n)$ edges in total.
It then follows easily from Lemma~\ref{Lem:Merge} and from the way we concatenate lists during
edge contractions that the total time spent on edge contractions in all contracted dual trees of the form $\tilde{T}_R(v)$
is $O(n\log n)$. Since the number of choices of $v$ is $O(\sqrt n)$, we get a bound of $O(n^{3/2}\log n)$ time for all edge
contractions performed by the algorithm.

\paragraph{Edge deletion:}
We also describe this only for contracted dual trees. So suppose we are to delete an edge $e = (u_1,u_2)$ from
$\tilde{T}_R(v)$. We need to form two new trees, $T_1$ and $T_2$. Let $T_1$ be the tree containing $u_1$ and let $T_2$
be the tree containing $u_2$. For $i = 1,2$, a simple search (say, depth-first) in $\tilde{T}_R(v)$ starting in $u_i$
and avoiding $e$ finds the vertices of $T_i$ in time proportional to the size of this tree. By alternating between these
two searches (i.e., essentially performing them in parallel), we can find the vertices of the smaller of the two trees
in time proportional to the size of that tree.

Suppose that, say, $T_1$ is the smaller tree. Then we can form the two data structures for $T_1$ and $T_2$
in time proportional to the size of $T_1$: extract the entries of vertex list $\mathcal V(\tilde{T}_R(v))$ that should belong to
$T_1$ and form a new vertex list containing these entries. The old data structure for $\tilde{T}_R(v)$ now becomes
the new data structure for $T_2$ after the entries have been removed.
We also need to remove the pointer between $e$ and the corresponding edge in dual tree $\tilde{T}(v)$ and remove $e$ from the
edge adjacency lists but this can be done in constant time.

The following lemma, which is similar to Lemma~\ref{Lem:Merge}, immediately implies that the total time for
edge deletions is $O(n^{3/2}\log n)$ (the proof of the lemma is in the appendix).
\begin{Lem}\label{Lem:Split}
Consider an object $o$ with a positive integer weight $W$. Let $\mathit{split}$ be an operation that splits an object
of weight at least two into two new objects of positive integer weights such that the sum of weights of the two
equals the weight of the original object. Assume that $\mathit{split}$ runs in time proportional to the smaller weight
of the two new objects. Then repeating the $\mathit{split}$-operation in any order, starting with object $o$, takes
$O(W\log W)$ time.
\end{Lem}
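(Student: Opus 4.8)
The plan is to model the sequence of $\mathit{split}$-operations as a binary tree whose leaves are unit-weight pieces of the original object and whose internal nodes are the intermediate objects, then charge the cost of each split to the leaves of the smaller side and sum up via the standard ``smaller half'' argument. Concretely, run the splits backwards to build a tree $\mathcal{S}$: the root is $o$ (weight $W$), and each $\mathit{split}$ of an object into pieces of weights $w'$ and $w''$ becomes an internal node with two children of those weights; since every object of weight at least two is eventually split and splits strictly decrease weight, the recursion bottoms out exactly at the $W$ leaves of weight $1$. The total running time is, up to a constant factor, $\sum_{\text{internal nodes } x} \min(w'(x), w''(x))$, where $w'(x), w''(x)$ are the weights of the two children of $x$.

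Next I would bound this sum. The key identity is that for each internal node $x$, $\min(w'(x),w''(x))$ equals the number of leaves in the lighter subtree rooted at a child of $x$. I would then charge $1$ to each leaf $\ell$ for every ancestor $x$ of $\ell$ such that $\ell$ lies in the lighter of $x$'s two child-subtrees (ties broken, say, in favour of the left child, consistently). Each such charging event at least halves the size of the subtree containing $\ell$ as we pass from $x$ down towards $\ell$: if $\ell$ is in the lighter child-subtree of $x$, that subtree has at most $\lfloor w(x)/2\rfloor \le w(x)/2$ leaves. Hence a single leaf can be charged at most $\log_2 W$ times, so $\sum_x \min(w'(x),w''(x)) \le W\log_2 W$, giving the claimed $O(W\log W)$ bound.

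The one point that needs a little care — and I expect it to be the main (mild) obstacle — is making the ``time proportional to the smaller side'' hypothesis interact cleanly with the combinatorial charging: the hypothesis gives cost $O(\min(w'(x),w''(x)))$ per split, which might include an additive constant per operation, and there can be up to $W-1$ splits, contributing $O(W)$, which is absorbed into $O(W\log W)$. I would also note that weights are positive integers, so $\min(w'(x),w''(x)) \ge 1$ and the lighter subtree is genuinely nonempty, and that the ``any order'' clause is harmless because the analysis depends only on the tree $\mathcal{S}$, not on the temporal order in which the internal nodes were created. This mirrors Lemma~\ref{Lem:Merge} exactly with the roles of merging and splitting interchanged, so I would present it as essentially the same argument read in reverse.
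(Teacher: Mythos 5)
Your proof is correct, and it reaches the conclusion by a genuinely different (though equally classical) route from the paper. The paper sets up the recurrence
\[
  T(w)\ \le\ \max_{1\le w'\le\lfloor w/2\rfloor}\bigl\{T(w')+T(w-w')+cw'\bigr\},
\]
observes that the right-hand side is extremal at $w'=\lfloor w/2\rfloor$, and reads off $T(W)=O(W\log W)$; it actually proves Lemma~\ref{Lem:Merge} this way and then simply remarks that the same recurrence proof covers Lemma~\ref{Lem:Split}. You instead build the split tree $\mathcal S$ explicitly and run the ``smaller-half'' charging argument: each leaf is charged once per ancestor at which it sits in the lighter subtree, and since the containing subtree's weight at least halves at each such ancestor, no leaf is charged more than $\log_2 W$ times, giving $\sum_x \min(w'(x),w''(x))\le W\log_2 W$. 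Both arguments encode the same combinatorial fact, but yours avoids guessing and verifying a closed form for the recurrence and makes the $W\log W$ bound transparent as a count of (leaf, ancestor) charging pairs, while the paper's is shorter to state once you accept the recurrence solution. Two small remarks: your worry about an additive $O(1)$ per split is moot, since the weights are positive integers and hence $\min(w'(x),w''(x))\ge 1$ already absorbs any per-operation constant; and your phrase ``every object of weight at least two is eventually split'' is an assumption about the input sequence of operations, but it is the worst case, so assuming it only strengthens what you prove.
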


\paragraph{Edge insertion:}
The only situation where edge insertions are needed is in case $3$ of Section~\ref{sec:InsertCycle}. With our data structure,
this can clearly be done in constant time per insertion.

\paragraph{Pruning procedure:}
Finally, let us describe how to implement the pruning procedure for pruned dual trees. Recall that this procedure repeatedly
removes black degree one vertices until no such vertices exist.

We only need to apply the pruning procedure after an edge contraction and after an edge deletion (edge insertions are not
needed in pruned dual trees since these edges will be removed by the pruning procedure). Let us only consider
edge deletions since edge contractions are similar.

Consider a pruned dual tree $\tilde{T}_R'(v)$ and suppose the algorithm removes an edge $e = (u_1,u_2)$ from this
tree. This forms two new trees $T_1$ and $T_2$, containing $u_1$ and $u_2$, respectively. In $T_1$, only $u_1$ can be
a black degree one vertex since in $\tilde{T}_R'(v)$, no vertices had this property. Checking whether $u_1$ should be
removed takes constant time. If it is removed, we repeat the procedure on the vertex that was adjacent to $u_1$. We
apply the same strategy in $T_2$, starting in $u_2$.

The total time spent in the pruning procedure is proportional to the number of vertices removed. Since the number of
vertices only decreases and since the initial number of vertices in all pruned dual trees is $O(n^{3/2})$, the total
time spent by the pruning procedure is $O(n^{3/2})$.

\subsubsection{The algorithm}\label{subsubsec:Algorithm}
Having described the data structures involved and how they can support the basic operations that we need,
let us show how to give an efficient implementation of our algorithm. Still, we only consider cycles from
$\mathcal H(V_\mathcal J)$ in the for-loop.

\paragraph{Initialization:}
First, we consider the initialization step. Applying the separator theorem of Miller gives us $\mathcal J$ and $V_{\mathcal J}$
in linear time. For each boundary vertex $v$, we need to compute shortest path tree $T(v)$ and shortest path distances from $v$
in $G$.
This can be done in $O(n\log n)$ time with Dijkstra's algorithm for a total of $O(n^{3/2}\log n)$ time (in fact, a shortest path
tree can be computed in linear time~\cite{SSSPPlanar} but this will not improve the overall running time of our algorithm). We also
need to compute dual trees $\tilde{T}(v)$ and this can easily be done in $O(n^{3/2})$ additional time. These dual trees are
also the initial contracted and pruned dual trees. Since we need all three types of trees during the course of the algorithm, three
copies of each dual tree are initialized.

The algorithm then recursively computes $\mathcal B_1$ and $\mathcal B_2$. It is assumed that the recursive calls also return
the weights of cycles in these sets.

Our algorithm needs to extract $\mathcal B_1'$ and
$\mathcal B_2'$ from these sets. This is done as follows. For every shortest path tree $T(v)$ that has been computed in
recursive calls (we assume that these trees are kept in memory), we mark vertices of $T(v)$ belonging to $V_{\mathcal J}$.
Then we mark all descendants of these vertices in $T(v)$ as well.
Now, a Horton cycle $C(v,e)$ obtained by adding $e$ to $T(v)$ contains a vertex of $V_{\mathcal J}$ if and only if
at least one of the end vertices of $e$ is marked. Since the total size of all recursively computed shortest path trees is
bounded by the total space requirement which is $O(n^{3/2})$, it follows that $\mathcal B_1'$ and $\mathcal B_2'$ can be extracted
from $\mathcal B_1$ and $\mathcal B_2$ in $O(n^{3/2})$ time.

The cycles in $\mathcal B_1'\cup\mathcal B_2'\cup\mathcal H(V_{\mathcal J})$ need to be sorted in order of non-decreasing
weight. We are given the weights of cycles in $\mathcal B_1'\cup\mathcal B_2'$ from the recursive calls and we can compute the
weights of cycles in $\mathcal H(V_{\mathcal J})$ in a total of $O(n^{3/2})$ time using the shortest path distances computed above.
Hence, sorting the cycles in $\mathcal B_1'\cup\mathcal B_2'\cup\mathcal H(V_{\mathcal J})$ can be done in
$O(n^{3/2}\log n)$ time.

\paragraph{Testing condition in line $4$:}
Next, we consider the for-loop of the algorithm for some cycle $C = C(v,e)\in\mathcal H(V_{\mathcal J})$. As we saw in
Section~\ref{subsec:ContractedDualTrees},
testing the condition in line $4$ amounts to testing whether dual edge $e$ in $\tilde{T}(v)$ is present in some pruned
dual tree. Recall that we keep pointers between edges of dual trees and
pruned dual trees. Since we remove a bidirected pointer between an edge data structure and the corresponding edge in a
dual tree whenever it is contracted or deleted in a pruned dual tree, we can thus execute line $4$ in constant time.

\paragraph{Inserting a cycle:}
Line $5$ requires more work and we deal with it in the following.
Suppose we are about to add the above cycle $C$ to $\mathcal B$ in line $5$. With the pointer associated
with $e$, we find the corresponding edge data structure
in a contracted dual tree $\tilde{T}_R(v)$. Traversing pointers from this data structure, we find the data structure
for $R$ in constant time. This region should be split into two new regions $R_1$ and $R_2$, where
$R_1$ is the internal and $R_2$ the external region w.r.t.\ $R$ and $C$. We need to identify the boundary
vertices and the of faces in $R$ that belong to $R_1$ and $R_2$, respectively.

\paragraph{Identifying boundary vertices in $R_1$ and $R_2$:}
We first identify the set $\delta(R)$ of boundary vertices of $V_{\mathcal J}$ belonging to $R$ by traversing any
one of the arrays $\mathcal A_R(f)$ associated with an entry of $\mathcal F(R)$ and picking the vertices not having null-pointers.
This takes $O(\sqrt n)$ time. Since the total number of times we add a cycle to $\mathcal B$ is $O(n)$, total time for this during
the course of the algorithm is $O(n^{3/2})$.

We will extract three subsets from $\delta(R)$: the subset $\delta_{\mathit{int}}(R,C)$ of vertices
belonging to $\into{C}$, the subset $\delta_{\mathit{ext}}(R,C)$ belonging to $\exto{C}$, and the subset
$\delta(R,C)$ belonging to $C$.

If we can find these three subsets, we also obtain sets $\delta(R_1)$ and $\delta(R_2)$ of boundary vertices for
$R_1$ and $R_2$, respectively, since $\delta(R_1) = \delta(R,C)\cup\delta_{\mathit{int}}(R,C)$ and
$\delta(R_2) = \delta(R,C)\cup\delta_{\mathit{ext}}(R,C)$.

The following lemma bounds the time to find the three subsets. The proof is somewhat long and can be found in the appendix.
\begin{Lem}\label{Lem:DeltaSets}
With the above definitions, we can find in $O(\sqrt n)$ time the sets
$\delta_{\mathit{int}}(R,C)$, $\delta_{\mathit{ext}}(R,C)$, and $\delta(R,C)$
with $O(n^{3/2}\log n)$ time and $O(n^{3/2})$ space for preprocessing.
\end{Lem}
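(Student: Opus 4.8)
The plan is to give the cycle $C = C(v,e)$ a convenient combinatorial description and then, for each boundary vertex $w \in \delta(R)$, decide in constant time after $O(n^{3/2}\log n)$ preprocessing whether $w$ lies in $\into{C}$, in $\exto{C}$, or on $C$. Recall that $C$ consists of the dual edge $e = (f,f')$ together with the unique shortest path in $T(v)$ between the two endpoints in $G$ of the primal edge underlying $e$; call these endpoints $a$ and $b$, so $C$ is the fundamental cycle $a \rightsquigarrow_{T(v)} b$ plus the edge $(a,b)$. The key structural fact I will use is that a vertex $w$ is on $C$ iff $w$ lies on the tree path from $a$ to $b$ in $T(v)$, and — since shortest paths are unique and $C$ is isometric (Lemma~\ref{Lem:GMCBIsoNested}, Lemma~\ref{Lem:FaceSep}, and the fact that Horton cycles built from unique shortest path trees need not themselves be isometric, but the relevant separation test only cares about the Jordan curve $C$ traces) — the inside/outside question for $w$ reduces to a planar ``which side of the Jordan curve $C$'' query, which I will answer via the contracted/pruned dual-tree machinery already set up together with standard tree data structures.

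Concretely, the steps in order are: (1) From the edge data structure for $e$ in $\tilde{T}_R(v)$, recover in $O(1)$ time the primal endpoints $a,b$ and hence the two tree paths $v \rightsquigarrow a$ and $v \rightsquigarrow b$ in $T(v)$; the lowest common ancestor $\ell = \mathrm{lca}_{T(v)}(a,b)$ and the $a$–$b$ tree path are obtained in $O(1)$ after linear preprocessing of each $T(v)$. (2) Build, once per boundary vertex $v$ during the $O(n^{3/2})$-space initialization, an Euler-tour / ancestor-descendant structure on $T(v)$; then $\delta(R,C) = \{\, w \in \delta(R) : w \text{ lies on the } a\text{–}b \text{ path in } T(v)\,\}$ is computed in $O(|\delta(R)|) = O(\sqrt n)$ time by testing, for each $w$, whether $w$ is a descendant of $\ell$ and an ancestor of $a$ or of $b$. (3) To split the remaining vertices of $\delta(R)$ into $\delta_{\mathit{int}}(R,C)$ and $\delta_{\mathit{ext}}(R,C)$: each such $w$ is a boundary vertex of $R$, so it has a non-null pointer in $\mathcal A_R(f)$ for each face $f$ of $R$, i.e.\ $w$ already ``knows'' where it sits relative to every region boundary. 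Using the contracted dual tree $\tilde{T}_R(v)$ and the observation from Section~\ref{sec:InsertCycle} that removing $e$ from $\tilde{T}_R(v)$ splits its vertex set into the faces of $R_1$ (inside $C$) and the faces of $R_2$ (outside $C$), I determine which side of $C$ contains a fixed elementary face of $R$ incident to $w$; since $w \notin C$, all faces of $R$ incident to $w$ lie on the same side, so one such incidence test decides $w$'s membership. The incidence ``$w$ is on the boundary of face $f$ of $R$'' is read off from the $\mathcal A_R(\cdot)$ arrays in $O(1)$ per query, and identifying which component of $\tilde{T}_R(v)\setminus e$ a given face lies in is a ``same-subtree-after-deleting-an-edge'' query, answerable in $O(1)$ with the Euler-tour structure we already maintain on contracted dual trees (updated on contractions/deletions within the amortized bounds of Lemmas~\ref{Lem:Merge} and~\ref{Lem:Split}).

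The total per-split cost is therefore $O(\sqrt n)$: $O(1)$ to recover $a,b,\ell$, and $O(\sqrt n)$ to classify the $O(\sqrt n)$ vertices of $\delta(R)$. The preprocessing is dominated by computing and storing the $T(v)$'s and their ancestor structures and the initial dual trees, which is $O(n^{3/2}\log n)$ time (Dijkstra per boundary vertex) and $O(n^{3/2})$ space, exactly as claimed; all incremental maintenance of the contracted/pruned dual trees was already charged in Section~\ref{subsubsec:PrunedDualTrees} and stays within these bounds. The main obstacle I anticipate is step (3): making the ``which side of $C$'' test genuinely $O(1)$ requires care, because $w$ may be incident to several faces of $R$ and because the face corresponding to a tree vertex of $\tilde{T}_R(v)$ must be located relative to the edge $e$ quickly; I expect to resolve this by always picking, for a given $w \in \delta(R)$, one canonical elementary face incident to it (stored with $w$ at initialization), mapping that face to its vertex in $\tilde{T}_R(v)$ via $\mathcal A_R(f)[v]$, and then running a single Euler-tour ancestor test in $\tilde{T}_R(v)$ rooted so that one side of $e$ is a subtree — with the subtlety that the root of $\tilde{T}_R(v)$ must be maintained consistently under contractions, which is why the appendix proof is ``somewhat long.'' Once that bookkeeping is pinned down, correctness follows from Lemma~\ref{Lem:NoCross} and Lemma~\ref{Lem:SubtreeDual}, which guarantee that the partition of $\tilde{T}_R(v)$ induced by deleting $e$ faithfully records the inside/outside partition of the faces of $R$ by the Jordan curve $C$.
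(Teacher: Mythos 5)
Your recovery of $\delta(R,C)$ is fine and differs only mildly from the paper: you use an Euler-tour ancestor/descendant structure on $T(v)$ to test membership on the tree path from $a$ to $b$, whereas the paper uses precomputed all-boundary-source shortest-path distances and tests the additive equality $d(v,w)+d(w,u_i)=d(v,u_i)$ for the two endpoints $u_1,u_2$ of $e$ (Lemma~\ref{Lem:DeltaIntersection}, Corollary~\ref{Cor:DeltaIntersection}). Both run in $O(\sqrt n)$ per cycle after $O(n^{3/2}\log n)$ preprocessing, so this part is essentially an acceptable variant.

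Step (3) is where the proposal has genuine gaps, and it is also where your route departs most sharply from the paper's. First, the claim that ``the incidence `$w$ is on the boundary of face $f$ of $R$' is read off from the $\mathcal A_R(\cdot)$ arrays in $O(1)$'' is not supported by the data structures actually maintained: $\mathcal A_R(f)[w]$ is merely a bidirected pointer to the vertex $f$ in $\tilde T_R(w)$ (for every $w\in R$ and every face $f$ of $R$, independent of incidence), so these arrays carry no incidence information at all. Second, your ``canonical elementary face incident to $w$, stored at initialization'' is fragile because $R$ changes over time: a face of $G$ adjacent to $w$ may, after several splits, lie inside a non-elementary face of $R$ or outside $R$ entirely (e.g.\ when $w$ sits on some cycle already in $\mathcal B$), and the paper keeps no map from elementary faces of $G$ to the faces of the current $R$ containing them. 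Third, and most seriously, the ``same-subtree-after-deleting-$e$'' query on $\tilde T_R(v)$ in $O(1)$ presupposes an Euler-tour/interval labeling that is \emph{maintained} across contractions and splits; the paper's contracted-dual-tree structure supports small-to-large merges, splits, parent lookup, and edge pointers, but not $O(1)$ ancestor queries, and keeping stable Euler-tour intervals under arbitrary contractions is not obtainable for free within the amortized bounds of Lemmas~\ref{Lem:Merge} and~\ref{Lem:Split}. You would need additional machinery (e.g.\ dynamic trees), which is not in the paper and would change the cost analysis.

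The paper avoids all of this with a geometric argument you do not use: it cyclically orders the boundary vertices along the separator curve $\mathcal J$, observes that between two consecutive vertices of $\delta(C)$ in that cyclic order every intervening boundary vertex lies on the same side of $C$ (because $\mathcal J$ crosses no edges of $G$), and then decides the side of each such maximal block with a single $O(1)$ wedge-containment test at one endpoint, using the straight-line embedding (Lemma~\ref{Lem:FaceInInterior}). The only preprocessing this needs beyond what you already assume is the clockwise ordering of face boundaries, parent/successor pointers in each $T(v)$, and the dual trees $\tilde T(v)$ -- all static and within budget. You should replace your per-vertex dual-tree query with this block-and-wedge argument, or else supply a concrete dynamic structure on the $\tilde T_R(v)$ supporting $O(1)$ ancestor queries under contractions and splits and reprove the amortized bounds.
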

Lemma~\ref{Lem:DeltaSets} implies that the total time spent on computing sets of boundary vertices over all regions
generated by the algorithm is $O(n^{3/2})$ (plus $O(n^{3/2}\log n)$ time for preprocessing).

\paragraph{Identifying faces of $R_1$ and $R_2$:}
Having found the boundary vertices belonging to $R_1$ and $R_2$, we next focus on the problem of identifying the
faces of $R$ belonging to each of the two new regions.

As previously observed (see Figure~\ref{fig:TwoDualSubtrees}), we can identify the faces of $R_1$ resp.\ $R_2$ with, say, a
depth-first search in $\tilde{T}_R(v)$ starting in the end vertex of $e$ belonging to $\intc{C}$ resp.\ $\extc{C}$ and avoiding
$e$. We use the edge adjacency lists to do this. By alternating between the two searches, we can identify the smaller set of
faces in time proportional to the size of this set.

Let us assume that internal region $R_1$ contains this smaller set (the case where external region $R_2$ contains the set is
similar). The search in $\tilde{T}_R(v)$ visited the entries of $\mathcal V(\tilde{T}_R(v))$ corresponding to faces in $R_1$.
Since each such entry points to the corresponding entry in $\mathcal F(R)$, we can thus identify the faces in this face list that
should belong to $\mathcal F(R_1)$.

We can extract these faces in time proportional to their number and thus form the face lists $\mathcal F(R_1)$ and
$\mathcal F(R_2)$ in this amount of time. By reusing the arrays associated with entries of $\mathcal F(R)$, we do not need to
form new arrays for $\mathcal F(R_1)$ and $\mathcal F(R_2)$. However, we need to set the pointers of some entries of
these arrays to null. For $R_1$, the new null-pointers are those corresponding to boundary vertices of $\delta_{\mathit{ext}}(R,C)$
since these are the boundary vertices of $R$ not belonging to $R_1$. And for $R_2$, the new null-pointers are those
corresponding to boundary vertices of $\delta_{\mathit{int}}(R,C)$.

Since we index the arrays by boundary vertices, we can identify pointers to be set to null in constant time per pointer. Pointers
that are set to null remain in this state so we can charge this part of the algorithm's time to the total number of pointers which
is $O(n^{3/2})$.

We also need to associate a new face with the data structure for $R_1$ and for $R_2$ (i.e., faces $f_{R_1}$ and $f_{R_2}$ in
Section~\ref{sec:InsertCycle}). And we need to initialize an array for each of these two faces. This takes $O(\sqrt n)$ time which
is $O(n^{3/2})$ over all regions.

\paragraph{Contracted and pruned dual trees for $R_1$ and $R_2$:}
What remains is to construct contracted and pruned dual trees for $R_1$ and $R_2$. Due to symmetry, we shall only consider
contracted dual trees. We have already given an overall description of how to do this in Section~\ref{sec:InsertCycle}.
As we showed,
\begin{enumerate}
\item for each $u\in\delta_{\mathit{int}}(R,C)$, we obtain $\tilde{T}_{R_1}(u)$
from $\tilde{T}_R(u)$ by contracting all edges belonging to $\extc{C}$,
\item for each $u\in\delta_{\mathit{ext}}(R,C)$, we obtain $\tilde{T}_{R_2}(u)$ from $\tilde{T}_R(u)$ by
contracting all edges belonging to $\intc{C}$, and
\item for each $u\in\delta(C)$, we obtain $\tilde{T}_{R_1}(u)$ and $\tilde{T}_{R_2}(u)$ from $\tilde{T}_R(u)$
by removing the unique edge in $\tilde{T}_R(u)$ having one end vertex in $\intc{C}$ and one end vertex in
$\extc{C}$.
\end{enumerate}
In Section~\ref{subsubsec:PrunedDualTrees}, we described how to support edge contraction, edge deletion, and edge insertion
such that the total time is $O(n^{3/2}\log n)$. The only detail missing is how to efficiently find the edges to be contracted
or removed in the three cases above. We consider these cases separately in the following.

\paragraph{Case 1:}
Assume that $\delta_{\mathit{int}}(R,C)\neq\emptyset$ and let $u\in\delta_{\mathit{int}}(R,C)$.

With a depth-first search in $\tilde{T}_R(v)$ as described above, we can identify all faces of $R$ belonging to
$\extc{C}$ in time proportional to the number of such faces. We can charge this time to the number of edges in
$\tilde{T}_R(u)$ that are to be contracted.

For each such face $f$, we can mark the corresponding vertex in $\tilde{T}_R(u)$ by traversing the pointer associated
with entry $u$ of array $\mathcal A_R(f)$. Again, we can charge the time for this to the number of edges to be
contracted.

Now, we need to contract all edges of $\tilde{T}_{R}(u)$ whose end vertices are both marked. In order to
do this efficiently, we need to make a small modification to the contracted dual tree data structure in
Section~\ref{subsubsec:PrunedDualTrees}.

More precisely, we make the contracted dual trees rooted at some vertex.
The choice of root is not important and may change during the course of the algorithm. What is important is
that each non-root vertex now has a parent. By checking, for each marked non-root vertex whether its parent
is also marked, we can identify the edges to be contracted in time proportional to the number of such edges.
Of course this only works if the parent of a vertex can be obtained in constant time. Let us show how the
contracted dual tree data structure can be adapted to support this.

Recall that each vertex of a contracted dual tree $\tilde{T}_R(v)$ is associated with an edge-adjacency list
$\mathcal E_u(\tilde{T}_R(v))$ containing the edges adjacent to $u$ in $\tilde{T}_R(v)$. We now require the
edge from $v$ to its parent (if defined) to be the located at the first entry of this list. This allows us to
find parents in constant time.

How do we ensure that the parent edge is always located at the head of the list? This is not difficult
after an edge insertion or deletion so let us focus on edge contractions. When an edge $e = (u_1,u_2)$ is contracted,
either $u_1$ is the parent of $u_2$ or $u_2$ is the parent of $u_1$. Assume, say, the former. Then the
parent of $u_1$ becomes the parent of the new vertex obtained by contracting $e$. When the two edge adjacency
lists are merged, one of the two heads of the two old lists should thus be the head of the new list. This
can easily be done in constant time.

\paragraph{Case 2:}
This case is similar to case $1$.

\paragraph{Case 3:}
We need an efficient way of finding the unique edge $e$ in $\tilde{T}_R(u)$ having one end vertex in $\intc{C}$
and one end vertex in $\extc{C}$. We do as follows: first we mark the entries in $\mathcal F(R)$ corresponding to the set of
faces of $R$ belonging to $\intc{C}$ or the set of faces of $R$ belonging to $\extc{C}$. The set we choose to mark is the smaller
of the two. We do this with ``parallel'' searches in $\tilde{T}_R(v)$ as described above, using time proportional to the number
of marked faces.

We mark the corresponding vertices of $\tilde{T}_R(u)$ (using pointers from the arrays associated with entries of
$\mathcal F(R)$). By Lemma~\ref{Lem:SubtreeDual}, these form a subtree of
$\tilde{T}_R(u)$ so we can find $e$ by starting a search in any marked vertex of $\tilde{T}_R(u)$ and stopping once we encounter
a vertex which is not marked. Then $e$ is the last edge encountered in the search. This search also takes time proportional to the
number of marked faces.

Hence, constructing the contracted and pruned dual trees for $R_1$ and $R_2$ takes time proportional to the number of
marked faces. Lemma~\ref{Lem:Split} then implies that the total time for this during the course of the algorithm is
$O(n^{3/2}\log n)$.

Having constructed the contracted and pruned dual trees for $R_1$ and $R_2$, what remains before adding $C$ to $\mathcal B$
is to add bidirected pointers between entries of the array associated with the new face in $\mathcal F(R_1)$
resp.\ $\mathcal F_(R_2)$ and the new vertex in the contracted/pruned dual tree for $R_1$ resp.\ $R_2$. Since the size of
the array is $O(\sqrt n)$, this can clearly be done in a total of $O(n^{3/2})$ time.

This concludes the description of the implementation of our algorithm. We have shown that it runs in $O(n^{3/2}\log n)$
time and requires $O(n^{3/2})$ space.

\subsection{Recursively Computed Cycles}\label{subsec:RecCycles}
So far, we have assumed that only cycles from $\mathcal H(V_{\mathcal J})$ are encountered in line $3$ of the recursive
greedy algorithm. Now, we show how to deal with cycles from $\mathcal B_1'\cup\mathcal B_2'$. In the
following, we only consider $\mathcal B_1'$ since dealing with $\mathcal B_2'$ is symmetric.

The overall idea is the following. When a cycle $C\in\mathcal H(V_{\mathcal J})$ is added to $\mathcal B$, all
cycles of $\mathcal B_1'$ that cross $C$ are marked. If in the for-loop, a cycle $C\in\mathcal B_1'$ is picked, it is
skipped if it is marked since the GMCB is nested by Lemma~\ref{Lem:GMCBIsoNested}. Otherwise, $C$ must be fully contained in
some region of the form $R(C',\mathcal B)$, $C'\in\mathcal B$. Then $C$ is added to $\mathcal B$ if and only if $C$ separates
a pair of elementary faces of $R(C',\mathcal B)$.

We will assume that the recursive invocation of the algorithm in $G_1$ returns region tree $\mathcal T(\mathcal B_1)$ in
addition to $\mathcal B_1$.

By applying Lemma~\ref{Lem:FaceSep}, we see that every pair of elementary faces of $G_1$ is separated by some cycle of
$\mathcal B_1$. Hence, each region associated with a vertex $u$ of $\mathcal T(\mathcal B_1)$ contains exactly one elementary face
of $G_1$ and we assume that the recursive call has associated this face with $u$. We let $R(f,\mathcal B_1)$ denote the region
containing elementary face $f$.

We use the conditions in the following lemma to identify those cycles of $\mathcal B_1'$ that should be
marked whenever a cycle of $\mathcal H(V_{\mathcal J})$ is added to $\mathcal B$.
\begin{figure}
\centerline{\scalebox{0.6}{\input{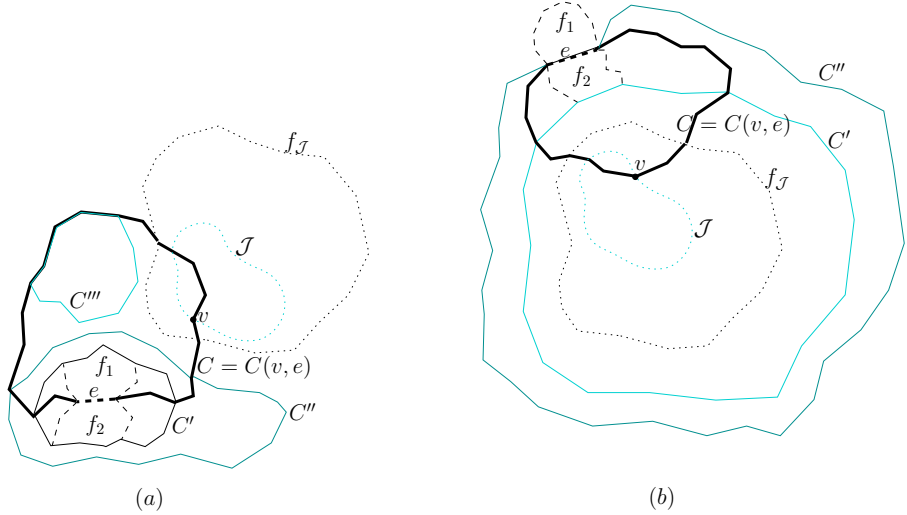}}}
\caption{(a): Neither $R(C',\mathcal B_1)$, $R(C'',\mathcal B_1)$, nor $R(C''',\mathcal B_1)$ are ancestors of
         $R(f_\mathcal J,\mathcal B_1)$ and only $R(C',\mathcal B_1)$ and $R(C'',\mathcal B_1)$ are ancestors of both
         $R(f_1,\mathcal B_1)$ and $R(f_2,\mathcal B_1)$. Thus, $C$ crosses $C'$ and $C''$ and not $C'''$.
         (b): Both $R(C',\mathcal B_1)$ and $R(C'',\mathcal B_1)$ are ancestors of $R(f_\mathcal J,\mathcal B_1)$ and only
         $R(C',\mathcal B_1)$ is an ancestor of neither $R(f_1,\mathcal B_1)$ nor $R(f_2,\mathcal B_1)$. Thus, $C$ crosses
         $C'$ and not $C''$.}
\label{fig:CrossingCycles}
\end{figure}
\begin{Lem}\label{Lem:CrossingCycles}
Let $C = C(v,e)\in\mathcal H(V_{\mathcal J})$. If $e$ does not belong to $G_1$ then $C$ does not cross any cycle of
$\mathcal B_1'$. Otherwise, let $f_1$ and $f_2$ be the elementary faces of $G_1$ adjacent to $e$ and let
$f_{\mathcal J}$ be the elementary face of $G_1$ containing $\mathcal J$. Then the set of cycles $C'\in\mathcal B_1'$
that $C$ crosses are precisely those which satisfy one of the following two conditions:
\begin{enumerate}
\item $R(C',\mathcal B_1)$ is not an ancestor of $R(f_{\mathcal J},\mathcal B_1)$ and is an
      ancestor of both $R(f_1,\mathcal B_1)$ and $R(f_2,\mathcal B_1)$ in
      $\mathcal T(\mathcal B_1)$ (Figure~\ref{fig:CrossingCycles}(a)),
\item $R(C',\mathcal B_1)$ is an ancestor of $R(f_{\mathcal J},\mathcal B_1)$ and is
      an ancestor of neither $R(f_1,\mathcal B_1)$ nor $R(f_2,\mathcal B_1)$ in
      $\mathcal T(\mathcal B_1)$ (Figure~\ref{fig:CrossingCycles}(b)).
\end{enumerate}
\end{Lem}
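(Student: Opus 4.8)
The plan is to analyze when the Jordan curve traced by $C = C(v,e)$ crosses a cycle $C' \in \mathcal{B}_1'$, exploiting the fact that $C'$ lies entirely inside $G_1$ (it contains no boundary vertices) and that $\mathcal{B}_1$ is nested and isometric by Lemma~\ref{Lem:GMCBIsoNested}. The first observation I would establish is that $C$ crosses $C'$ if and only if $C'$ separates some pair of elementary faces of $G_1$ that $C$ does \emph{not} keep on the same side, combined with the reverse. More concretely, since $C'$ is a cycle of $\mathcal{B}_1$, it separates exactly those elementary faces of $G_1$ that lie in $\intc{C'}$ from those in $\extc{C'}$; and $R(C',\mathcal{B}_1)$ being an ancestor of a face-region $R(f,\mathcal{B}_1)$ is precisely the statement that $f \in \intc{C'}$. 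So the ancestor relation in $\mathcal{T}(\mathcal{B}_1)$ encodes exactly ``which side of $C'$'' each elementary face of $G_1$ lies on.

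Next I would handle the trivial case: if $e \notin G_1$, then $e$ is a boundary edge or lies in $G_2$, and the Horton cycle $C(v,e)$ — being a shortest-path cycle from the boundary vertex $v$ — has the property (via Lemma~\ref{Lem:NoCross} applied in $G$, using that $C'$ is isometric) that it cannot have both an arc inside $\intc{C'}$ and an arc inside $\extc{C'}$ unless $e$ itself straddles $C'$; since all of $C'$ lies in $G_1$ and $e \notin G_1$, no crossing is possible. For the main case $e \in G_1$, the curve $C$ is obtained from a shortest path $P$ in $G$ (two branches of $T(v)$) together with the edge $e$. The three distinguished faces are: $f_1, f_2$, the two elementary faces of $G_1$ incident to $e$ (these sit on the two sides of $C$ locally at $e$), and $f_{\mathcal{J}}$, the elementary face of $G_1$ that "absorbed" everything outside $\mathcal{J}$ — since $v \in V_{\mathcal{J}}$ and all of $C \setminus \{e\}$ is a shortest-path structure rooted at $v$, the part of $C$ other than $e$ hugs the boundary region, so $f_{\mathcal{J}}$ lies on one definite side of $C$. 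I would argue that, because $C'$ lives in $G_1$ and doesn't touch the boundary, $C$ crosses $C'$ exactly when $C'$ has $f_1$ and $f_2$ strictly separated AND $f_{\mathcal{J}}$ on the same side as one of them is impossible in a way consistent with non-crossing: if $f_1, f_2 \in \intc{C'}$ but $f_{\mathcal{J}} \in \extc{C'}$, the edge $e$ is inside $\intc{C'}$ while the boundary-hugging part of $C$ escapes to $\extc{C'}$, forcing a crossing — this is condition (1). Symmetrically, if $f_{\mathcal{J}} \in \intc{C'}$ (i.e.\ $R(C',\mathcal{B}_1)$ is an ancestor of $R(f_{\mathcal{J}},\mathcal{B}_1)$) but $f_1, f_2 \in \extc{C'}$, then $e$ is outside $C'$ while the boundary part dives inside $C'$, again forcing a crossing — condition (2). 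In the remaining configurations ($f_1, f_2, f_{\mathcal{J}}$ all on one side, or $f_1, f_2$ split with $f_{\mathcal{J}}$ agreeing with one of them), one checks via Lemma~\ref{Lem:NoCross} — applied to $C'$ and to the shortest subpaths of $C$ — that $C$ stays nested with $C'$.

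The main obstacle, I expect, is making rigorous the claim that the ``boundary-hugging part'' of $C$ — i.e.\ $C$ minus the edge $e$ — behaves as a single connected arc relative to $C'$, so that we really only need to track which side $f_{\mathcal{J}}$ (as a proxy for "outside $\mathcal{J}$") is on, rather than chasing the whole shortest-path structure vertex by vertex. This is where I would lean hardest on Lemma~\ref{Lem:NoCross}: since $C'$ is isometric and shortest paths in $G$ are unique, any maximal subpath of the two branches of $T(v)$ joining two vertices of $C'$ must coincide with a subpath of $C'$, so the branches of $T(v)$ cannot "oscillate" across $C'$ — each branch, read from $v$ outward, can change sides at most in a controlled way, and in fact the only place where $C$ can cross from $\intc{C'}$ to $\extc{C'}$ is at the edge $e$. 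Once that structural rigidity is in hand, the case analysis above becomes a finite check on the three faces $f_1, f_2, f_{\mathcal{J}}$, and translating ``lies in $\intc{C'}$'' to ``$R(C',\mathcal{B}_1)$ is an ancestor of $R(\cdot,\mathcal{B}_1)$'' gives exactly the two stated conditions.
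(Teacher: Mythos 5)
Your overall plan matches the paper's: use Lemma~\ref{Lem:NoCross} to pin down the shortest-path portions of $C$, reduce the crossing test to the positions of $f_1$, $f_2$, and $f_{\mathcal J}$ relative to $C'$, and translate ``side of $C'$'' into ancestry in $\mathcal T(\mathcal B_1)$. The two conditions you extract and the treatment of the $e\notin G_1$ case are the same case split the paper carries out.

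There is one technical slip worth fixing. You invoke Lemma~\ref{Lem:NoCross} ``applied in $G$, using that $C'$ is isometric,'' but $C'\in\mathcal B_1$ is guaranteed isometric only in $G_1$, not in $G$: a $G$-shortest path between two vertices of $C'$ could leave $G_1$ through a boundary vertex, pass through $G_2$, and return, so it need not lie on $C'$. The paper avoids this by decomposing $C\cap G_1$ into maximal subpaths, each of which is a shortest path \emph{in $G_1$} with endpoints in $V_{\mathcal J}$ (or at an endpoint of $e$ when $e\in G_1$), and applying Lemma~\ref{Lem:NoCross} inside $G_1$; the parts of $C$ lying in $G_2\setminus V_{\mathcal J}$ sit in $\exto{\mathcal J}$ and are therefore automatically on the $f_{\mathcal J}$ side of $C'$. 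Once you make that adjustment --- working piecewise in $G_1$ rather than globally in $G$ --- your case analysis and the translation to region-tree ancestry go through exactly as intended.
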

The proof can be found in the appendix.

The next lemma will simplify the test in line $4$ of the recursive greedy algorithm for $C\in\mathcal B_1'$. Again, the
proof is in the appendix.
\begin{Lem}\label{Lem:RecCyclesSimpleAdd}
Suppose that in the recursive greedy algorithm, $C\in\mathcal B_1'$ is the cycle currently considered and
assume that it does not cross any cycle of the partially constructed GMCB $\mathcal B$ of $G$. If
$\mathcal J\subset\exto{C}$ then all descendants of $C$ in region tree $\mathcal T(\mathcal B_1)$ belong to the GMCB of $G$. If
$\mathcal J\subset\into{C}$ then all cycles of non-descendants of $C$ in $\mathcal T(\mathcal B_1)$ belong
to the GMCB of $G$.
\end{Lem}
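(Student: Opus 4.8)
The plan is: for each cycle $C'$ that the lemma claims lies in the GMCB $\mathcal B$ of $G$, I will exhibit a pair of elementary faces of $G$ that $C'$ separates and that is \emph{not} separated by any cycle already in $\mathcal B$ at the time $C'$ is processed by the recursive greedy algorithm; since that algorithm computes the GMCB of $G$ (by Lemma~\ref{Lem:DivideConquer}), this forces $C'\in\mathcal B$.

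First I would fix a canonical pair of faces. Every vertex of $\mathcal T(\mathcal B_1)$ carries exactly one elementary face of $G_1$; let $f_1$ be the face in $R(C',\mathcal B_1)$ and $f_2$ the face in the parent region of $R(C',\mathcal B_1)$. The decisive combinatorial fact is that \emph{the only cycle of $\mathcal B_1$ separating $f_1$ from $f_2$ is $C'$}: those two regions are joined by a single edge of $\mathcal T(\mathcal B_1)$, and in a nested family a cycle separates two faces exactly when it labels an edge on the tree path between their regions. Next I would locate $f_1,f_2$ relative to $\mathcal J$. In the first case, $\mathcal J\subset\exto C$, a descendant $C'$ of $C$ has $R(C',\mathcal B_1)\subseteq\intc{C'}\subseteq\intc C$, and walking up the ancestor chain of $C'$ one sees that its parent is again $C$ or a proper descendant of $C$, so $f_2$ too lies inside $\intc C$; since $\intc C\subseteq\into{\mathcal J}$, both $f_1$ and $f_2$ are elementary faces of $G$ lying strictly inside $\mathcal J$. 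The second case, $\mathcal J\subset\into C$, is treated by the mirror-image argument with $\into{\cdot}$ and $\exto{\cdot}$ interchanged, placing $f_1,f_2$ on the side of $C$ away from $\mathcal J$; here one must also rule out the parent region being the external region of $\mathcal T(\mathcal B_1)$, and in both cases rule out $f_1$ or $f_2$ being the special face $f_{\mathcal J}$, which is not an elementary face of $G$. I expect this positional bookkeeping to be the most delicate part of the proof.

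The core of the argument is then to check that, when $C'$ is reached, no cycle already in $\mathcal B$ separates $f_1$ from $f_2$. Since $\mathcal B\subseteq\mathcal B_1'\cup\mathcal B_2'\cup\mathcal H(V_{\mathcal J})$, I would argue according to the source of the cycle. A cycle $D\in\mathcal B_2'$ lies strictly on the far side of $\mathcal J$ from $f_1,f_2$, so $f_1$ and $f_2$ lie in a single component of $\mathbb R^2\setminus D$ and $D$ cannot separate them. A cycle $D\in\mathcal B_1'$ that separated $f_1,f_2$ would equal $C'$ by the region-tree fact, but $C'$ is not yet in $\mathcal B$. Finally, a cycle $\hat C\in\mathcal H(V_{\mathcal J})\cap\mathcal B$ contains a boundary vertex, which lies outside $C$ in the first case; since $C$ crosses no cycle of $\mathcal B$, $\{C,\hat C\}$ is nested, and because a vertex of $\hat C$ lies outside $C$ the nesting must have $C$ inside $\hat C$ or the two disjoint, so $\intc C$ meets only one open side of $\hat C$ and $\hat C$ cannot separate $f_1,f_2\subseteq\intc C$ — and dually in the second case. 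The same nesting analysis shows $C'$ crosses no cycle of $\mathcal B$, so $C'$ is neither skipped as a marked cycle nor excluded from its region; as it separates the pair $(f_1,f_2)$, the algorithm adds it to $\mathcal B$. Carrying this out for every descendant of $C$ (first case) and every non-descendant of $C$ (second case) in $\mathcal T(\mathcal B_1)$ yields the lemma.
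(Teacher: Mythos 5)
Your proof is correct and follows essentially the same route as the paper's. Both arguments pick a witness pair of elementary faces that only $C'$ separates among all candidates and then argue, by the nestedness of $C$ with everything in $\mathcal B$, that no cycle from $\mathcal H(V_{\mathcal J})\cup\mathcal B_2'$ can separate that pair, so the greedy process (via Lemma~\ref{Lem:DivideConquer}) must take $C'$; the paper invokes the greedy property of $\mathcal B_1$ to obtain the pair while you construct it explicitly as the faces attached to $R(C',\mathcal B_1)$ and its parent region — these amount to the same thing, and the ``delicate positional bookkeeping'' you flag (that $f_1,f_2$ avoid $f_{\mathcal J}$ and lie on the right side of $\mathcal J$) is resolved by the same observation the paper uses, namely that both faces sit inside $\intc{C}$, which is disjoint from $f_{\mathcal J}$.
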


Now, we are ready to describe how the algorithm deals with cycles from $\mathcal B_1'$. Each cycle in this set is in one of
three states: \emph{active}, \emph{passive}, or \emph{cross} state.

Initially, all cycles in $\mathcal B_1'$ are active.
When a cycle from $\mathcal H(V_{\mathcal J})$ is added to $\mathcal B$, Lemma~\ref{Lem:CrossingCycles} is applied to
identify all cycles from $\mathcal B_1'$ that cross this cycle. These cycles have their state set to the cross state.

When the algorithm encounters a cycle $C\in\mathcal B_1'$ in the for-loop, $C$ is skipped if it is in the cross state.

If $C$ is active, it is completely contained in some region $R$. There are two cases to consider:
$\mathcal J\subset\exto{C}$ and $\mathcal J\subset\into{C}$.
We assume that $\mathcal J\subset\exto{C}$ since the case $\mathcal J\subset\into{C}$ is similar. We need to determine
whether $C$ should be added to $\mathcal B$. By
Lemma~\ref{Lem:SplitTest}, this amounts to checking whether there are two elementary faces of $R$ which are separated by $C$.
By Lemma~\ref{Lem:RecCyclesSimpleAdd}, we know that the elementary faces of $R$ belonging to $\intc{C}$
are exactly the elementary faces of the region $R'$ in $\intc{C}$ that was generated when $C$ was added to $\mathcal B_1$ during the
recursive call for $G_1$.

Hence, we add $C$ to $\mathcal B$ if and only if the number of elementary faces in $R$ is strictly larger than the
number of elementary faces in $R'$.

If $C$ is added to $\mathcal B$, region $R$ is split into two smaller regions. Let $R_1$ be the internal region and let $R_2$
be the external region. Since $\mathcal J\subset\exto{C}$, Lemma~\ref{Lem:RecCyclesSimpleAdd} implies that the cycles belonging
to $\intc{C}$ that are added to $\mathcal B$ during the course of the algorithm are exactly $C$ and its descendants in
$\mathcal T(\mathcal B_1)$. We therefore do not need to maintain $R_1$ or any regions contained in $\intc{C}$.

Instead, we make all descendants of $C$ in $\mathcal T(\mathcal B_1)$ passive. When a passive cycle is encountered by the
algorithm, there is no need to update regions or contracted or pruned dual trees and the cycle is simply added to $\mathcal B$.

Now, let us consider $R_2$. In order to obtain this region, we replace all faces of $R$ belonging to $\intc{C}$ with a single new
face defined by $\intc{C}$. And we contract all edges in $\intc{C}$ to a single black vertex in all contracted and pruned dual trees
for $R$.

This completes the description of the extension of our algorithm that deals with cycles from $\mathcal B_1'\cup\mathcal B_2'$.

\subsubsection{Implementation}
Let us show how to give an efficient implementation of the above algorithm for cycles from $\mathcal B_1'\cup\mathcal B_2'$.
Due to symmetry, we may restrict our attention to $\mathcal B_1'$ in the following.

\paragraph{Identifying cross state cycles:}
The first problem is to identify the cycles of $\mathcal B_1'$ that should be in the cross state when a cycle
$C = C(v,e)\in\mathcal H(V_{\mathcal J})$ is added to $\mathcal B$.

To solve this problem, we apply Lemma~\ref{Lem:CrossingCycles}. Checking whether $e$ belongs to $G_1$ takes constant time.
If $e$ is not an edge of $G_1$ then no new cycles will be in the cross state. Otherwise, we obtain elementary faces $f_1$ and
$f_2$ in constant time since these are the end vertices of $e$ in the dual of $G_1$.

We assume that we can compute lowest common ancestors in $\mathcal T(\mathcal B_1)$
efficiently. We can use the data structure of Harel and Tarjan~\cite{LCA} for this.

Let $a_1$ be the lowest common ancestor of $R(f_1,\mathcal B_1)$ and
$R(f_2,\mathcal B_1)$ in $\mathcal T(\mathcal B_1)$, see Figure~\ref{fig:AncestorCrossState}. Let $a_2$ be the
lowest common ancestor of $R(f_1,\mathcal B_1)$ and
$R(f_{\mathcal J},\mathcal B_1)$. Let $a_3$ be the lowest common ancestor of $R(f_2,\mathcal B_1)$ and
$R(f_{\mathcal J},\mathcal B_1)$. Finally, let $P$ be the path in $\mathcal T(\mathcal B_1)$ containing
$R(f_{\mathcal J},\mathcal B_1)$ and its ancestors.

A cycle $C'\in\mathcal B_1'$ satisfies the first condition in Lemma~\ref{Lem:CrossingCycles} if and only if it is
not associated with a vertex on $P$ and if it is associated with $a_1$ or an ancestor of $a_1$
(Figure~\ref{fig:AncestorCrossState}(a)). And it satisfies the
second condition if and only if it is associated with a vertex on $P$ and not with $a_2$, $a_3$, or an ancestor of either
of these two vertices (Figure~\ref{fig:AncestorCrossState}(b)).
\begin{figure}
\centerline{\scalebox{0.6}{\input{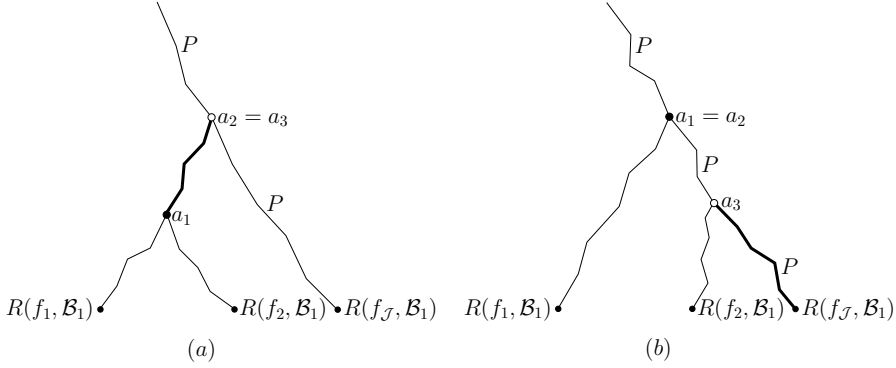}}}
\caption{(a): Cycles associated with $a_1$ or an ancestor of $a_1$ and not with a vertex on $P$ are exactly those that
         satisfy the first condition in Lemma~\ref{Lem:CrossingCycles}. (b): Cycles associated with a vertex on $P$ and
         not with $a_2$, $a_3$, or an ancestor of either $a_2$ or $a_3$ are exactly those satisfying the second
         condition in Lemma~\ref{Lem:CrossingCycles}.}
\label{fig:AncestorCrossState}
\end{figure}

To identify cycles that satisfy the first condition, we start at $a_1$ and walk upwards in $\mathcal T(\mathcal B_1)$,
marking cycles as we go along. The process stops when a vertex on $P$ is reached.

To identify cycles satisfying the second condition, we instead move upwards in $\mathcal T(\mathcal B_1)$ along $P$,
starting in $R(f_{\mathcal J},\mathcal B_1)$. We stop when the root of $\mathcal T(\mathcal B_1)$ or when $a_2$ or $a_3$ is
reached.

Although this strategy works, it is slow since the same cycles may be considered several times during the algorithm.
To remedy this, we first observe that when identifying cycles associated with vertices from $a_1$ to $P$, we may stop if we
encounter a cycle that is already in the cross state since then all its ancestors which are not on $P$ must also be in this
state.

Next, we observe that when identifying cycles associated with vertices on $P$, we always consider them from bottom to top. Hence,
by keeping track of the bottommost $b$ vertex on $P$ whose associated cycle is not in the cross state, we can start the next
traversal of $P$ from $b$. If the cycle associated with $a_2$ or with $a_3$ is already in the cross state, we need not consider any
vertices. Otherwise, we walk upwards in $P$ from $b$, changing the state of cycles to the cross state and stop
if $a_2$ or $a_3$ is reached.

It follows that we can identify cycles satisfying one of the two conditions and change their state in time proportional to the
number of cycles whose state changes as a result of this. Hence, the total time for this is bounded by the size of
$\mathcal T(\mathcal B_1)$ which is linear.

\paragraph{Testing condition in line $4$:}
In the following, let $C$ be an active or passive cycle in $\mathcal B_1'$ just encountered by our algorithm. We will assume that
$\mathcal J\subset\exto{C}$. The case $\mathcal J\subset\into{C}$ is similar.

We first need to determine whether $C$ should be added to $\mathcal B$. This is trivial if $C$ is passive since passive cycles
should always be added. And as noted in Section~\ref{subsec:RecCycles}, no pruned dual trees need to be updated after the insertion
of a passive cycle.

So assume that $C$ is active. Let $R$ be the region containing $C$ and let $R'$ be the region in $\intc{C}$ that was generated
when $C$ was added to $\mathcal B_1$ during the construction of the GMCB of $G_1$. As we showed above, determining whether $C$
should be added to $\mathcal B$ amounts to checking whether the number of elementary faces in $R$ is strictly larger
than the number of elementary faces in $R'$.

We can easily extend our region data structure to keep track of the number of elementary faces in each region without increasing
the time and space bounds of our algorithm. By recording this information for $R'$ during the recursive call for $G_1$, it follows that
we can determine in constant time whether $R$ contains more elementary faces than $R'$.

Of course, this only works if we can quickly identify $R$ and $R'$. Identifying $R'$ is simple since this region is associated with
the vertex $v_C$ of region tree $\mathcal T(\mathcal B_1)$ associated with $C$.

To identify $R$, let $R_{v_C}$ be the region associated with $v_C$ in $\mathcal T(\mathcal B_1)$. Since $\mathcal B_1$ is the
GMCB of $G_1$, each pair of elementary faces of $G_1$ is separated by some cycle of $\mathcal B_1$. It follows that $R_{v_C}$
contains exactly one elementary face $f_{v_C}$ of $G_1$. We may assume that this face was associated with $v_C$ during the
construction of $\mathcal B_1$ so that we can obtain this face in constant time from $v_C$.

Face $f_{v_C}$ is also an elementary face in $G$ and it belongs to $R$. Recall from Section~\ref{subsubsec:Regions} that there is
a bidirected pointer between $R$ and each elementary face of $G$ belonging to $R$. Hence, we can obtain $R$ from $f_{v_C}$ in
constant time

It follows from the above that we can check if $C$ should be added to $\mathcal B$ in constant time.

\paragraph{Inserting a cycle:}
Now, suppose $C$ should be inserted into $\mathcal B$.
We first make cycles of $\mathcal B_1'$ passive according to Lemma~\ref{Lem:RecCyclesSimpleAdd}. This can be done with, say, a
depth-first search in $\mathcal T(\mathcal B_1)$ starting in vertex $v_C$ of $\mathcal T(\mathcal B_1)$ and
visiting descendants of this vertex.
The search stops when a vertex associated with a passive cycle is encountered. Each search identifies
the vertices of $\mathcal T(\mathcal B_1)$ that are associated with cycles whose state changes from non-passive to passive.
And since we stop a search when a passive cycle is encountered, all searches take total time proportional to the size of
$\mathcal T(\mathcal B_1)$ which is $O(n)$.

Next, we need to update regions and contracted and pruned dual trees. Let $R_1$ be the internal region and let $R_2$ be the
external region w.r.t.\ $R$ and $C$. As we showed in the overall description of the algorithm, the only problem that we need to
consider is how to construct $R_2$ and its contracted and pruned dual trees. We showed that this amounts to replacing all
faces of $R$ belonging to $\intc{C}$ with a single new face defined by $\intc{C}$ and to contract all edges in $\intc{C}$ to a
single black vertex in all contracted and pruned dual trees for $R$.

We will show how to find the faces of $R$ belonging to $\intc{C}$ in time proportional to their number. Applying the
charging schemes introduced in Section~\ref{subsubsec:Algorithm}, this will suffice to prove the desired time and space
bounds for the entire algorithm.

Consider an active cycle $C'$ associated with a descendant $u$ of $v_C$ in $\mathcal T(\mathcal B_1)$. If $C'$ was previously
considered in the for-loop of our algorithm, it must have been added to $\mathcal B$ (by Lemma~\ref{Lem:RecCyclesSimpleAdd}).
This implies that $\intc{C'}$ must be a non-elementary face of $R$ since otherwise, $C'$ would be
passive, see Figure~\ref{fig:RecCycleAdd}.
\begin{figure}
\centerline{\scalebox{0.6}{\input{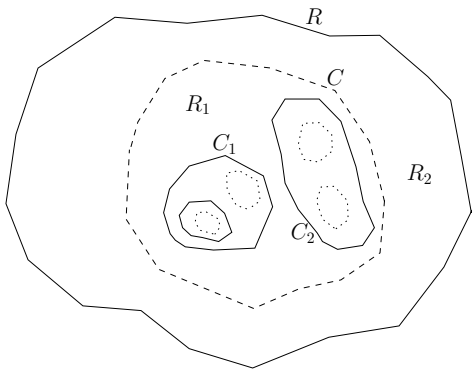}}}
\caption{Solid cycles belong to the partially constructed GMCB $\mathcal B$ of $G$. All descendants of vertices of
         $\mathcal T(\mathcal B_1)$ associated with solid cycles are passive and thus do not define faces of $R$.
         For this instance, $C_1$ and $C_2$ define the non-elementary faces of $R$.}
\label{fig:RecCycleAdd}
\end{figure}
The converse holds as well: any non-elementary face of $R$ belonging to $\intc{C}$ is realized by $\intc{C'}$ for such a cycle $C'$
previously considered in the for-loop.

It follows that we can find all non-elementary faces of $R$ belonging to $\intc{C}$ by identifying the active descendants
of $C$ in $\mathcal T(\mathcal B_1)$ that have already been considered in the for-loop. Since all active cycles associated
with descendants of $C$ are to become passive, we can charge the time for finding these faces to the number of cycles whose
state changes from active to passive.

What remains is to find the elementary faces of $R$ belonging to $\intc{C}$. Recall that we have associated with each
vertex $u$ of $\mathcal T(\mathcal B_1)$ the unique elementary face of $G_1$ contained in the region associated with $u$.
Vertex $v_C$ and its descendants in $\mathcal T(\mathcal B_1)$ are thus associated with exactly the elementary faces of $G_1$
belonging to $\intc{C}$. These faces are also elementary faces in $G$.

It follows that the elementary faces of $R$ belonging to $\intc{C}$ are associated with exactly the descendants of $C$
corresponding to active cycles not already considered by the algorithm. Using the same charging scheme as above, we can
also identify these faces within the required time and space bounds.

We have shown how to implement the entire recursive greedy algorithm to run in $O(n^{3/2}\log n)$
time and $O(n^{3/2})$ space and we can conclude this section with the main result of our paper.
\begin{theorem}\label{Thm:MainRes}
Given an $n$-vertex planar, undirected graph $G = (V,E)$ with non-negative edge weights, the following implicit
representation of the GMCB $\mathcal B$ of $G$ can be computed in $O(n^{3/2}\log n)$ time and $O(n^{3/2})$ space:
\begin{enumerate}
\item a set of trees $T_1,\ldots,T_k$ in $G$ rooted at vertices $v_1,\ldots,v_k$, respectively,
\item a set of triples $(i,e,w)$ representing the cycles in $\mathcal B$, where $i\in\{1,\ldots,k\}$,
      $e = (u,v)\in E\setminus E_{T_i}$, and $w\in\mathbb R$, where $u$ and $v$
      are vertices in $T_i$. The pair $(i,e)$ represents the cycle in $\mathcal B$ formed by concatenating $e$
      and the two paths in $T_i$ from $v_i$ to $u$ and from $v_i$ to $v$, respectively. The value of $w$ is the weight
      of this cycle,
\item the region tree $\mathcal T(\mathcal B)$ where each vertex points to the associated region,
\item a set of regions. Each region is associated with the unique elementary face of $G$ contained in that region.
      Each internal region $R(C,\mathcal B)$ is associated with the triple representing $C$.
\end{enumerate}
\end{theorem}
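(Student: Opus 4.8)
The plan is to assemble Theorem~\ref{Thm:MainRes} from the pieces already established in Sections~\ref{sec:DivideConquer}--\ref{subsec:RecCycles}; the statement is essentially a summary of what the recursive greedy algorithm computes and of the data structures it maintains, so the work is bookkeeping rather than a genuinely new argument. First I would settle correctness. The algorithm is the recursive greedy algorithm of Figure~\ref{fig:PseudocodeRec}, whose output is the GMCB of $G$ by Lemma~\ref{Lem:DivideConquer}: the family $\mathcal B_1'\cup\mathcal B_2'\cup\mathcal H(V_{\mathcal J})$ considered in line~$3$ contains the GMCB, so running the generic greedy test on it (in the same tie-broken order) produces exactly the GMCB. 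What must be checked is that the implicit bookkeeping faithfully realizes lines~$4$ and~$5$. For a cycle from $\mathcal H(V_{\mathcal J})$ this is precisely Lemmas~\ref{Lem:SplitTest} and~\ref{Lem:WhiteVertCond} together with the update rules (Cases~$1$--$3$) of Section~\ref{sec:InsertCycle}: the test ``$e\in\tilde{T}_R'(v)$ for some region $R$'' simultaneously decides nestedness of $\mathcal B\cup\{C\}$ and the two-elementary-faces condition, and the contracted/pruned dual trees, together with the arrays $\mathcal A_R(f)$, are updated to the trees of the two new regions. For a cycle from $\mathcal B_1'\cup\mathcal B_2'$, correctness of the active/passive/cross mechanism rests on Lemmas~\ref{Lem:CrossingCycles} and~\ref{Lem:RecCyclesSimpleAdd}: a cross-state cycle is nested-incompatible with $\mathcal B$ and is correctly skipped, a passive cycle is always added, and an active cycle is added if and only if its region $R$ has strictly more elementary faces than the region $R'$ it created during the recursive construction of $\mathcal B_1$ (or $\mathcal B_2$). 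Hence on termination $\mathcal B$ is the GMCB of $G$.

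Next I would read off the claimed output representation. The shortest path trees $T(v)$ built for the boundary vertices at every level of the recursion and retained in memory are the trees $T_1,\dots,T_k$ of item~1; by Lemma~\ref{Lem:GMCBIsoNested} every cycle of $\mathcal B$ is a Horton cycle $C(v_i,e)$, so recording its tree index $i$, its non-tree edge $e$, and the weight $w$ already computed for the sorting step yields the triples of item~2. The region tree $\mathcal T(\mathcal B)$ is maintained explicitly throughout (Section~\ref{subsubsec:Regions}), giving item~3, and each region's face list contains exactly one white entry---its elementary face, unique because the GMCB separates every pair of elementary faces by Lemma~\ref{Lem:FaceSep}---while each internal region $R(C,\mathcal B)$ stores, through the black entry in its parent's face list, the triple representing $C$; this gives item~4. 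The only subtlety here is that when an active cycle $C$ from $\mathcal B_1'$ is inserted, the algorithm deliberately avoids building the internal region $R_1$ and the regions inside $\intc{C}$. By Lemma~\ref{Lem:RecCyclesSimpleAdd}, those regions, their nesting, and their associated elementary faces coincide with the subtree of $\mathcal T(\mathcal B_1)$ rooted at the vertex associated with $C$ (returned by the recursion and still in memory), so a final linear-time pass grafts these subtrees into $\mathcal T(\mathcal B)$ to complete items~3 and~4.

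Finally I would bound time and space. Sections~\ref{sec:InsertCycle}--\ref{subsec:RecCycles} show the top level runs in $O(n^{3/2}\log n)$ time and $O(n^{3/2})$ space: the dominant contributions are the $O(\sqrt n)$ Dijkstra computations and dual-tree constructions ($O(n^{3/2}\log n)$), all edge contractions and deletions over the $O(\sqrt n)$ families of contracted and pruned dual trees ($O(n^{3/2}\log n)$ by Lemmas~\ref{Lem:Merge} and~\ref{Lem:Split}), the pruning procedure ($O(n^{3/2})$), the $\delta$-set queries ($O(n^{3/2})$ plus $O(n^{3/2}\log n)$ preprocessing by Lemma~\ref{Lem:DeltaSets}), sorting $\mathcal B_1'\cup\mathcal B_2'\cup\mathcal H(V_{\mathcal J})$ ($O(n^{3/2}\log n)$), and the linear-per-level manipulation of $\mathcal T(\mathcal B_1)$ and $\mathcal T(\mathcal B_2)$ for the cross/passive states. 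Miller's cycle separator theorem~\cite{CycleSep} guarantees $|V_{G_1}|,|V_{G_2}|\le 2n/3$, so the recursion has depth $O(\log n)$ and the subproblems at depth $d$ have sizes $n_i$ with $\sum_i n_i = O(n)$ and $\max_i n_i\le (2/3)^d n$; therefore $\sum_i n_i^{3/2}\le (\max_i n_i)^{1/2}\sum_i n_i = O((2/3)^{d/2}n^{3/2})$, a geometric series summing to $O(n^{3/2})$ over all depths. Multiplying by the logarithmic factor gives the claimed $O(n^{3/2}\log n)$ running time, and the same sum gives the $O(n^{3/2})$ space bound, using the assumption that the shortest path trees produced by recursive calls are kept in memory.

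I expect the main obstacle to be not a single hard step but this accounting: in particular, confirming that the ``do not maintain $R_1$'' shortcut for recursively computed cycles is compatible with producing the complete region tree of item~4 (handled by the grafting argument above), and checking that the per-level linear and $n^{3/2}$-type costs really do sum geometrically rather than accumulating an extra logarithmic factor. Everything else follows by citing the lemmas and the per-operation analyses already in place.
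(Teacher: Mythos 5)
Your proposal is correct and follows the same route as the paper: Theorem~\ref{Thm:MainRes} is the summary of Section~\ref{sec:DivideConquer}, with correctness of the recursive greedy algorithm from Lemmas~\ref{Lem:DivideConquer}, \ref{Lem:SplitTest}, \ref{Lem:WhiteVertCond}, \ref{Lem:CrossingCycles}, and~\ref{Lem:RecCyclesSimpleAdd}, the four output items read off directly from the maintained data structures, and the $O(n^{3/2}\log n)$-time and $O(n^{3/2})$-space bounds obtained by the geometric recursion from Miller's separator. Your ``grafting'' observation correctly fills in a detail the paper leaves implicit (the algorithm deliberately stops maintaining regions inside $\intc{C}$ once $C\in\mathcal B_1'\cup\mathcal B_2'$ is inserted, yet the theorem's items~3--4 promise the full region tree; reusing the corresponding subtree of $\mathcal T(\mathcal B_1)$ or $\mathcal T(\mathcal B_2)$ is exactly the intended reconciliation), but this is a clarification rather than a different argument.
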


\section{Corollaries}\label{sec:Corollaries}
In this section, we present results all of which follow from Theorem~\ref{Thm:MainRes}. The first is an
output-sensitive sensitive algorithm for computing an MCB.
\begin{Cor}\label{Cor:OutputSensitiveAlgo}
A minimum cycle basis of an $n$-vertex planar, undirected graph with non-negative edge weights can be computed in
$O(n^{3/2}\log n + C)$ time and $O(n^{3/2} + C)$ space, where $C$ is the total length of all cycles in the basis.
\end{Cor}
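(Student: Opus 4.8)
The plan is to start from the implicit representation of the GMCB $\mathcal B$ guaranteed by Theorem~\ref{Thm:MainRes}, which we can compute in $O(n^{3/2}\log n)$ time and $O(n^{3/2})$ space, and then expand each cycle in an output-sensitive manner so that the extra cost is $O(C)$ in both time and space. Recall that this representation consists of trees $T_1,\ldots,T_k$ rooted at $v_1,\ldots,v_k$ together with triples $(i,e,w)$, where the pair $(i,e)$ with $e=(u,v)$ encodes the simple cycle obtained by concatenating $e$ with the two tree paths in $T_i$ from $v_i$ to $u$ and from $v_i$ to $v$. After the common prefix of these two root paths (the path from $v_i$ to the lowest common ancestor $a$ of $u$ and $v$ in $T_i$) cancels, the cycle is exactly the tree path from $u$ up to $a$, followed by $e$, followed by the tree path from $v$ up to $a$; in particular the number of its edges equals one plus the number of tree edges on the $u$-to-$a$ path plus the number on the $v$-to-$a$ path.

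First I would, as a preprocessing step, equip every tree $T_i$ with parent pointers and with an integer depth label (the number of edges on the path to $v_i$) at each of its vertices, together with an array giving constant-time access from a vertex of $G$ to its node in $T_i$. Each of these can be produced by a single traversal of $T_i$, so the total cost is linear in the combined size of the trees, which is $O(n^{3/2})$ by Theorem~\ref{Thm:MainRes}; this stays within the stated preprocessing bounds. (Alternatively, depths can be recorded as a by-product of the shortest path tree computations that build the $T_i$.)

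Next, for each triple $(i,e,w)$ with $e=(u,v)$, I would extract the corresponding cycle as follows. Using the depth labels, first walk the deeper of $u,v$ up toward the root until both endpoints are at equal depth, emitting each traversed tree edge; then advance both pointers one parent step at a time in lockstep, again emitting the traversed edges, until the two pointers coincide at the lowest common ancestor $a$; finally emit $e$. The edges emitted for this triple are precisely the edges of its cycle, so the work spent on it is proportional to the length of that cycle, and summing over all triples gives $O(C)$ total time; storing the emitted cycles takes $O(C)$ space. Combined with Theorem~\ref{Thm:MainRes}, the whole computation runs in $O(n^{3/2}\log n + C)$ time and $O(n^{3/2}+C)$ space, which is the claimed bound.

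The only point requiring care is keeping the per-cycle extraction output-sensitive: a naive walk from the root down to each endpoint would cost $\Theta(\operatorname{depth}(T_i))$ per cycle, which can far exceed the cycle's own length and inflate the bound beyond $O(C)$. Equalizing depths and then walking upward simultaneously from both endpoints avoids this, since the number of parent steps taken is then one less than the cycle length. No stronger machinery (such as a precomputed lowest-common-ancestor structure) is needed, which is convenient because everything other than the output itself must still respect the $O(n^{3/2})$ space bound.
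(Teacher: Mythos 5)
Your proposal is correct and fills in exactly the routine details that the paper's one-line proof (``Follows immediately from Theorem~\ref{Thm:MainRes}'') leaves implicit: store parent pointers and depths on the trees $T_1,\ldots,T_k$ (linear in their combined $O(n^{3/2})$ size), then for each triple $(i,e,w)$ recover the cycle by walking up from both endpoints of $e$ to their lowest common ancestor, which costs time proportional to the cycle's length. Your explicit observation that a root-to-endpoint traversal would \emph{not} be output-sensitive (its cost is $\Theta(\mathrm{depth})$, which can vastly exceed the cycle length) and that the depth-equalizing bidirectional walk-up fixes this is precisely the point that makes the $O(C)$ extraction bound hold, so this is the intended argument rather than a different route.
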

\begin{proof}
Follows immediately from Theorem~\ref{Thm:MainRes}.
\end{proof}
A stronger result holds when the graph is unweighted:
\begin{Cor}\label{Cor:MCBUnweighted}
A minimum cycle basis of an $n$-vertex planar undirected, unweighted graph can be computed in
$O(n^{3/2}\log n)$ time and $O(n^{3/2})$ space.
\end{Cor}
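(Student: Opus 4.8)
The plan is to obtain the corollary directly from Corollary~\ref{Cor:OutputSensitiveAlgo} by showing that for an unweighted graph the parameter $C$ there is $O(n^{3/2})$. Since every edge of an unweighted graph has weight $1$, the weight of any cycle equals its length, so $C$ is simultaneously the total length and the total weight of the returned minimum cycle basis. Hence it suffices to prove that the GMCB $\mathcal{B}$ of an unweighted $n$-vertex planar graph has total length $O(n^{3/2})$; feeding this into Corollary~\ref{Cor:OutputSensitiveAlgo} collapses the bounds $O(n^{3/2}\log n+C)$ and $O(n^{3/2}+C)$ to $O(n^{3/2}\log n)$ and $O(n^{3/2})$.

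To bound $C$ I would run the analysis along the divide-and-conquer decomposition of Section~\ref{sec:DivideConquer} (keeping the perturbation of Section~\ref{sec:LexShort} so that shortest paths are unique). Let $L(n)$ be the maximum total length of a GMCB over unweighted plane graphs on $n$ vertices. By Lemma~\ref{Lem:DivideConquer} together with the greedy characterisation in Lemma~\ref{Lem:FaceSep}, each cycle of $\mathcal{B}$ either contains no vertex of $V_{\mathcal J}$ — in which case it is a minimum-weight separating cycle inside one of $G_1,G_2$ and therefore belongs to $\mathcal{B}_i'$ — or it passes through some boundary vertex $v\in V_{\mathcal J}$. The first group contributes total length at most $L(n_1)+L(n_2)$, where $n_1,n_2\le 2n/3$ and $n_1+n_2=n+O(\sqrt n)$. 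For the second group, let $\Lambda(v)$ be the total length of cycles of $\mathcal{B}$ through $v$; then this group contributes at most $\sum_{v\in V_{\mathcal J}}\Lambda(v)$. If one can show $\Lambda(v)=O(n)$ for each $v$, this is $O(\sqrt n)\cdot O(n)=O(n^{3/2})$, giving $L(n)\le L(n_1)+L(n_2)+O(n^{3/2})$. This recurrence solves to $L(n)=O(n^{3/2})$: at a subproblem of size $s$ the extra term is $O(s^{3/2})$, and over one level of the recursion $\sum_i s_i^{3/2}\le(\max_i s_i)^{1/2}\sum_i s_i=O((2/3)^{d/2}n^{3/2})$, which is geometric in the depth $d$.

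The main obstacle is the bound $\Lambda(v)=O(n)$, and this is where unweightedness is essential. By Lemma~\ref{Lem:GMCBIsoNested} the cycles of $\mathcal{B}$ through $v$ form a nested isometric family. With unique shortest paths, an isometric cycle through $v$ is exactly a Horton cycle $C(v,e)$ rooted at $v$: walking around it from $v$, the two arcs to the vertices of the cycle farthest from $v$ have distances to $v$ increasing by exactly $1$ at each step, hence are shortest paths from $v$ and thus branches of $T(v)$ meeting only at $v$ (uniqueness), and the remaining single edge $e$ closes them up. So the cycles of $\mathcal{B}$ through $v$ are a laminar family of ``wedges'' of $T(v)$ anchored at $v$, each bounded by two such tree branches plus one non-tree edge. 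Along a maximal nested chain of these wedges, the part of each cycle not shared with the next-inner one lies in the open annular region between the two, and these annuli together with the innermost disc partition the outermost wedge; charging the length of each cycle to the region it newly encloses — and using that the wedges are laminar, so that the charging argument of Lemmas~\ref{Lem:Merge}/\ref{Lem:Split}-type telescopes — one gets $\Lambda(v)=O(n)$. (The analogous claim fails for weighted graphs: in the family $G_n$ of Section~\ref{sec:LowerBound} the weight-$0$ paths create a nested isometric family of quadratic total length all passing through the single vertex $v_1$, which is exactly the phenomenon that is impossible once all weights are $1$.) With $C=L(n)=O(n^{3/2})$, Corollary~\ref{Cor:OutputSensitiveAlgo} yields the stated $O(n^{3/2}\log n)$ time and $O(n^{3/2})$ space, proving Corollary~\ref{Cor:MCBUnweighted}.
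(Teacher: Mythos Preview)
Your approach is far more complicated than necessary, and the central step is not adequately justified.

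The paper's proof is a one-liner: the boundaries of the internal elementary faces of a plane graph form a cycle basis, and their total length is $\sum_f |f| = 2m = O(n)$. Since the graph is unweighted, weight equals length, so any \emph{minimum} cycle basis has total length at most that of this particular basis, i.e.\ $C=O(n)$. Plugging $C=O(n)$ into Corollary~\ref{Cor:OutputSensitiveAlgo} immediately gives $O(n^{3/2}\log n)$ time and $O(n^{3/2})$ space. You have entirely missed this observation and instead aim only for the weaker bound $C=O(n^{3/2})$ via the recursion.

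Even granting that the weaker bound would suffice, your argument for $\Lambda(v)=O(n)$ has a real gap. You propose to ``charge the length of each cycle to the region it newly encloses'' and claim this telescopes in the style of Lemmas~\ref{Lem:Merge}/\ref{Lem:Split}. But those lemmas bound a sum of \emph{minimum}-side costs across a sequence of splits/merges; here the length $|C_i|$ of a cycle in a nested chain need not be bounded by the size of the annulus between $C_i$ and its neighbour. Concretely, two consecutive nested Horton cycles $C(v,e)$ and $C(v,e')$ can share almost their entire length along the two $T(v)$-branches while the annulus between them contains only $O(1)$ faces, so the charge to the annulus does not cover $|C_i|$. Your sketch gives no mechanism to account for this shared length, and nothing in the laminarity of the family prevents a chain of $\Theta(n)$ nested cycles each of length $\Theta(\sqrt n)$ from passing through $v$. (That this does \emph{not} happen is true, but it follows from the face-basis bound above, which makes the recursive argument circular.)

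In short: drop the recursion and the $\Lambda(v)$ analysis; the face cycle basis gives $C=O(n)$ directly.
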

\begin{proof}
Let $G$ be an $n$-vertex planar, undirected, unweighted graph. The internal elementary faces of $G$ define a cycle basis of
of $G$ of total length $O(n)$. Hence, since $G$ is unweighted, an MCB of $G$ has total length $O(n)$. The result now follows from
Corollary~\ref{Cor:OutputSensitiveAlgo}.
\end{proof}
Since the all-pairs min cut problem is dual equivalent to the MCB problem for planar graphs, we also get the following two
results.
\begin{Cor}\label{Cor:APMCWeighted}
All-pairs min cuts of an $n$-vertex planar, undirected graph with non-negative edge weights can be computed in
$O(n^{3/2}\log n + C)$ time and $O(n^{3/2} + C)$ space, where $C$ is the total length of the cuts.
\end{Cor}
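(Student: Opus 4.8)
The plan is to reduce the all-pairs min cut problem on $G$ to the minimum cycle basis problem on the planar dual $G^\ast$ and then invoke Corollary~\ref{Cor:OutputSensitiveAlgo}. Recall from~\cite{MCBMinCutPlanar} that the two problems are dual equivalent. Concretely, a set of edges of $G$ is a minimal cut precisely when the corresponding set of dual edges is a simple cycle of $G^\ast$, and, since $(G^\ast)^\ast = G$, the elementary faces of $G^\ast$ are exactly the vertices of $G$. Hence a cycle $C$ of $G^\ast$ separates two elementary faces of $G^\ast$ if and only if the cut of $G$ dual to $C$ separates the two corresponding vertices of $G$. By Lemma~\ref{Lem:FaceSep}, an MCB of $G^\ast$ thus contains, for every pair of vertices $s,t$ of $G$, a minimum-weight cut of $G$ separating $s$ and $t$, and minimality of the cycle basis translates into minimality of this collection of cuts. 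The dual weights are the primal weights and hence non-negative, and $G^\ast$ has $O(n)$ vertices, edges, and faces by planarity, so $G^\ast$ is a legal input to our algorithm.

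First I would construct $G^\ast$ from $G$ in $O(n)$ time. Then I would run the output-sensitive algorithm of Corollary~\ref{Cor:OutputSensitiveAlgo} on $G^\ast$, producing an explicit MCB of $G^\ast$ in $O(n^{3/2}\log n + C')$ time and $O(n^{3/2}+C')$ space, where $C'$ is the total length of the cycles in that basis. Finally I would convert each basis cycle into the cut of $G$ dual to it --- the set of primal edges dual to the edges of the cycle --- in time linear in the length of that cycle, for $O(C')$ additional time and space overall. Because a cycle of $G^\ast$ of length $\ell$ maps to a cut of $G$ with exactly $\ell$ edges, the total size $C$ of the reported cuts satisfies $C = \Theta(C')$; substituting $C' = \Theta(C)$ into the bounds above gives the claimed $O(n^{3/2}\log n + C)$ time and $O(n^{3/2}+C)$ space.

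Since all the algorithmic weight lies in Corollary~\ref{Cor:OutputSensitiveAlgo}, the only thing to check is that the linear-time duality transformation of~\cite{MCBMinCutPlanar} is compatible with the output-sensitive setting: unfolding the implicit MCB representation of $G^\ast$ from Theorem~\ref{Thm:MainRes} into explicit cycles, and then each explicit cycle into an explicit cut of $G$, is carried out edge by edge and so contributes only $O(C)$ beyond the $O(n^{3/2}\log n)$ already spent. The remaining points are routine bookkeeping, handled exactly as in~\cite{MCBMinCutPlanar}: a disconnected $G$ is treated component by component, and degenerate features such as bridges of $G$ (which become loops of $G^\ast$) are pre-extracted as the trivial single-edge cuts they induce, neither of which affects the asymptotics.
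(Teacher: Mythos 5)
Your overall strategy matches the paper's: reduce the APMCP on $G$ to the MCBP on the planar dual $G^\ast$, invoke Corollary~\ref{Cor:OutputSensitiveAlgo}, and map the basis cycles back to cuts in output-linear time. However, there is a concrete gap in your claim that ``$G^\ast$ is a legal input to our algorithm.'' By the paper's own definition in Section~\ref{sec:DefsNotRes}, $G^\ast$ is a \emph{multigraph}, whereas the entire machinery of Sections~\ref{sec:GreedyAlgo}--\ref{sec:DivideConquer} is developed for a plane, \emph{straight-line embedded} (hence simple) graph. You pre-handle loops of $G^\ast$ (which arise from bridges of $G$), but you do not address parallel edges, and these are essentially unavoidable: for instance, the dual of a simple triangle already consists of two vertices joined by three parallel edges. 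So $G^\ast$ is not in general a valid input, and passing it to Corollary~\ref{Cor:OutputSensitiveAlgo} as you propose is not justified.

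The paper closes exactly this hole: it explicitly observes that one ``cannot immediately solve the MCBP for $G^\ast$ since this is a multigraph,'' and then replaces each loop $(u,u)$ by a two-edge path $(u,v),(v,u)$ through a new vertex $v$ with weights summing to the original, and treats parallel edges similarly, obtaining a simple planar graph $G'$ with $O(n)$ vertices and edges. It runs the output-sensitive MCB algorithm on $G'$ and then contracts the subdivision vertices back out, which takes time proportional to the total length of the cycles in the basis and only changes $C$ by a constant factor. Your argument needs the analogous preprocessing step for parallel edges (your bridge-extraction trick alone does not suffice); with that in place, the rest of your reduction and the substitution $C' = \Theta(C)$ go through as you describe.
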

\begin{proof}
Let $G$ be an $n$-vertex planar, undirected graph with non-negative edge weights. As shown in~\cite{MCBMinCutPlanar},
if $G$ is connected, we can solve the APMCP for $G$ by solving the MCBP for the dual $G^\ast$ of $G$.

We may assume that $G$ is connected since otherwise, we can consider each connected component separately. We cannot
immediately solve the MCBP for $G^\ast$ since this is a multigraph. But we can avoid an edge of the form $(u,u)$ by
splitting it into two edges $(u,v)$ and $(v,u)$ whose sum of weights equal the weight of $(u,u)$. And we can
avoid multi-edges in a similar way. Let $G'$ be the resulting planar graph. It is easy to see that $G'$ has size $O(n)$.
Furthermore, an MCB $\mathcal B$ of $G'$ can be transformed into an MCB of $G^\ast$ in time proportional to the total size
of cycles in $\mathcal B$. The result now follows from Corollary~\ref{Cor:OutputSensitiveAlgo}.
\end{proof}
\begin{Cor}
All-pairs min cuts of an $n$-vertex planar, undirected, unweighted graph can be computed in
$O(n^{3/2}\log n)$ time and $O(n^{3/2})$ space.
\end{Cor}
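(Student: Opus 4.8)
The plan is to derive the unweighted bound from the weighted one in Corollary~\ref{Cor:APMCWeighted}, which already gives $O(n^{3/2}\log n + C)$ time and $O(n^{3/2}+C)$ space for the all-pairs min cut problem, where $C$ is the total length of the reported cuts. It therefore suffices to show that $C = O(n)$ whenever $G$ is unweighted; the claimed bounds then follow at once. As in the proof of Corollary~\ref{Cor:APMCWeighted}, I would first reduce to connected $G$ by treating components separately, and recall that solving the APMCP for a connected $G$ reduces, via the duality of~\cite{MCBMinCutPlanar}, to computing an MCB of the dual $G^\ast$: each cut in the output collection consists of exactly the edges of the corresponding cycle in this MCB (the edge correspondence between $G$ and $G^\ast$ is a bijection), so $C$ equals the total length of the cycles in an MCB of $G^\ast$.

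The next step is to bound this quantity. Since $G$ is unweighted, every edge of $G^\ast$ has weight $1$, so an MCB of $G^\ast$ is exactly a cycle basis of $G^\ast$ of minimum total \emph{length}. I would then exhibit one explicit cycle basis and bound its length: the boundaries of the bounded elementary faces of the plane graph $G^\ast$. Because $(G^\ast)^\ast = G$ for connected $G$, these faces are in one-to-one correspondence with all but one of the vertices of $G$, the boundary of the face corresponding to $v\in V$ being a closed walk of length $\deg_G(v)$, and they span the cycle space of $G^\ast$. Hence the total length of an MCB of $G^\ast$ is at most $\sum_{v\in V}\deg_G(v) = 2|E| = O(n)$, so $C = O(n)$.

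One technical point deserves care: the algorithm behind Corollary~\ref{Cor:APMCWeighted} does not run on $G^\ast$ directly but on a simple plane graph $G'$ of size $O(n)$ obtained from $G^\ast$ by splitting self-loops and parallel edges, and this destroys the property that all edge weights equal $1$. I would dispose of this in one of two equivalent ways. Either note that the splitting only ever subdivides an existing weight, so every edge weight of $G'$ is at most $1$ and at least a fixed positive constant $c_0$ (e.g. $c_0 = 1/4$); the bounded faces of $G'$ still form a cycle basis, of total weight $O(|E_{G'}|) = O(n)$, so an MCB of $G'$ has total weight $O(n)$ and hence total length at most $c_0^{-1}\cdot O(n) = O(n)$. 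Or, more simply, observe that transforming an MCB of $G'$ back into one of $G^\ast$ only contracts the subdivided paths and so never increases total length, which reduces the bound on $C$ to the face-basis estimate of the previous paragraph. Either way the extra work is routine; the only genuine content is the $O(n)$ face-basis bound, which is also the step I expect to be the (mild) crux, since it is the single place where unweightedness is actually used. Plugging $C = O(n)$ into Corollary~\ref{Cor:APMCWeighted} and summing over components gives the stated $O(n^{3/2}\log n)$ time and $O(n^{3/2})$ space bounds.
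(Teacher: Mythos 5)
Your proposal is correct and follows essentially the same route as the paper, which proves this corollary by combining Corollary~\ref{Cor:APMCWeighted} with the $O(n)$ face-basis length bound that underlies Corollary~\ref{Cor:MCBUnweighted}, here applied to the dual. Your extra care about the subdivision step (passing from the multigraph $G^\ast$ to the simple graph $G'$) is a reasonable elaboration of a detail the paper leaves implicit, but it does not change the argument.
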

\begin{proof}
This result is easily obtained by combining the proofs of Corollary~\ref{Cor:MCBUnweighted} and Corollary~\ref{Cor:APMCWeighted}.
\end{proof}
Next, we present our subquadratic time and space algorithm for finding the weight vector of a planar graph.
\begin{Cor}
The weight vector of an $n$-vertex planar, undirected graph with non-negative edge weights can be computed in
$O(n^{3/2}\log n)$ time and $O(n^{3/2})$ space.
\end{Cor}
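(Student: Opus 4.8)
The plan is to observe that the weight vector, unlike an explicit minimum cycle basis, has size only $O(n)$, and that Theorem~\ref{Thm:MainRes} already hands us the required weights explicitly even though it represents the cycles themselves only implicitly. So the corollary is essentially immediate from the main result.

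First I would invoke Theorem~\ref{Thm:MainRes} to compute, in $O(n^{3/2}\log n)$ time and $O(n^{3/2})$ space, the implicit representation of the GMCB $\mathcal B$ of $G$. As usual we may assume $G$ is connected, treating connected components separately and concatenating the resulting weight lists; since $G$ is planar it has $O(n)$ edges, so $\mathcal B$ consists of $m-n+1 = O(n)$ cycles. By item~2 of Theorem~\ref{Thm:MainRes}, the representation includes, for each cycle of $\mathcal B$, a triple $(i,e,w)$ whose third coordinate $w$ is exactly the weight of that cycle. Hence, at no additional asymptotic cost, we already possess the list of the $O(n)$ cycle weights of $\mathcal B$.

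Next I would sort these $O(n)$ weights in non-decreasing order, which takes $O(n\log n)$ time and $O(n)$ space, dominated by the cost of Theorem~\ref{Thm:MainRes}. The resulting sorted list is the weight vector of $\mathcal B$. Because the cycle space of $G$ over $\mathit{GF}(2)$ carries a matroid structure and the GMCB is a minimum-weight basis of this matroid (Section~\ref{sec:GreedyAlgo}), every minimum cycle basis of $G$ has the same multiset of cycle weights; therefore the sorted list we obtained is the weight vector of $G$, independent of which MCB one chooses. The overall time is $O(n^{3/2}\log n) + O(n\log n) = O(n^{3/2}\log n)$ and the space is $O(n^{3/2})$.

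I do not expect any real obstacle here beyond Theorem~\ref{Thm:MainRes} itself: the whole difficulty was obtaining an implicit MCB in subquadratic time, and once that is in hand the weight vector falls out by a single pass over the $O(n)$ triples plus a sort. The only minor point to state carefully is the well-definedness of the weight vector, which, as noted above, follows from the fact that all minimum-weight bases of a matroid have the same multiset of element weights, so that the GMCB's weight vector coincides with that of any MCB of $G$.
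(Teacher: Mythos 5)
Your proof is correct and follows essentially the same route as the paper: invoke Theorem~\ref{Thm:MainRes}, read off the $O(n)$ cycle weights (indeed item~2 of the theorem already stores them explicitly), and sort in $O(n\log n)$ time. Your added remark on well-definedness of the weight vector via the matroid structure is a nice bit of extra care that the paper leaves implicit.
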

\begin{proof}
From Theorem~\ref{Thm:MainRes}, we obtain an implicit representation of the GMCB $\mathcal B$ for the input graph. We then compute
the weights of all cycles in $\mathcal B$ using linear time and space. Sorting them takes $O(n\log n)$ time.
This gives the weight vector of the input graph in a total of $O(n^{3/2}\log n)$ time and $O(n^{3/2})$ space.
\end{proof}
From Theorem~\ref{Thm:MainRes}, we also obtain a faster algorithm for computing a Gomory-Hu tree of a planar graph.
\begin{Cor}\label{Cor:GomoryHu}
A Gomory-Hu tree of an $n$-vertex connected, planar, undirected graph with non-negative edge weights can be computed in
$O(n^{3/2}\log n)$ time and $O(n^{3/2})$ space.
\end{Cor}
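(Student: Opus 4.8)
The plan is to obtain the Gomory-Hu tree essentially for free from the region tree produced by Theorem~\ref{Thm:MainRes}, exploiting the dual equivalence between cuts of a planar graph and cycles of its dual used already in the proof of Corollary~\ref{Cor:APMCWeighted}. Recall from \cite{MCBMinCutPlanar} that for a connected plane graph $G$ the weight of a minimum $s$-$t$ cut equals the weight of a minimum-weight simple cycle of the dual $G^\ast$ separating the elementary faces of $G^\ast$ that correspond to $s$ and $t$, and that such a cycle, read back in $G$, is a minimal edge cut separating $s$ from $t$. So the first step is, exactly as in Corollary~\ref{Cor:APMCWeighted}, to remove self-loops and multi-edges from $G^\ast$ by subdividing, producing a simple plane graph $G'$ of size $O(n)$ whose GMCB transforms back into a GMCB $\mathcal B$ of $G^\ast$ and all of whose computed structures (in particular the region tree, since subdivisions do not change faces or the nesting of isometric cycles) transfer to $G^\ast$. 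Applying Theorem~\ref{Thm:MainRes} to $G'$ then gives, in $O(n^{3/2}\log n)$ time and $O(n^{3/2})$ space, the implicit representation of $\mathcal B$: the region tree $\mathcal T(\mathcal B)$ with each region pointing to its unique elementary face and each internal region $R(C,\mathcal B)$ pointing to the triple representing $C$ together with the weight of $C$.

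The second step is to observe that $\mathcal T(\mathcal B)$ already \emph{is} a Gomory-Hu tree of $G$ after a relabelling. Since $\mathcal B$ is the GMCB of $G^\ast$, Lemma~\ref{Lem:FaceSep} shows every pair of elementary faces of $G^\ast$ is separated by some cycle of $\mathcal B$; hence no region of $\mathcal T(\mathcal B)$ contains two elementary faces of $G^\ast$. As $\mathcal B$ has $n-1$ cycles, $\mathcal T(\mathcal B)$ has $n$ regions, and $G^\ast$ has exactly $n$ elementary faces (one per vertex of $G$), so each region contains exactly one such face; I identify each region with the corresponding vertex of $G$. This turns $\mathcal T(\mathcal B)$ into a tree $T$ spanning $V(G)$ whose $n-1$ edges are in bijection with the cycles of $\mathcal B$ (the edge from $R(C,\mathcal B)$ to its parent corresponds to $C$), hence with $n-1$ minimal cuts of $G$; I weight each edge of $T$ by the weight of its cycle, which by the dual correspondence is the weight of the corresponding cut.

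The third step is to verify the two defining properties of a Gomory-Hu tree. Property~2 is immediate from the construction: deleting from $T$ the edge for a cycle $C$ splits $V(G)$ into the vertices lying in $\into{C}$ and those lying in $\exto{C}$, and the edge set separating these two vertex sets in $G$ is exactly the cut realized by $C$, whose weight is the assigned edge weight. For Property~1, I use that $\mathcal B$ is nested (Lemma~\ref{Lem:GMCBIsoNested}): a cycle $C\in\mathcal B$ separates vertices $s$ and $t$ if and only if exactly one of $s,t$ lies in $\into{C}$, which holds if and only if the edge of $T$ for $C$ lies on the simple $s$-$t$ path $P$ in $T$. Therefore $\min_{e\in P} w(e)$ equals the smallest weight of a cut in $\mathcal B$ separating $s$ and $t$; any such cut is in particular an $s$-$t$ cut, and by Lemma~\ref{Lem:FaceSep} applied to $G^\ast$ together with the dual correspondence $\mathcal B$ contains a \emph{minimum}-weight cut separating $s$ and $t$, so this minimum equals the weight of a minimum $s$-$t$ cut of $G$, establishing Property~1. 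Finally, producing $T$ from the output of Theorem~\ref{Thm:MainRes} costs only one traversal of $\mathcal T(\mathcal B)$ to relabel regions by the corresponding vertices of $G$ (undoing the subdivisions of $G'$) and to attach the cycle weights stored in the triples, i.e.\ $O(n)$ extra time and space, so the bound $O(n^{3/2}\log n)$ time and $O(n^{3/2})$ space follows.

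The only step I expect to require care is the first one: checking that the detour through the simple graph $G'$ does not corrupt the correspondence I rely on afterwards, namely that the elementary faces of $G'$ (equivalently $G^\ast$) are in natural bijection with the vertices of $G$ and that the region tree of the GMCB of $G'$ coincides with that of the GMCB of $G^\ast$. Everything past that point is a direct reinterpretation of the already-computed region tree, using only nestedness (Lemma~\ref{Lem:GMCBIsoNested}) and the separation guarantee of Lemma~\ref{Lem:FaceSep}.
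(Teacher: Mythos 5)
Your proof is correct, and it takes a genuinely different route from the paper's. The paper runs the classic Gomory--Hu construction of \cite{Vazirani}: it maintains a shrinking partition $S_1,\ldots,S_t$ of $V(G)$, repeatedly picks a cycle $C\in\mathcal B$, uses the corresponding dual cut as the minimum $u$-$v$ cut for that step, and argues (via the observation that pairs in $\intc{C}$ resp.\ $\extc{C}$ are separated by descendant resp.\ non-descendant cycles of $C$ in $\mathcal T(\mathcal B)$) that this recursion can be driven by splitting $\mathcal T(\mathcal B)$ at each step in $O(1)$ amortized time, for $O(n)$ total work beyond Theorem~\ref{Thm:MainRes}. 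You instead cut out the recursive simulation entirely: you observe that the relabelled region tree $\mathcal T(\mathcal B)$ \emph{is} a Gomory--Hu tree, and you verify the two defining properties directly, using nestedness (Lemma~\ref{Lem:GMCBIsoNested}) to show that the $T$-edge for $C$ lies on the $s$-$t$ path in $T$ exactly when $C$ separates the two corresponding faces, and Lemma~\ref{Lem:FaceSep} together with planar cut-cycle duality to show that the lightest such edge has the weight of a minimum $s$-$t$ cut. The two approaches land on the same object by the same machinery (Theorem~\ref{Thm:MainRes} applied in the dual and Lemma~\ref{Lem:FaceSep}), and give identical bounds, but yours is structurally cleaner --- it records the characterization ``region tree $=$ Gomory--Hu tree'' explicitly rather than recovering the tree by running the Gomory--Hu iteration. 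You are also more careful than the paper on one small point: the paper applies Theorem~\ref{Thm:MainRes} directly to the multigraph $G^\ast$ without noting the detour through a simple graph $G'$, whereas you spell out that subdividing self-loops and parallel edges preserves the face structure and hence the region tree, exactly as done in the proof of Corollary~\ref{Cor:APMCWeighted}.
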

\begin{proof}
The following algorithm constructs a Gomory-Hu tree for $G$~\cite{Vazirani}: a tree $T$ spanning a collection of vertex sets
$S_1,\ldots,S_t$
is maintained, starting with $S_1 = V$. At each step, a set $S_i$ is picked such that $|S_i| > 1$ and any two distinct
vertices $u,v\in S_i$ are chosen. Set $S_i$ is then regarded as the root of $T$ and each subtree of $T$, i.e., each tree in
$T\setminus\{S_i\}$, is collapsed into a single supernode. A min $u$-$v$ cut in the resulting graph is
found, partitioning $V$ into two subsets, $V_1$ and $V_2$, where $u\in V_1$ and $v\in V_2$. Tree $T$ is then modified by
splitting $S_i$ into two vertices, $S_{i_1}$ and $S_{i_2}$, where $S_{i_1} = S_i\cap V_1$ and $S_{i_2} = S_i\cap V_2$. The
two vertices are connected by a new edge whose weight equals the size of the min cut found. Finally, each subtree of the
old $T$ is connected to $S_{i_1}$ if the corresponding supernode was in the same partition as $u$ in the cut. Otherwise,
the subtree is connected to $S_{i_2}$.

Let us show how to implement this algorithm to obtain the desired time and space bounds.
We first apply Theorem~\ref{Thm:MainRes} to the dual $G^{\ast}$ of $G$, giving an implicit representation of the GMCB of
$G^\ast$. By Lemma~\ref{Lem:FaceSep}, each cycle $C$ in this basis is a minimum-weight cycle that separates some pair of
faces $f_1$ and $f_2$ in $G^\ast$. Let $u_1$ and $u_2$ be the vertices of $G$ corresponding to $f_1$ and $f_2$, respectively.
By duality of the GMCBP and the APMCP~\cite{MCBMinCutPlanar}, the edges of $C$ are the edges of a min $u_1$-$v_1$ cut in $G$
of weight equal to the weight of $C$.

Now, pick any $C$ cycle in the GMCB of $G^\ast$. As the initial min cut in the Gomory-Hu tree algorithm, we
pick the one corresponding to $C$. This separates the initial set $S_i = S_1 = V$ into two sets $S_{i_1}$ and
$S_{i_2}$, where $S_{i_1}$ is the set of vertices of $G$ corresponding to faces of $G^{\ast}$ in $\intc{C}$ and $S_{i_2}$ is the
set of vertices of $G$ corresponding to faces of $G^{\ast}$ in $\extc{C}$. Now, $T$ consists of vertices $S_{i_1}$ and $S_{i_2}$
and an edge $(S_{i_1},S_{i_2})$. The weight of this edge is equal to the weight of $C$. Since we are given the weight of $C$ from
Theorem~\ref{Thm:MainRes}, we can this obtain the weight of edge $(S_{i_1},S_{i_2})$ in constant time.

Note that for each pair of vertices $u$ and $v$ in $S_{i_1}$,
there is a min $u$-$v$ cut defined by a cycle of $\mathcal B$ which is a descendant of $C$ in $\mathcal T(\mathcal B)$. And for
each pair of vertices $u$ and $v$ in $S_{i_2}$, there is a min $u$-$v$ cut defined by a cycle of $\mathcal B$ which is a
non-descendant of $C$ in $\mathcal T(\mathcal B)$.

Hence, we have separated our problem in two and we can recursively compute the Gomory-Hu tree for $G$ by splitting
region tree $\mathcal T(\mathcal B)$ in two at each recursive step.
The recursion stops when we obtain a set $S_i$ of size one. At this point, we obtain the elementary face $f$ of $G^{\ast}$
correponding to the vertex in $S_i$ using part four of Theorem~\ref{Thm:MainRes}.
The vertex of $G$ corresponding to $f$ in $G^{\ast}$ is then the unique vertex in $S_i$.

Let us analyze the running time of this algorithm. Applying Theorem~\ref{Thm:MainRes} takes $O(n^{3/2}\log n)$ time and
$O(n^{3/2})$ space. Note that in the algorithm above, we do not need to compute the vertices in the $S_i$-sets until they
have size one. So each step of the algorithm, where the current $S_i$-set has size greater than one, can be implemented to
run in constant time. And we can also execute a step where $|S_i| = 1$ in constant time using the fourth part of
Theorem~\ref{Thm:MainRes} to find the vertex in $S_i$.

Since the GMCB of $G^{\ast}$ contains $O(n)$ cycles, it follows that the algorithm runs in linear time and space, in addition
to the time and space in Theorem~\ref{Thm:MainRes}.
\end{proof}
Finally, we present our oracle for answering min cut queries.
\begin{Cor}\label{Cor:MinCutOracle}
Let $G$ be an $n$-vertex planar, undirected graph with non-negative edge weights. With $O(n^{3/2}\log n)$ time
and $O(n^{3/2})$ space for preprocessing, the weight of a min cut between any two given vertices of $G$ can be
reported in constant time. The cut itself can be reported in time proportional to its size.
\end{Cor}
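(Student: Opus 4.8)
The plan is to build directly on Corollary~\ref{Cor:GomoryHu}. In the preprocessing phase I would first run the Gomory-Hu algorithm of that corollary to obtain, in $O(n^{3/2}\log n)$ time and $O(n^{3/2})$ space, a Gomory-Hu tree $T$ of $G$ (if $G$ is disconnected, build one such tree per connected component; a query for two vertices in different components then returns weight $0$). By property~$1$ in the definition of a Gomory-Hu tree, the weight of a minimum $s$-$t$ cut in $G$ equals the minimum weight of an edge on the unique $s$-$t$ path in $T$. Hence the weight query reduces to a path-minimum (bottleneck edge) query on the edge-weighted tree $T$.

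Second, I would preprocess $T$ for constant-time path-minimum queries. Root $T$ arbitrarily, build a constant-query-time LCA structure on it, and build the standard sparse table that stores, for each vertex and each $k = 0,1,\ldots,\lfloor\log n\rfloor$, its ancestor $2^k$ levels up together with the minimum edge weight on the path to that ancestor. Since $\min$ is idempotent, any vertex-to-ancestor path is covered by two overlapping segments whose lengths are powers of two, and a general $s$-$t$ path splits at $\mathrm{lca}(s,t)$ into two such ancestor paths; combining the (at most four) segment minima answers the query in $O(1)$ time. This costs $O(n\log n)$ time and $O(n\log n)$ space, well within the stated bounds. (One could instead invoke a linear-time path-minimum data structure, but the $O(n\log n)$ version already suffices.)

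For reporting the cut itself, recall from the proof of Corollary~\ref{Cor:GomoryHu} that every edge $e$ of $T$ is created from a minimum cut of $G$ that is dual-equivalent to a cycle $C$ in the GMCB of $G^\ast$; during that construction I would store with $e$ a pointer to the triple of Theorem~\ref{Thm:MainRes} representing $C$, and I would also have the sparse-table entries carry the witnessing edge, not just its weight, so that a query returns the actual bottleneck edge $e$ on the $s$-$t$ path in $O(1)$ time. Given $s$ and $t$, after locating $e$ we expand its associated triple $(i,e',w)$ into the explicit edge set of $C$ by walking the two paths of $T_i$ from $v_i$ to the endpoints of $e'$ and adding $e'$; this takes time proportional to the length of $C$, which equals the size of the cut. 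By duality of the GMCBP and the APMCP~\cite{MCBMinCutPlanar}, these are precisely the edges of a minimum $s$-$t$ cut in $G$ (property~$2$ of a Gomory-Hu tree guarantees that the cut encoded by the bottleneck edge is indeed a minimum $s$-$t$ cut).

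The mathematics here is essentially the textbook reduction from a min-cut oracle to path-minimum queries on a Gomory-Hu tree, so the only genuinely delicate point is bookkeeping: ensuring that throughout the recursive Gomory-Hu construction of Corollary~\ref{Cor:GomoryHu} each newly created tree edge is tagged with the GMCB cycle of $G^\ast$ (equivalently, with the corresponding vertex of the region tree $\mathcal T(\mathcal B)$) that produced it, so that cut reporting is genuinely output-sensitive. Everything else follows from Corollary~\ref{Cor:GomoryHu} and standard data-structural techniques.
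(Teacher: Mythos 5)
Your proposal is correct and reaches the same reduction as the paper -- build a Gomory--Hu tree via Corollary~\ref{Cor:GomoryHu}, reduce the weight query to a bottleneck (path-minimum) query on $T$, and report the cut by storing with each tree edge a pointer to the implicit triple of the corresponding GMCB cycle of $G^\ast$ -- but you implement the bottleneck-query data structure differently. The paper builds a bespoke $\sqrt n$-decomposition of $T$: it recursively splits $T$ with balanced vertex separators, stops at depth $\log\sqrt n$ to obtain $O(\sqrt n)$ edge-disjoint subtrees of size $O(\sqrt n)$, precomputes for each of the $O(\sqrt n)$ boundary vertices the path minimum to every vertex of $T$, and precomputes for each interior vertex a per-subtree exit boundary vertex and local path minima, for $O(n^{3/2})$ preprocessing time/space. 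You instead use the standard LCA plus binary-lifting sparse table (each vertex stores its $2^k$-th ancestor and the minimum edge on that segment; idempotence of $\min$ lets two overlapping power-of-two segments cover any vertex-to-ancestor path), giving $O(n\log n)$ time/space. Your route is simpler, more standard, and asymptotically lighter on the query structure; the paper's is self-contained but heavier. Both comfortably fit the $O(n^{3/2}\log n)$/$O(n^{3/2})$ budget since the Gomory--Hu construction dominates. One small thing worth making explicit in your version: to combine two power-of-two segments in $O(1)$ you must locate the start of the second segment (the $(d-2^k)$-th ancestor of the deeper endpoint) in $O(1)$, which requires a level-ancestor structure (or an equivalent trick) in addition to the jump table; this is standard ($O(n)$ extra preprocessing) but should be stated. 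The cut-reporting bookkeeping you describe -- tagging each Gomory--Hu tree edge with the region-tree vertex / triple that produced it and expanding the triple by walking the two tree paths -- matches the paper's proof exactly.
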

\begin{proof}
We may assume that $G$ is connected since otherwise, we can consider each connected component separately.
We first construct a Gomory-Hu tree $T$ of $G$. By Corollary~\ref{Cor:GomoryHu}, this can be done in $O(n^{3/2}\log n)$ time
and $O(n^{3/2})$ space. By definition of Gomory-Hu trees, answering the query for the weight of a min cut between two vertices
$u$ and $v$ of $G$ reduces to answering the query for the minimum weight of an edge on the simple path between $u$ and $v$
in $T$.

It is well-known that any tree with $m$ vertices has a vertex $v$ such that the tree can be split into two subtrees, each rooted at
$v$ and each containing between $m/4$ and $3m/4$ vertices. Furthermore, this separator can be found in linear time.

We find such a separator in $T$ and recurse on the two subtrees. We stop the recursion at level $\log(\sqrt n)$. The total
time for this is $O(n\log n)$.

Let $\mathcal S$ be the subtrees at level $\log(\sqrt n)$. We observe that these trees are edge-disjoint and their union is
$T$. Furthermore, $|\mathcal S| = O(\sqrt n)$ and each subtree has size $O(\sqrt n)$. The \emph{boundary vertices}
of a subtree $S\in\mathcal S$ are the vertices that $S$ shares with other subtrees in $\mathcal S$. Vertices of $S$ that are
not boundary vertices are called \emph{interior vertices} of $S$. We let $B$ be the set of
boundary vertices over all subtrees in $\mathcal S$. It is easy to see that $|B| = O(\sqrt n)$.

For each boundary vertex $b\in B$, we associate an array with an entry for each vertex of $T$. The entry
corresponding to a vertex $v\neq b$ contains the edge of minimum weight on the simple path in $T$ between $b$ and $v$.

Since $|B| = O(\sqrt n)$, it follows easily that we can construct all these arrays and fill in their entries in a total of
$O(n^{3/2})$ time and space. This allows us to answer queries in $T$ in constant time when one of the two vertices belongs
to $B$.

We associate each vertex $v$ of $T$ not belonging to $B$ with the unique subtree $S_v$ in $\mathcal S$ containing
$v$ as an interior vertex.

Associated with $v$ is also an array with an entry for each $S\in\mathcal S\setminus\{S_v\}$. This entry contains
the vertex $b$ of $B$ belonging to $S_v$ such that any path from $v$ to $S$ contains $b$. Note that for any other
vertex $v'$ of $S_v$, any path from $v'$ to $S$ also contains $b$. From this observation and from the fact that
$|\mathcal S| = O(\sqrt n)$, it follows that we can compute the arrays associated with interior vertices in all subtrees
using a total of $O(n^{3/2})$ time and space.

Finally, we associate with $v$ an array with an entry for each vertex $v'$ of $S_v$. This entry contains the edge
of minimum weight on the simple path in $S_v$ from $v$ to $v'$. Since $S_v$
has size $O(\sqrt n)$, the entries in this array can be computed in $O(\sqrt n)$ time. Over all interior vertices of
all subtrees of $\mathcal S$, this is $O(n^{3/2})$ time.

Now, let us describe how to answer a query for vertices $u$ and $v$ in $T$. In constant time, we find the subtrees
$S_u,S_v\in\mathcal S$ such that $u\in S_u$ and $v\in S_v$. If $S_u = S_v$ or if $u$ or $v$ belongs to $B$, we can answer
the query in constant time with the above precomputations.

Now, assume that $S_u\neq S_v$ and that $u$ and $v$ are interior vertices. We find the boundary vertex $b$ of $S_u$ such that
any path from $u$ to $R_v$ contains $b$. Let $P_1$ be the simple path in $S_u$ from $u$ to $b$ and let $P_2$ be the
simple path in $T$ from $b$ to $v$. For $i = 1,2$, the above precomputations allow us to find the least-weight
edge $e_i$ on $P_i$ in constant time. Let $e$ be the edge of smaller weight among $e_1$ and $e_2$.
Returning the weight of $e$ then answers the query in constant time.

To show the last part of the corollary, observe that when the weight of edge $e$ is output by the above algorithm, the
set of edges in the corresponding cut is defined by a cycle $C_e$ in the GMCB $\mathcal B$ of $G^\ast$. During the construction
of Gomory-Hu tree $T$ (see Corollary~\ref{Cor:GomoryHu}), we can associate $e$ with the implicit representation of $C_e$ from
Theorem~\ref{Thm:MainRes}. Hence, given $e$, we can output $C_e$ in time proportional to its size. This completes the proof.
\end{proof}

\section{Obtaining Lex-Shortest Path Trees}\label{sec:LexShort}
Let $w:E\rightarrow\mathbb R$ be the weight function on the edges of $G$. In Section~\ref{sec:GreedyAlgo}, we assumed
uniqueness of shortest path in $G$ between any two vertices w.r.t.\ $w$. We now show how to avoid this assumption.
We assume in the following that the vertices of $G$ are given indices from $1$ to $n$.

By results in~\cite{MCBMinCutPlanar}, there is another weight function $w'$ on the edges of $G$ such that for any pair
of vertices in $G$, there is a unique shortest path between them w.r.t.\ $w'$ and this path is also a shortest path
w.r.t.\ $w$. Furthermore, for two paths $P$ and $P'$ between the same pair of vertices in $G$,
$w'(P) < w'(P)$ exactly when one of the following three conditions is satisfied:
\begin{enumerate}
\item $P$ is strictly shorter than $P'$ w.r.t.\ $w$,
\item $P$ and $P'$ have the same weight w.r.t.\ $w$ and $P$ contains fewer edges than $P'$,
\item $P$ and $P'$ have the same weight w.r.t.\ $w$ and the same number of edges and the smallest index of
      vertices in $V_P\setminus V_{P'}$ is smaller than the smallest index of vertices in $V_{P'}\setminus V_P$.
\end{enumerate}
A shortest path w.r.t.\ $w'$ is called a \emph{lex-shortest path} and a shortest path tree w.r.t.\ $w'$ is called
a \emph{lex-shortest path tree}.

As shown in~\cite{MCBMinCutPlanar}, lex-shortest paths between all pairs of vertices in $G$ can be computed in
$O(n^2\log n)$ time. We need something faster. In the following, we show a stronger result, namely how to compute a
lex-shortest path tree in $O(n\log n)$ time. Since we only need to compute shortest paths from $O(\sqrt n)$ boundary
vertices, this will give a total time bound of $O(n^{3/2}\log n)$.

We also need to find lex-shortest path trees in subgraphs of $G$ when recursing and we need to compute them
w.r.t.\ the same weight function $w'$. By the above, this can be achieved simply by keeping the same indices for
vertices in all recursive calls.

Now, let $s\in V$ be given and let us show how to compute the lex-shortest path tree in $G$ with source $s$ in
$O(n\log n)$ time.

We first use a small trick from~\cite{MCBMinCutPlanar}: for function $w$, a sufficiently small $\epsilon > 0$ is added to the
weight of every edge. This allows us to disregard the second condition above. When comparing weights of paths, we
may treat $\epsilon$ symbolically so we do not need to worry about precision issues.

We will apply Dijkstra's algorithm with a few additions which we describe in the following.
We keep a queue of distance estimates w.r.t.\ $w$ as in
the standard implementation. Now, consider any point in the algorithm. Let $d$ be the distance estimate function.
Consider an unvisited vertex $v$ with current distance estimate $d[v] < \infty$ and predecessor vertex $p$.

Suppose that at this point, the algorithm extracts a vertex $p'$ from $Q$ which is adjacent to $v$ in $G$ and
suppose that $d[p'] + w(p',v) = d[v]$. The central problem is to decide whether $v$ should keep $p$ or get $p'$ as
its new predecessor. In the following, we show how to decide this in $O(\log n)$ time. This will suffice to
give an $O(n\log n)$ time algorithm that computes the lex-shortest path tree in $G$ with source $s$.

Let $P$ be the path in the partially constructed lex-shortest path tree $T$ from $s$ to $p$ followed by edge $(p,v)$.
Let $P'$ be the path in $T$ from $s$ to $p'$ followed by edge $(p',v)$. Note that $P$ and $P'$ both have weight
$d[v]$ w.r.t.\ $w$. Hence, $P$ is shorter than $P'$ w.r.t.\ $w'$ if and only if the third condition above is satisfied.
In other words, $v$ should keep $p$ as its predecessor if and only if this condition is satisfied.

Let $q$ be the lowest common ancestor of $p$ and $p'$ in $T$. Paths $P$ and $P'$ share vertices from
$s$ to $q$. Then they split up and do not meet before $v$.

Let $Q$ be the subpath of $P$ from the successor of $q$ to $p$. Let $Q'$ be the subpath of $P'$ from the successor
of $q$ to $p'$. Testing the third condition above is equivalent to deciding whether the smallest vertex index in $V_Q$
is smaller than the smallest vertex index in $V_{Q'}$.

We assume that for each vertex $u$ in $T$, we have pointers $p_0[u],\ldots,p_{k_u}[u]$ and values
$m_0[u],\ldots,m_{k_u}[u]\in\{1,\ldots,n\}$. For $i = 0,\ldots,k_u$, $p_i[u]$ points to the ancestor $a$ of $u$ in $T$
for which the number of edges from $a$ to $u$ is $2^i$. And $m_i[u]$ is the smallest vertex index on the path in $T$
from $a$ to $u$. The value of $k_u$ is defined as the largest $i$ such that $p_i[u]$ is defined. Note that
$k_u = O(\log n)$.

Since $P$ and $P'$ have the same number of edges, the same holds for $Q$ and $Q'$. From this observation, it follows
that we can apply binary search on the pointers defined above to find lowest common ancestor $q$ in $O(\log n)$ time. And
with these pointers and the $m_i$-values, we can partition $Q$ and $Q'$ into $O(\log n)$ intervals in $O(\log n)$ time and
find the smallest index in each interval in constant time per interval. Hence, we can decide whether the smallest vertex index
in $V_Q$ is smaller than the smallest vertex index in $V_{Q'}$ in logarithmic time, which gives the desired.

The only problem that remains is how to compute pointers and $m_i$-values during the course of the algorithm.
Whenever the partially constructed lex-shortest path tree is extended with a new vertex $u$,
we need to compute $p_0[u],\ldots,p_{k_u}[u]$ and $m_0[u],\ldots,m_{k_u}[u]\in\{1,\ldots,n\}$. But this can
easily be done in $O(\log n)$ time using the $p_i$-pointers and $m_i$-values for the ancestors of $u$ in $T$.

We can now conclude this section with the following theorem. Since we did not make use of planarity in this section, we
get a more general result, which we believe to be of independent interest.
\begin{theorem}
A lex-shortest path tree in an undirected graph with $m$ edges and $n$ vertices can be computed in $O((m + n)\log n)$ time.
\end{theorem}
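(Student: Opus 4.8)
The plan is to run Dijkstra's algorithm from $s$ with the usual priority queue of distance estimates with respect to $w$, augmented so that ties are broken according to the lex-order of root paths, and to show that each such tie can be resolved in $O(\log n)$ time. First I would apply the perturbation trick of~\cite{MCBMinCutPlanar}: add a sufficiently small $\epsilon>0$ to every edge weight, treated symbolically. After this, a path of $w$-weight $W$ with $\ell$ edges has perturbed weight $W+\ell\epsilon$, so two paths of equal perturbed weight necessarily have equal $W$ and equal $\ell$; this merges the first two conditions in the definition of lex-shortest paths, and any residual tie between two equal-weight, equal-length paths is decided purely by the ``smallest differing vertex index'' rule, i.e.\ the third condition.

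Second, I would run Dijkstra while maintaining the partially constructed lex-shortest path tree $T$. The only extra work arises when the extracted vertex $p'$ is adjacent to an as-yet-unvisited vertex $v$ with current tentative predecessor $p$ and $d[p']+w(p',v)=d[v]$: we must decide whether $v$ keeps $p$ or switches to $p'$. Let $P$ (resp.\ $P'$) be the $s$-to-$v$ walk through $T$ ending with edge $(p,v)$ (resp.\ $(p',v)$). By the perturbation, $P$ and $P'$ have the same weight and the same number of edges, hence $p$ and $p'$ have the same depth in $T$, and the decision reduces to comparing the two divergent subpaths hanging below the lowest common ancestor $q$ of $p$ and $p'$ and seeing which one contains the globally smaller vertex index. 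Since $p$ and $p'$ have equal depth, I would locate $q$ by binary search on level-ancestor (``jump'') pointers, then decompose each of the $q$-to-$p$ and $q$-to-$p'$ subpaths into $O(\log n)$ dyadic pieces and take the minimum of precomputed index values over those pieces, all in $O(\log n)$ time.

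Third comes the supporting data structure. For every vertex $u$ currently in $T$ I would store pointers $p_0[u],\dots,p_{k_u}[u]$, where $p_i[u]$ is the ancestor $2^i$ edges above $u$, together with values $m_0[u],\dots,m_{k_u}[u]$, where $m_i[u]$ is the smallest vertex index on the length-$2^i$ stretch from $p_i[u]$ to $u$, and $k_u=O(\log n)$. When Dijkstra finalizes a new vertex $u$ and attaches it to its parent in $T$, these $O(\log n)$ pointers and values are computed from those of the parent in $O(\log n)$ time in the standard doubling fashion. The accounting is then routine: the queue operations cost $O((m+n)\log n)$; each edge relaxation triggers at most one $O(\log n)$ tie test, for $O(m\log n)$ in total; and each of the $n$ insertions into $T$ costs $O(\log n)$ for maintaining the jump data. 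Summing yields $O((m+n)\log n)$.

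The one point that needs genuine care rather than bookkeeping is correctness of the greedy choice: one must check that resolving each tentative-predecessor tie locally by the third-condition test produces a tree all of whose root paths are lex-shortest. Equivalently, the composite comparison ``compare $w$-weight, then edge count, then smallest differing index along the tree'' must be exactly the comparison induced by the weight function $w'$ of~\cite{MCBMinCutPlanar}, which that reference guarantees is a genuine weight function inducing unique, $w$-optimal shortest paths enjoying subpath-optimality. Granting that the local test performed during relaxation coincides with the $w'$-comparison of the two candidate root paths, Dijkstra's standard correctness argument applies verbatim, and this matching of the local test with the $w'$-order is the place where I expect the main (though not deep) difficulty to lie.
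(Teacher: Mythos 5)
Your proposal is correct and is essentially the same as the paper's own proof: the $\epsilon$-perturbation to collapse the first two conditions, Dijkstra's algorithm with an $O(\log n)$ tie-breaking test, binary search on doubling pointers $p_i[u]$ to locate the LCA $q$ of the two candidate predecessors, dyadic decomposition of the two divergent subpaths together with precomputed range-minimum-of-index values $m_i[u]$, and $O(\log n)$ maintenance of the jump tables when a vertex is finalized. The only difference is that you explicitly flag the need to verify that the local comparison coincides with the $w'$-order (and hence that Dijkstra's greedy argument carries over), which the paper treats as immediate from the cited reference; this is a fair point of caution but not a different method.
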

\begin{proof}
Follows by combining the above with a standard implementation of Dijkstra's algorithm.
\end{proof}

\section{Concluding Remarks}\label{sec:ConclRem}
We showed that finding a minimum cycle basis of an $n$-vertex planar, undirected, connected graph with non-negative
edge weights requires $\Omega(n^2)$ time,
implying that a recent algorithm by Amaldi et al.\ is optimal. We then presented an algorithm with $O(n^{3/2}\log n)$
time and $O(n^{3/2})$ space requirement that computes such a basis implicitly.

From this result, we obtained an output-sensitive algorithm requiring $O(n^{3/2}\log n + C)$ time and $O(n^{3/2} + C)$
space, where $C$ is the total length of cycles in the basis that the algorithm outputs. For unweighted graphs, we
obtained $O(n^{3/2}\log n)$ time and $O(n^{3/2})$ space bounds.

Similar results were obtained for the all-pairs min cut problem for planar graphs since for planar graphs, this problem
is known to be dual equivalent to the minimum cycle basis problem.

As corollaries, we obtained algorithms that compute the weight vector and a Gomory-Hu tree of a planar $n$-vertex graph in
$O(n^{3/2}\log n)$ time and $O(n^{3/2})$ space. The previous best bound was quadratic.

From the Gomory-Hu tree algorithm, we derived an oracle for answering queries for the weight of a min cut between any two
given vertices. Preprocessing time is $O(n^{3/2}\log n)$ and space is $O(n^{3/2})$. Quadratic time and space was previously
the best bound for constructing such an oracle. Our algorithm can output the actual cut in time proportional to its size.

\section*{Acknowledgments}
I thank Sergio Cabello for some interesting comments and for introducing me to Gomory-Hu trees.

\newpage
\section*{Appendix}

\subsection*{Proof of Lemmas~\ref{Lem:Merge} and~\ref{Lem:Split}}
Let us first prove Lemma~\ref{Lem:Merge}. We only need to consider the hard case where in beginning, all objects have weight
$1$ and at termination, exactly one object of weight $W$ remains.

Consider running the algorithm backwards: starting with one object of weight $W$, repeatedly apply an operation $\mathit{split}$
that splits an object of weight at least two into two new objects of positive integer weights such that the sum of weights of the two
equals the weight of the original object. Assume that $\mathit{split}$ runs in time proportional to the smaller weight
of the two new objects. If we can give a bound of $O(W\log W)$ for this algorithm, we also get a bound on the algorithm stated
in the theorem.

The running time for the new algorithm satisfies:
\[
  T(w)\leq\max_{1\leq w'\leq\lfloor w/2\rfloor}\{T(w') + T(w - w') + cw'\}
\]
for integer $w > 1$ and constant $c > 0$. It is easy to see that the right-hand side is maximized when $w' = \lfloor w/2\rfloor$.
This gives $T(W) = O(W\log W)$, as desired.

The above proof also holds for Lemma~\ref{Lem:Split}.

\subsection*{Proof of Lemma~\ref{Lem:DeltaSets}}
We need to show that for a cycle
$C = C(v,e)\in\mathcal H(V_{\mathcal J})$ belonging to a region $R$, sets $\delta_{\mathit{int}}(R,C)$,
$\delta_{\mathit{ext}}(R,C)$, and $\delta(R,C)$ can be computed in $O(\sqrt n)$ time with $O(n^{3/2}\log n)$
preprocessing time and $O(n^{3/2})$ space.

First, observe that since $C$ is completely contained in $R$, $\delta(R,C)$ is the subset of all boundary vertices belonging to
$C$. Hence, this subset does not depend on $R$. We will thus refer to it as $\delta(C)$ in the following.

Let $v_0,\ldots,v_{r-1}$ be the boundary vertices encountered in that order in a simple, say clockwise, walk of $\mathcal J$
and let $\mathcal J = \mathcal J_0\mathcal J_1\cdots \mathcal J_{r-1}$ be a decomposition of $\mathcal J$ into
smaller curves where $\mathcal J_i$ starts in $v_i$ and ends in $v_{(i+1)\bmod r}$, $i = 0,\ldots,r-1$. Each curve
$\mathcal J_i$ is completely contained in an elementary face of $G$ and we let $f(\mathcal J_i)$ denote this face.

In our proof, we need the following lemma and its corollary.
\begin{Lem}\label{Lem:DeltaIntersection}
Let $P$ be a shortest path in $G$ from a vertex $u$ to a vertex $v$. Then a vertex $w$ belongs to $P$ if and only if
$d_G(u,w) + d_G(w,v) = d_G(u,v)$.
\end{Lem}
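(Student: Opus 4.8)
The plan is to establish the two implications separately. The ``only if'' direction is the standard observation that every subpath of a shortest path is itself a shortest path, and it uses only non-negativity of the weights. The ``if'' direction is where I would invoke the uniqueness of shortest paths assumed throughout this part of the paper.

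For the ``only if'' direction, suppose the vertex $w$ lies on $P$, and split $P$ at $w$ into a $u$-$w$ subpath $P_1$ and a $w$-$v$ subpath $P_2$. If $P_1$ had weight larger than $d_G(u,w)$, then replacing $P_1$ inside $P$ by a lighter $u$-$w$ path would yield a $u$-$v$ walk lighter than $P$, contradicting that $P$ is a shortest path; hence $P_1$ has weight exactly $d_G(u,w)$, and symmetrically $P_2$ has weight $d_G(w,v)$. Summing, $d_G(u,v)$, which is the weight of $P$, equals $d_G(u,w)+d_G(w,v)$.

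For the ``if'' direction, assume $d_G(u,w)+d_G(w,v)=d_G(u,v)$. Let $P_1$ be the unique shortest $u$-$w$ path and $P_2$ the unique shortest $w$-$v$ path, and let $Q=P_1P_2$ be their concatenation; then $Q$ is a $u$-$v$ walk of weight $d_G(u,w)+d_G(w,v)=d_G(u,v)$. I would first observe that $G$ has no cycle of weight zero: such a cycle has at least three edges (as $G$ is simple) and, all weights being non-negative, every edge on it has weight zero, so any single edge $(a,b)$ of the cycle together with the complementary path between $a$ and $b$ around the cycle would constitute two distinct shortest $a$-$b$ paths, contradicting uniqueness. Consequently, if $Q$ repeated a vertex it would contain a closed subwalk of strictly positive weight, and deleting that subwalk would produce a $u$-$v$ walk of weight strictly less than $d_G(u,v)$ --- impossible. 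Hence $Q$ is a simple $u$-$v$ path of weight $d_G(u,v)$, i.e., a shortest $u$-$v$ path; by uniqueness $Q=P$, and since $w$ lies on $Q$ it lies on $P$.

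The one point requiring care is the simplicity of $Q$ in the ``if'' direction --- equivalently, that concatenating the two shortest paths at $w$ cannot create a repeated vertex --- and this is exactly where the absence of zero-weight cycles (itself a consequence of \emph{uniqueness} in a simple graph) enters. Once that is settled, the uniqueness assumption pins $Q$ down to be $P$ itself and the lemma follows.
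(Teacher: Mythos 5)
Your ``only if'' direction is fine, and identifying the simplicity of the concatenated walk $Q=P_1P_2$ as the crux of the ``if'' direction is exactly the right instinct; the paper's own proof hand-waves past this with the single phrase ``since shortest paths in $G$ are unique.'' However, the step where you deduce that a repeated vertex in $Q$ would force a \emph{closed subwalk of strictly positive weight} does not follow from the absence of zero-weight simple cycles. A closed walk can have total weight zero while traversing edges back and forth and hence containing no simple cycle at all. Concretely, take the tree on vertices $u,a,b,v,w$ with edges $(u,a),(a,b),(b,v)$ of weight $1$ and $(a,w)$ of weight $0$. Shortest paths here are trivially unique (it is a tree), and there are no cycles whatsoever, yet $d_G(u,w)+d_G(w,v)=1+2=3=d_G(u,v)$ while $w$ does not lie on the shortest $u$--$v$ path $u,a,b,v$. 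In this example $Q=P_1P_2=u,a,w,a,b,v$ repeats $a$, and the closed subwalk $a,w,a$ has weight $0$ but is not a cycle. So uniqueness of shortest paths plus non-negativity of weights is simply not enough; the lemma in that generality is false.

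What actually saves the lemma is the $\epsilon$-perturbation described in Section~\ref{sec:LexShort}: every edge weight is (symbolically) increased to be strictly positive, so any two distinct vertices are at strictly positive distance, and in particular a closed subwalk containing at least one edge has strictly positive weight. Under that hypothesis your argument goes through verbatim, since a repeated vertex in $Q$ yields a closed subwalk with at least one edge, and deleting it produces a $u$--$v$ walk strictly lighter than $d_G(u,v)$, a contradiction. So the fix is to replace your appeal to ``no zero-weight simple cycle'' with the stronger (and, in this paper, actually available) fact that there are no zero-weight paths of positive length, equivalently no zero-weight edges after the perturbation. The paper does not make this dependence explicit in the lemma statement, which is why your more careful reconstruction exposed the gap.
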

\begin{proof}
If $w$ belongs to $P$ then clearly $d_G(u,w) + d_G(w,v) = d_G(u,v)$. And the converse is also
true since shortest paths in $G$ are unique.
\end{proof}
\begin{Cor}\label{Cor:DeltaIntersection}
Let $C = C(v,e)$ be defined as above. Let $w\in V$ and assume that single-source shortest path distances in $G$ with
sources $v$ and $w$ have been precomputed. Then determining whether $w$ belongs to $C$ can be done in constant time.
\end{Cor}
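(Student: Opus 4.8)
The plan is to reduce the test ``$w\in C$'' to checking two distance identities, each of which amounts to a constant number of array lookups in the two precomputed distance tables.

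First I would unfold the definition of the Horton cycle. Writing $e=(a,b)$, the cycle $C=C(v,e)$ is obtained by adding $e$ to the shortest path tree $T(v)$, so $C$ is exactly the union of the tree path in $T(v)$ from $v$ to $a$, the edge $e$, and the tree path in $T(v)$ from $v$ to $b$. Since $T(v)$ is a shortest path tree and shortest paths in $G$ are unique, these two tree paths are precisely the unique shortest paths $P_a$ and $P_b$ in $G$ from $v$ to $a$ and from $v$ to $b$. Hence $w\in C$ if and only if $w\in P_a$ or $w\in P_b$ (the endpoints $a,b$ of $e$ lie on $P_a$ and $P_b$, so they are covered by this disjunction).

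Next I would apply Lemma~\ref{Lem:DeltaIntersection} twice, each time with source vertex $v$: $w\in P_a$ if and only if $d_G(v,w)+d_G(w,a)=d_G(v,a)$, and $w\in P_b$ if and only if $d_G(v,w)+d_G(w,b)=d_G(v,b)$. So determining whether $w\in C$ is exactly testing this pair of equalities. Finally I would observe that the quantities $d_G(v,a)$ and $d_G(v,b)$ are read off in $O(1)$ time from the precomputed single-source distances with source $v$, that $d_G(v,w)$, $d_G(w,a)$, $d_G(w,b)$ are read off in $O(1)$ time from the precomputed single-source distances with source $w$, and that $a$ and $b$ are simply the (given) endpoints of $e$; hence the whole test takes constant time.

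There is no substantive obstacle here: the only points requiring care are that the two tree paths comprising $C$ genuinely are shortest paths in $G$ (which holds because $T(v)$ is an SSSP tree) and that uniqueness of shortest paths — already assumed throughout this section, or enforced via lex-shortest paths — is what makes the triangle-equality in Lemma~\ref{Lem:DeltaIntersection} an exact characterization of membership on a shortest path rather than merely a necessary condition.
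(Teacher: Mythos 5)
Your proof is correct and takes essentially the same approach as the paper's: decompose $C$ into the two shortest paths from $v$ to the endpoints of $e$, test membership on each via Lemma~\ref{Lem:DeltaIntersection}, and observe that the five required distances are all single lookups in the two precomputed distance arrays. The only cosmetic difference is in how the decomposition $C = P_a\cup\{e\}\cup P_b$ is justified: you appeal directly to the definition of the Horton cycle as the fundamental cycle of $e$ in $T(v)$, whereas the paper invokes the isometricity of $C$ (from Lemma~\ref{Lem:GMCBIsoNested}) to conclude that both shortest paths from $v$ lie on $C$ and jointly cover its vertex set; both justifications implicitly presuppose that $v$ actually lies on $C$ (equivalently, that the two tree paths meet only at $v$), which is the situation in which the corollary is applied.
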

\begin{proof}
Let $u_1$ and $u_2$ be the end vertices of $e$ and let $P_1$ resp.\ $P_2$ be the shortest paths in $G$ from $v$ to $u_1$
resp.\ $u_2$. Since $C$ is isometric, both $P_1$ and $P_2$ belong to $C$ and the union of their vertices is exactly
the vertices of $C$. Hence, determining whether $w$ belongs to $C$ is equivalent to determining whether $w$ belongs to
$P_1$ or to $P_2$. The result now follows from Lemma~\ref{Lem:DeltaIntersection}.
\end{proof}

We will assume that single-source shortest path distances in $G$ with each boundary vertex as source have been precomputed. As
observed earlier, this can be done in $O(n^{3/2}\log n)$ time and $O(n^{3/2})$ space. Corollary~\ref{Cor:DeltaIntersection} then
allows us to find the set $\delta(C)$ of boundary vertices belonging to $C$ in $O(r) = O(\sqrt n)$
time. We may assume that we have the boundary vertices on $C$ cyclically ordered according to how they occur on $\mathcal J$
in a clockwise walk of that curve.

In the following, let $v_i = v$ (so $C = C(v_i,e)$). Consider two consecutive vertices $v_{i_1}$ and $v_{i_2}$ of $\delta(C)$ in
this cyclic ordering. We assume that $i_2\neq i$ since the case $i_2 = i$ can be handled in a similar way.
There are two possible cases:
\begin{enumerate}
\item the boundary vertices (excluding $v_{i_1}$ and $v_{i_2}$) encountered when walking from $v_{i_1}$ to $v_{i_2}$ along
$\mathcal J$ all belong to $\into{C}$, or
\item they all belong to $\exto{C}$.
\end{enumerate}

Let $v_{i_3}$ be the predecessor boundary vertex of $v_{i_2}$ on $\mathcal J$ (i.e., $i_3 = (i_2 - 1)\bmod r$), see
Figure~\ref{fig:DeltaSets}.
\begin{figure}
\centerline{\scalebox{0.6}{\input{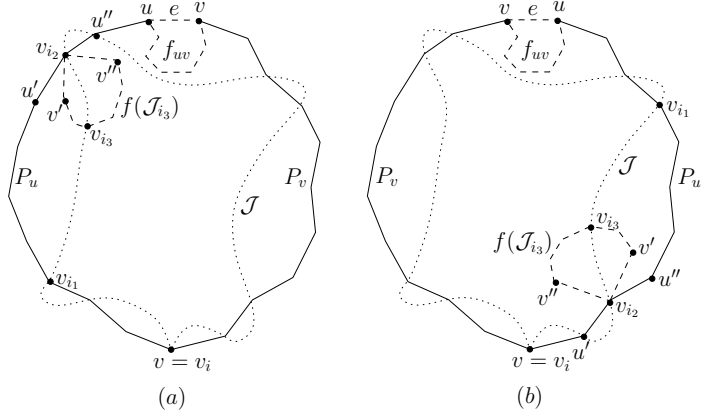}}}
\caption{(a) The first condition and (b): the second condition in Lemma~\ref{Lem:FaceInInterior}.}
\label{fig:DeltaSets}
\end{figure}
Then elementary face $f(\mathcal J_{i_3})$ belongs to $\intc{C}$ if and only if the first case above holds. This follows from the
fact that $\mathcal J$ does not cross any edges of $G$.

Lemma~\ref{Lem:FaceInInterior} below shows how we can check whether $f(\mathcal J_{i_3})$ belongs to $\intc{C}$. First, let $u$ and
$v$ be the end vertices of $e$ and let
$P_u$ and $P_v$ be the shortest paths from $v_i$ to $u$ and $v$, respectively. Suppose w.l.o.g.\ that $v_{i_2}$ belongs to $P_u$,
see Figure~\ref{fig:DeltaSets}. Let $u'$ be the predecessor of $v_{i_2}$ on $P_u$. This is well-defined since $i_2\neq i$. If
$v_{i_2}\neq u$, let $u''$ be the successor of $v_{i_2}$ on $P_u$. Otherwise, let $u'' = v$ (so $u''$ is the vertex $\neq u'$
adjacent to $v_{i_2}$ on $C$). Let $v'$ resp.\ $v''$ be the predecessor resp.\ successor of $v_{i_2}$ in a clockwise walk of
$f(\mathcal J_{i_3})$.

For three points $p,q,r$ in the plane, let $W(p,q,r)$ be the wedge-shaped region with legs emanating from $p$ and with right
resp.\ left leg containing $q$ resp.\ $r$.
\begin{Lem}\label{Lem:FaceInInterior}
With the above definitions, $f(\mathcal J_{I_3})$ belongs to $\intc{C}$ if and only if one of the following
conditions hold:
\begin{enumerate}
\item $P_u$ is part of a clockwise walk of $C$ (when directed from $v_i$ to $u$) and
      $W(v_{i_2},u',u'')$ contains $W(v_{i_2},v',v'')$ (Figure~\ref{fig:DeltaSets}(a)),
\item $P_u$ is part of a counter-clockwise walk of $C$ (when directed from $v_i$ to $u$) and
      $W(v_{i_2},u'',u')$ contains $W(v_{i_2},v',v'')$ (Figure~\ref{fig:DeltaSets}(b)).
\end{enumerate}
\end{Lem}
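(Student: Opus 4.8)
The plan is to reduce the global question ``is $f(\mathcal J_{i_3})\subseteq\intc C$?'' to a purely local question at the vertex $v_{i_2}$, and then to read off the answer from the rotation system of the embedding at $v_{i_2}$. First I would note that the open elementary face $f(\mathcal J_{i_3})$ contains no vertex and no edge of $G$, hence is disjoint from the Jordan curve $C$; by the Jordan curve theorem it is therefore contained entirely in $\into C$ or entirely in $\exto C$, and exactly one of these holds, i.e.\ exactly one of $f(\mathcal J_{i_3})\subseteq\intc C$ and $f(\mathcal J_{i_3})\subseteq\extc C$ holds. Since $\mathcal J_{i_3}$ ends at $v_{i_2}\in C$ and lies in $f(\mathcal J_{i_3})$, every sufficiently small punctured disk $D^\ast$ centred at $v_{i_2}$ meets $f(\mathcal J_{i_3})$, so which of $\into C,\exto C$ contains $f(\mathcal J_{i_3})$ is determined by the germ of the face at $v_{i_2}$: namely, by which of the two sectors into which the edges of $C$ at $v_{i_2}$ cut $D^\ast$ meets $f(\mathcal J_{i_3})$.

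Next I would describe these two sectors. The cycle $C$ uses exactly two edges incident to $v_{i_2}$: the edge to $u'$ (the predecessor of $v_{i_2}$ on $P_u$) and the edge to $u''$, where $u''$ is the successor of $v_{i_2}$ on $P_u$ if $v_{i_2}\neq u$ and $u''=v$ if $v_{i_2}=u$. These two edges partition $D^\ast$ into two open sectors, and by the first paragraph one of them lies in $\into C$ and the other in $\exto C$. On the other hand $f(\mathcal J_{i_3})$ occupies, near $v_{i_2}$, the single sector between its two boundary edges $(v_{i_2},v')$ and $(v_{i_2},v'')$ at the corner that $\mathcal J_{i_3}$ approaches, that is, the wedge $W(v_{i_2},v',v'')$. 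Hence $f(\mathcal J_{i_3})\subseteq\intc C$ if and only if $W(v_{i_2},v',v'')$ is contained in whichever of the two $C$-sectors at $v_{i_2}$ lies in $\into C$.

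It then remains to identify the interior sector, and this is precisely what the orientation hypothesis records. Traversing the simple closed curve $C$ in the sense that makes $\into C$ lie on the right, the interior sector at $v_{i_2}$ is the wedge swept from the edge to $u'$ to the edge to $u''$ in that rotational sense. Thus, if $P_u$ directed from $v_i$ to $u$ is part of such a (``clockwise'') traversal of $C$, the interior sector is the wedge drawn in Figure~\ref{fig:DeltaSets}(a), namely $W(v_{i_2},u',u'')$, and combining with the previous paragraph gives condition~1; if instead $P_u$ directed from $v_i$ to $u$ is part of the opposite (``counter-clockwise'') traversal of $C$, the two legs swap and the interior sector is $W(v_{i_2},u'',u')$, giving condition~2. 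Since, for a fixed sense of traversal of $C$, the directed path $P_u$ is part of either the clockwise or the counter-clockwise walk, exactly one of conditions~1 and~2 applies, and the lemma follows. The symmetric case $v_{i_2}\in P_v$, and the excluded case $i_2=i$ noted just before the lemma, are handled by the identical argument with $P_v$ in place of $P_u$.

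The main obstacle is the orientation bookkeeping in the last step: one must verify that ``right leg / left leg'' in the definition of $W(p,q,r)$ matches ``interior on the right'' under the chosen traversal of $C$ in a way that is independent of where the traversal is started (so that ``$P_u$ is part of a clockwise walk of $C$'' is well defined), and consistent with the convention of Figure~\ref{fig:DeltaSets}. A secondary subtlety is that $v_{i_2}$ may appear more than once on the boundary walk of $f(\mathcal J_{i_3})$ if it is not locally $2$-connected, so one must fix the specific corner at which $\mathcal J_{i_3}$ arrives before choosing $v'$ and $v''$, and interpret ``contains'' for the wedges as containment of the corresponding closed sectors. Everything else is routine planar topology.
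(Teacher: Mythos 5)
Your proof is correct and follows essentially the same approach as the paper: reduce the question to a comparison of angular sectors at $v_{i_2}$, and use the sense of the walk of $C$ through $P_u$ to decide which of the two $C$-sectors at $v_{i_2}$ lies in $\intc{C}$. The paper's own proof is terser, stating only that a clockwise walk has $\intc{C}$ on the right and then appealing to the straight-line embedding for the wedge comparison, whereas you make the Jordan-curve/local-sector reduction explicit and flag the orientation and local-connectivity subtleties, but there is no substantive difference in the argument.
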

\begin{proof}
Assume first that $P_u$ is part of a clockwise walk of $C$, see Figure~\ref{fig:DeltaSets}(a).
Then $\intc{C}$ is to the right of the directed path $u'\rightarrow v_{i_2}\rightarrow u''$. Since $G$ is straight-line embedded,
$f(\mathcal J_{I_3})$ belongs to $\intc{C}$ if and only if $W(v_{i_2},u',u'')$ contains $W(v_{i_2},v',v'')$.

Now, assume that $P_u$ is part of a counter-clockwise walk of $C$, see Figure~\ref{fig:DeltaSets}(b).
Then $\intc{C}$ is to the right of the directed path $u''\rightarrow v_{i_2}\rightarrow u'$. Thus, $f(\mathcal J_{I_3})$ belongs
to $\intc{C}$ if and only if $W(v_{i_2},u'',u')$ contains $W(v_{i_2},v',v'')$.
\end{proof}
Lemma~\ref{Lem:FaceInInterior} and the above discussion show that to efficiently determine whether the boundary vertices between
$v_{i_1}$ and $v_{i_2}$ belong to $\intc{C}$ or to $\extc{C}$, we need to quickly find $u'$, $u''$, $v'$, and $v''$ and
determine whether $P_u$ is part of a clockwise or counter-clockwise walk of $C$.

By keeping a clockwise ordering of vertices of all elementary faces, we can find $v'$ and $v''$ in constant time.
For each shortest path tree in $G$ rooted at a boundary vertex, we assume that each non-root vertex is associated with its
parent in the tree. This allows us to find also $u'$ in constant time.

As for $u''$, suppose we have precomputed, for each boundary vertex $v_j$ and each $w\in V\setminus\{v_j\}$, the successor of
$v_j$ on the path from $v_j$ to $w$ in shortest path tree $T(v_j)$. Depth-first searches in each shortest path tree allow us to
make these precomputations in $O(n^{3/2})$ time and space.

Now, since shortest paths are unique, the subpath of $P_u$ from $v_{i_2}$ to $u$ is a path in shortest path tree
$T(v_{i_2})$ and $u''$ is the successor of $v_{i_2}$ on this path. With the above precomputations, we can thus find $u''$
in constant time.

Finally, to determine whether $P_u$ is part of a clockwise walk of $C$, we do as follows. We first find the elementary faces adjacent
to $e$ in $G$. They can be obtained from dual tree $\tilde{T}(v_i)$ in constant time. We can also determine in constant time
which of the two elementary faces belongs to $\intc{C}$ since that elementary face is a child of the other in $\tilde{T}(v_i)$. Let
$f_{uv}$ be the elementary face in the interior of $C$. We check if the edge directed from $u$ to $v$ is part of a clockwise or
counter-clockwise walk of $f_{uv}$. Again, this takes constant time. If it is part of a clockwise walk of $f_{uv}$ then $P_u$
is part of a clockwise walk of $C$ (Figure~\ref{fig:DeltaSets}(a)) and otherwise, $P_u$ is part of a counter-clockwise walk of
$C$ (Figure~\ref{fig:DeltaSets}(b)).

This concludes the proof of Lemma~\ref{Lem:DeltaSets}.

\subsection*{Proof of Lemma~\ref{Lem:CrossingCycles}}
Assume first that $e$ is not an edge of $G_1$. Let $P_1$ and $P_2$ be the two shortest paths in $G$ from $v$ to the
end vertices of $e$, respectively. Since $e$ is not in $G_1$, it must belong to $G_2$. Hence, the intersection
between $C$ and $G_1$ is the union of paths $Q$, where $Q$ is a subpath of either $P_1$ or $P_2$ with
both its end vertices in $V_{\mathcal J}$. Each such path $Q$ is a shortest path in $G_1$. It then follows from
Lemma~\ref{Lem:NoCross} that $C$ does not cross any cycle of $\mathcal B_1'$.

Now, assume that $e$ belongs to $G_1$ and let $f_1$, $f_2$, and $f_{\mathcal J}$ be defined as in the lemma.
Let $C'\in\mathcal B_1'$ be given. We consider two cases: $\mathcal J\subset\into{C'}$ and $\mathcal J\subset\exto{C'}$.

Assume first that $\mathcal J\subset\into{C'}$. Then $R(C',\mathcal B_1)$ is an ancestor of
$R(f_{\mathcal J},\mathcal B_1)$. Since vertex $v$ of $C$ belongs to $V_{\mathcal J}$, part of $C$ is contained in $\into{C'}$.

It follows that if $C$ does not cross $C'$ then $e$ is contained in $\intc{C'}$.
The converse is also true. For if $e$ is contained in $\intc{C'}$ then by Lemma~\ref{Lem:NoCross},
both $P_1$ and $P_2$ are contained in $\intc{C'}$, implying that $C$ does not cross $C'$.

Thus, $C$ crosses $C'$ if and only if $e$ is not in $\intc{C'}$, i.e., if and only if $f_1$ and $f_2$ are both contained in
$\extc{C'}$. The latter is equivalent to the condition that $R(C',\mathcal B_1)$ is an ancestor of neither $R(f_1,\mathcal B_1)$ nor
$R(f_2,\mathcal B_1)$ in $\mathcal T(\mathcal B_1)$. Hence, $C$ crosses $C'$ if and only if the second condition of the lemma
is satisfied.

Now, assume that $\mathcal J\subset\exto{C'}$. Then $R(C',\mathcal B_1)$ is not an ancestor of
$R(f_{\mathcal J},\mathcal B_1)$. Again, Lemma~\ref{Lem:NoCross} shows that $C$ crosses $C'$ if and only if $e$ is not in
$\extc{C'}$, i.e., if and only if $f_1$ and $f_2$ are both contained in $\intc{C'}$. This holds if and only if
$R(C',\mathcal B_1)$ is an ancestor of both $R(f_1,\mathcal B_1)$ and $R(f_2,\mathcal B_1)$ in $\mathcal T(\mathcal B_1)$. It
follows that $C$ crosses $C'$ if and only if the first condition of the lemma is satisfied.

\subsection*{Proof of Lemma~\ref{Lem:RecCyclesSimpleAdd}}
Assume first that $\mathcal J\subset\exto{C}$ and let $C'$ be a descendant of $C$ in $\mathcal T(\mathcal B_1)$. We need to show
that $C'$ is added to $\mathcal B$. Since
$C'\in\mathcal B_1$, there is a pair of elementary faces $f_1$ and $f_2$ in $G_1$ which are separated by $C'$ and not
by any other cycle in $\mathcal B_1$. Let $f_1$ be contained in $\intc{C'}$ and let $f_2$ be contained in $\extc{C'}$.
Note that $f_2$ is contained in $\intc{C}$ since otherwise, $C$ would separate $f_1$ and $f_2$.

Since $\mathcal J\subset\exto{C}$ and since no cycle of $\mathcal B$ crosses $C$, all cycles of
$\mathcal B\setminus\mathcal B_1'$ belong to $\extc{C}$. Hence, no cycle of
$\mathcal H(V_{\mathcal J})\cup\mathcal B_1'\cup\mathcal B_2'\setminus\{C'\}$ separates $f_1$ and $f_2$. Since the set of
cycles in the GMCB of $G$ is a subset of $\mathcal H(V_{\mathcal J})\cup\mathcal B_1'\cup\mathcal B_2'$ by
Lemma~\ref{Lem:DivideConquer} and since $C'\in\mathcal B_1'$, it follows that $C'$ is added to $\mathcal B$.

Now assume that $\mathcal J\subset\into{C}$. Since no cycle of $\mathcal B$ crosses $C$, all cycles of
$\mathcal B\setminus\mathcal B_1'$ belong to $\intc{C}$. A similar argument as the above then shows that all cycles of
$\mathcal B_1$ belonging to $\extc{C}$ must be part of the GMCB of $G$. These cycles are exactly the those that are not
descendants of $C$ in $\mathcal T(\mathcal B_1)$.
\end{document}